\documentclass[11pt]{article}

\usepackage{preamble}
\newcommand{\ucsc}{University of California, Santa Cruz}

\title{Classical Algorithms for Bipartite Quantum Max-Cut on Dense Expanders}
\author{%
    \setcounter{footnote}{1}%
    Zackary Jorquera\thanks{\ucsc}
    \and
    Alexandra Kolla\footnotemark[2]
    \and
    Stuart Wayland\footnotemark[2]
}
\date{\today}

\begin{document}

\pagenumbering{roman}

\maketitle

\begin{abstract}
Quantum Max-Cut is a well-studied local Hamiltonian problem for which the classical complexity of the bipartite 
case remains open. We give a polynomial-time classical randomized algorithm for the spin-$\tfrac12$ 
antiferromagnetic Heisenberg model, equivalently Quantum Max-Cut, on dense balanced bipartite expanders.
This class includes $G(n,n,p)$ with high probability for every fixed $p>0$. The algorithm
estimates the ground energy and any ground state edge correlation to additive error $\varepsilon$ in time polynomial
in $n$ and $1/\varepsilon$. We construct a Markov chain on perfect matchings of the complete bipartite graph, each
matching representing a product of singlets, that converges to the ground state, and use the expansion of the graph to show that polynomially many steps
and samples of this chain suffice.
\end{abstract}

\begingroup
\hypersetup{linktocpage}
\tableofcontents
\endgroup

\clearpage
\pagenumbering{arabic}

\section{Introduction}\label{section_intro}

The Local Hamiltonian problem is the quantum analogue of classical constraint satisfaction, with local interaction 
terms replacing local constraints. While the problem is \textsf{QMA}-complete in general, a central question is to 
identify natural classes of interactions for which the ground energy can be computed efficiently. 
A particularly well-studied example is the spin-$\tfrac12$ antiferromagnetic Heisenberg model \cite{heisenberg_zur_1928}, also known as 
Quantum Max-Cut~\cite{GharibianParekh2019}. Each edge favors a singlet on its two endpoints, providing a quantum analogue 
of the preference for differing endpoints in classical Max-Cut. On general weighted graphs, computing
the ground energy is \textsf{QMA}-complete~\cite{PiddockMontanaro2017}, although the known hardness constructions use polynomially large
interaction weights and no corresponding hardness result is known for the unweighted problem~\cite{Gharibian2023}.

The contrast with Max-Cut is especially striking on bipartite graphs. Classical Max-Cut is trivial in this setting. Recent work using Lee-Yang zero-freeness places
the ground-energy problem for Quantum Max-Cut on every bipartite graph in
\textsf{BQP}~\cite{BravyiGossetLiuWong2026,RayuduTakahashi2026}. Whether the problem also lies in
\textsf{BPP}, that is, whether it admits a polynomial-time classical randomized algorithm, remains
open~\cite{Gharibian2023,TakahashiSlezakCrosson2024}.

There is good reason to expect bipartiteness to help in designing classical algorithms. The antiferromagnetic
Heisenberg Hamiltonian on a bipartite graph is known to be in \textsf{StoqMA}~\cite{Marshall1955,BDOT2008,king_improved_2023}, which enables sign-free quantum Monte Carlo methods widely used to estimate ground
energies and correlations~\cite{Sandvik1999,Sandvik2005}. Sign-freeness, however, does not by itself imply polynomial running time. A 
rigorous algorithmic guarantee must also control how quickly the ground state is reached and how many samples are required to estimate its 
observables. Establishing such guarantees for bipartite Heisenberg models has remained difficult, with polynomial-time analyses available 
only for restricted graph families~\cite{TakahashiSlezakCrosson2024}.

In this work, we give a classical randomized algorithm for dense bipartite expanders. For constant edge density $p$,
the algorithm estimates the ground energy and any ground state edge correlation to additive error $\varepsilon$ in time polynomial 
in $n$ and $1/\varepsilon$ (\cref{thm_sampler}). It samples a reference walk on perfect matchings, each representing a product of singlet pairs, and reweights its paths. We prove that the resulting estimator has polynomial sample complexity over the
required projection depth. The spectral input is a gap of order $n$ within the singlet sector and constant overlap with the Lieb-Mattis starting state
(\cref{thm_spectral}), which we establish by comparing low-energy singlets with a solvable quadratic model of paired excitations.

\subsection{Setting}\label{section_intro_setting}

Throughout, let $G=(L\sqcup R,E)$ be a connected balanced bipartite graph with
$|L|=|R|=n$ and edge weights $0<A_{ij}\le 1$, where $A$ is its weighted biadjacency
matrix ($A_{ij}=0$ for $ij\notin E$). We associate one qubit with each
vertex and write $\bS_v=(S_v^x,S_v^y,S_v^z)$ for the spin-$\tfrac12$ operators acting on
vertex $v$ (see \cref{eq_pauli_spin_ops}). The \emph{Heisenberg Hamiltonian} is
\begin{equation}\label{eq_heisenberg_intro}
  H_G\coloneq\sum_{ij\in E}A_{ij}\,\bS_i\!\cdot\bS_j,
  \qquad
  \bS_i\!\cdot\bS_j\coloneq S_i^xS_j^x+S_i^yS_j^y+S_i^zS_j^z
\end{equation}
Since $\frac14I-\bS_i\!\cdot\bS_j$ is the singlet projector on the edge $ij$, minimizing
$H_G$ is the weighted Quantum Max-Cut problem up to an additive constant and a change of
sign.

Our goal is to estimate the ground energy and local correlations of $H_G$. By the
Lieb-Mattis theorem~\cite{LiebMattis1962}, the ground state is unique up to phase and has total spin zero. Total spin zero states form the \emph{singlet sector} $\cS\coloneq\ker\bS_{\mathrm{tot}}^{2}$, where
$\bS_{\mathrm{tot}}\coloneq\sum_{v\in L\sqcup R}\bS_v$. We write $\Psi_0\in\cS$ for the unit ground
state, $\lambda_0<\lambda_1\le\cdots$ for the eigenvalues of $H_G|_{\cS}$, and
$\gamma\coloneq\lambda_1-\lambda_0$ for the \emph{singlet-sector gap}. This gap is defined within $\cS$; it need not be the gap above the ground state in the full Hilbert space. For an edge $ij\in E$ with
$i\in L$ and $j\in R$, the ground state correlation is
\[
  \ev{\bS_i\!\cdot\bS_j}\coloneq\ip{\Psi_0}{\bS_i\!\cdot\bS_j\,\Psi_0}\in[-\tfrac34,\tfrac14]
\]
and the ground energy is $\lambda_0=\sum_{ij\in E}A_{ij}\ev{\bS_i\!\cdot\bS_j}$.
The \emph{Lieb-Mattis state} $\LM$ (\cref{def_LM}) is the known ground state of the
Heisenberg model on the complete bipartite graph, and will serve as the starting point of our algorithm.

The graph family is described by
the edge density $\hatp\coloneq\frac1{n^2}\sum_{ij\in E}A_{ij}$, the largest degree
deviation $\Delta\coloneq\max\{\max_i|\rho_i|,\ \max_j|\rho'_j|\}$, where $\rho_i\coloneq\sum_{j}A_{ij}-\hatp n$ and $\rho'_j\coloneq\sum_{i}A_{ij}-\hatp n$ are the weighted degree deviations, and the largest singular value $\sigma_1\coloneq\norm{\xi_c}_{\opn}$
of the doubly centered matrix
$\xi_c\coloneq\xi-\frac1n\rho\one^{\top}-\frac1n\one\rho'^{\top}$, where
$\xi\coloneq A-\hatp\,\one\one^{\top}$ and $A$ is the weighted biadjacency matrix above.

\begin{definition}[Dense bipartite expanders]\label{def_class}
For $p\in(0,1)$ and $C>0$, let $\mathcal G(p,C)$ be the class of connected balanced
bipartite graphs $G$ on $n+n$ vertices satisfying
\[
\hatp\ge p,
\qquad
\sigma_1\le C\sqrt{n\log n},
\qquad
\Delta\le C\sqrt{n\log n}
\]
\end{definition}

The definition requires constant edge density together with square-root-scale spectral
and degree fluctuations, up to logarithmic factors. In particular,
$G(n,n,p)$ belongs to $\mathcal G(p',C)$ with high probability for every fixed
$p>p'>0$ and a suitable constant $C$.

\subsection{Main results}\label{section_intro_results}

Our main result is a polynomial-time classical algorithm for the ground state edge
correlations of dense bipartite expanders.

\begin{maintheorem}[Polynomial-Time Ground State Estimation]\label{thm_sampler}
Let $G\in\mathcal G(p,C)$ be a dense bipartite expander. There exists a classical randomized algorithm that, given an edge
$ij\in E$ and $\varepsilon,\delta\in(0,1)$, approximates $\ev{\bS_i\!\cdot\bS_j}$ within
$\varepsilon$ with probability at least $1-\delta$ in time
\[
\mathrm{poly}(n)\cdot(n/\varepsilon)^{O(1/p)}\cdot\bigl(1+\log(1/\delta)\bigr)
\]
\end{maintheorem}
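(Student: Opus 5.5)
We estimate $\ev{\bS_i\!\cdot\bS_j}$ by projector Monte Carlo on the singlet sector, started from the Lieb--Mattis state $\LM$. The spectral input is \cref{thm_spectral}, which we take as given. It supplies two facts for $G\in\mathcal G(p,C)$. First, the singlet-sector gap is $\gamma\ge c_p n$. Second, the starting state has constant overlap, $|\ip{\Psi_0}{\LM}|^2\ge c_p'>0$.

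\textbf{Step 1: the projector.} Fix a shift $s$ so that $M\coloneq s I-H_G$ is entrywise nonnegative on $\cS$ in the basis of singlet products indexed by perfect matchings of $K_{n,n}$. Here each matching $\mu$ gives the state $\ket{\mu}=\bigotimes_{(a,b)\in\mu}\tfrac{1}{\sqrt2}(\ket{01}-\ket{10})_{ab}$, with a fixed $L\to R$ orientation. This basis is overcomplete but spans $\cS$. Since the graph is bipartite, $\bS_a\!\cdot\bS_b$ with $a\in L$, $b\in R$ acts on matching states by the standard loop rule. If $(a,b)\in\mu$, it gives $-\tfrac34\ket{\mu}$. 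Otherwise it gives $-\tfrac14\ket{\mu}+\tfrac12\ket{\mu'}$ up to the sign convention, where $\mu'$ swaps partners. With Marshall sign conventions all off-diagonal coefficients of $-H_G$ are nonnegative. Taking $s=\tfrac34\sum A_{ab}+1=O(n^2)$ gives a nonnegative transition kernel on matchings. We normalize it into a Markov chain $P$ (the reference walk) together with a multiplicative weight $w(\mu)$, which is the row sum.

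\textbf{Step 2: depth.} Let $\LM=\sum_\mu c_\mu\ket\mu$, a uniform superposition over matchings (\cref{def_LM}). We approximate $\Psi_0$ by $M^{T}\LM$. The relative weight of the excited components decays as $\bigl(1-\gamma/(s-\lambda_0)\bigr)^{T}\le e^{-T\gamma/\|M\|}$. Since $\|M\|=O(n^2)$ and $\gamma=\Omega(n)$, a depth of $T=O\bigl((n/c_p)\log(n/\varepsilon)\bigr)$ gives $\varepsilon$ accuracy in the correlation. This uses the constant overlap to control the normalization. The correlation is then the ratio
\[
\frac{\langle\LM|M^{T}\,\bS_i\!\cdot\bS_j\,M^{T}|\LM\rangle}{\langle\LM|M^{2T}|\LM\rangle}.
\]
Both numerator and denominator are expectations, over paths of the chain started from the uniform matching, of products of weights times matching overlaps. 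The overlaps $\langle\mu|\nu\rangle=\pm2^{\#\mathrm{loops}-n}$ are computable by counting loops.

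\textbf{Step 3: variance (main obstacle).} We take $N$ independent paths and use the median-of-means of the ratio, which gives the $\log(1/\delta)$ factor. The real difficulty is bounding the relative second moment of the path weight $W=\prod_{t}w(\mu_t)$, which can fluctuate exponentially in $T$. The plan is to write $w(\mu)=s-\tfrac34 a(\mu)-\tfrac14(\hatp n^2-a(\mu))$, where $a(\mu)$ is the weight of matched edges lying in $E$. We then centre this at $\hatp$ and show that $\mathbb E[W^2]/\mathbb E[W]^2$ equals a ratio of partition functions for a doubled chain. That ratio is bounded by the spectral gap of the doubled operator relative to $\|M\|$. The fluctuation terms are controlled by $\sigma_1$ and $\Delta$, both $O(\sqrt{n\log n})$. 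Per step the log-weight fluctuation is $O(\sqrt{n\log n}/n^2)$ after centring, and over $T=\tilde O(n)$ steps this is small. The residual loss comes from the overlap structure of $\LM$ with low-lying excitations, which has constant overlap $c_p'$ only. Our expectation is that iterating the constant-overlap bound across the $O(\log(n/\varepsilon))$ relaxation stages yields $(n/\varepsilon)^{O(1/p)}$. If this fails, the first alternative is importance reweighting via a trial state from the quadratic paired-excitation model of \cref{thm_spectral}. Chebyshev's inequality with $N=\mathrm{poly}(n)(n/\varepsilon)^{O(1/p)}\varepsilon^{-2}$ samples, each costing $\mathrm{poly}(n)$ time, then finishes the bound.
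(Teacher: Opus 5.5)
Your Steps 1--2 and the median trick have the same architecture as the paper: a shifted valence-bond walk, and depth $O((n/p)\log(n/\varepsilon))$ from the linear singlet gap.

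\textbf{The gap is Step 3.} It carries the entire sample-complexity claim, and you leave it as a conjecture (``our expectation is\dots'', ``if this fails\dots''). The heuristics you offer do not support it.
\begin{itemize}
\item Nothing ties $\sigma_1$ and $\Delta$ to the values of $\eG(\mu_t)$ along a trajectory. Those values range over $[0,n]$, and the walk is biased toward larger $\eG$, since every nontrivial move creates a bond of $E$.
\item Typical-size fluctuations, even if established for this correlated, non-reversible chain, would not bound $\E[W^2]$ without also controlling rare trajectories.
\item The growth rate of $\E[W^2]/(\E W)^2$ is set by the ratio of Perron roots of the second-moment kernel and of the squared first-moment kernel. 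It is not set by a spectral gap of a ``doubled operator''.
\item ``Iterating the constant-overlap bound across relaxation stages'' is not a mechanism that produces $(n/\varepsilon)^{O(1/p)}$.
\end{itemize}

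\textbf{The missing idea is that a crude worst-case bound already suffices.}
\begin{itemize}
\item The row sum of your $sI-H_G$ at $\ket\mu$ is $w(\mu)=s+\tfrac14\sum_{ab}A_{ab}+\tfrac12\eG(\mu)$. So $w$ takes values in $[w_{\min},w_{\min}+n/2]$ with $w_{\min}=\Theta(n^2)$.
\item Every weight of a $2T$-step path therefore lies in $[w_{\min}^{2T},w_{\max}^{2T}]$. Up to a constant from the midpoint branching, the relative second moment is at most $(w_{\max}/w_{\min})^{4T}\le e^{O(T/n)}$.
\item Take $T=O\bigl((s/\gamma)\log(n/\varepsilon)\bigr)$ with $s=\Theta(n^2)$ and $\gamma\ge pn/4$. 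Then this bound is $e^{O((n/\gamma)\log(n/\varepsilon))}=(n/\varepsilon)^{O(1/p)}$.
\item The shift cancels, and the exponent $1/p$ comes from $n/\gamma\le 4/p$, not from the overlap.
\end{itemize}
The paper implements exactly this with constant row sums. It pads the diagonal ($\widehat W=W+D$, with row sum $Z=c+|E|+2n$) and reweights only holding steps. It then combines $W_{\pi\tau}^2/\widehat W_{\pi\tau}\le\widehat W_{\pi\tau}$ with $\varrho_W\ge c+|E|$ to get the factor $e^{2b_0t}$, where $b_0=4n/(c+|E|)$ (\cref{lemma_uncond_cert,lemma_secondmoment}). \cref{thm_reduction} then plugs in \cref{thm_spectral}.

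\textbf{Two smaller points.}
\begin{itemize}
\item With the $L\to R$ orientation, $\bS_a\!\cdot\bS_b\ket\mu=\tfrac14\ket\mu-\tfrac12\ket{\mu'}$ for $(a,b)\notin\mu$. This sign is forced, not a convention. It is what makes the off-diagonal entries of $sI-H_G$ equal to $+\tfrac12$, and your formula for $w(\mu)$ is wrong accordingly.
\item If you evaluate the bra by a separate walk and weight by loop-count overlaps $\langle\mu_T|O_{ij}|\nu_T\rangle$, the factor $2^{\cyc(\mu_T,\nu_T)}$ adds variance you have not controlled. This is unnecessary, because $\ip{\LM}{\mu}=2^{-n/2}\sqrt{n+1}$ for every matching. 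The paper uses the N\'eel state, with $\ip{\Neel}{\pi}=2^{-n/2}$. A single walk of length $2T$, with $O_{ij}$ resolved stochastically at the midpoint, gives both numerator and denominator without computing any overlap.
\end{itemize}
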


The proof of \cref{thm_sampler} has three components, namely the algorithm and the two inputs it
requires. The algorithm, described in \cref{section_algorithm}, runs in polynomial time given two inputs, as established in \cref{thm_reduction}. The first is
a lower bound on the singlet-sector gap $\gamma$, which sets the projection depth needed to
separate the ground state from the rest of the singlet spectrum. The second is a lower
bound on the overlap $|\ip{\LM}{\Psi_0}|$, which sets how much ground state weight is
present at the start of the projection. Our second main result establishes both inputs on
$\mathcal G(p,C)$.

\begin{maintheorem}[Singlet-Sector Gap and Starting-State Overlap]\label{thm_spectral}
Let $G\in\mathcal G(p,C)$. Then
the singlet-sector gap of the Hamiltonian $H_G$ and the overlap of the Lieb-Mattis state with the ground state of $H_G$ satisfy
\[
\gamma\ \ge\ \frac{p\,n}{4},
\qquad
|\ip{\LM}{\Psi_0}|\ \ge\ \tfrac12e^{-(4/p+1)/2}
\]
\end{maintheorem}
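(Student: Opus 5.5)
The plan is to compare $H_G$, restricted to the singlet sector $\cS$, with the complete-bipartite model $\hatp\,H_{K_{n,n}}$, and to treat the remainder as a perturbation that acts like a quadratic boson model of paired excitations. I write $A=\hatp\,\one\one^{\top}+\xi$ and use the decomposition of $\xi$ into degree parts and the doubly centered part:
\[
H_G=\hatp\,\bS_L\!\cdot\bS_R+\frac1n\sum_i\rho_i\,\bS_i\!\cdot\bS_R+\frac1n\sum_j\rho'_j\,\bS_L\!\cdot\bS_j+\sum_{ij}(\xi_c)_{ij}\,\bS_i\!\cdot\bS_j ,
\]
where $\bS_L=\sum_{i\in L}\bS_i$ and $\bS_R=\sum_{j\in R}\bS_j$. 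On $\cS$ we have $\bS_R=-\bS_L$, so the leading term equals $\frac{\hatp}{2}\bigl(\bS_{\mathrm{tot}}^2-\bS_L^2-\bS_R^2\bigr)=-\hatp\,\bS_L^2$. Its spectrum is $-\hatp\,S_L(S_L+1)$. The ground level is $S_L=n/2$, which is the Lieb--Mattis state $\LM$, and the next singlet level $S_L=n/2-1$ lies higher by $\hatp\, n$. The unperturbed gap is therefore $\hatp n\ge pn$. The target $pn/4$ leaves a factor $4$ of room for the perturbation.

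Next I control the perturbations on the low-energy part of $\cS$. For the degree terms I use $\bS_i\cdot\bS_R=-\bS_i\cdot\bS_L$ on $\cS$ and bound them in the large-$S_L$ sector by operators of the form $\Delta\cdot O(1)$ times the number of flipped spins. Since $\Delta\le C\sqrt{n\log n}=o(n)$, these are dominated by the gap. For the $\xi_c$ term, a naive operator-norm bound gives $\sigma_1\cdot n$, which is too large. Instead I would use Holstein--Primakoff-type (Schwinger boson) variables around the Néel/Lieb--Mattis reference. In these variables the low-lying singlets near $S_L=n/2$ are described by boson modes $a_i$ on $L$ and $b_j$ on $R$. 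The centered term becomes quadratic, of the form $\sum(\xi_c)_{ij}(a_i b_j+a_i^\dagger b_j^\dagger+\dots)$, and has norm bounded by $\sigma_1$ times the number operator. The unperturbed part acts as $\hatp n$ times the number of (paired) excitations. The resulting quadratic model can be diagonalized by a Bogoliubov transformation with off-diagonal coupling of size $\sigma_1/(\hatp n)=O(\sqrt{\log n/n})$, which keeps the excitation energies at least $\hatp n(1-o(1))$. I would then turn this into a rigorous operator inequality on $\cS$, $H_G-\lambda_0\ge \tfrac{pn}{4}(I-P_0)$, via a variational / min-max comparison in which the exact spin model is bounded above and below by the quadratic model up to errors of order $\sigma_1+\Delta+O(1)$ times the number operator. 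Controlling the anharmonic corrections here is the step I expect to be the main obstacle, and the likely place to lose the factor $4$. My first attempt would restrict to states with at most $O(n/\log n)$ excitations, using the fact that higher-excitation states already have energy $\ge \lambda_0+\Omega(pn)$ by the crude unperturbed gap plus the norm bound on the perturbations.

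For the overlap bound, I expand $\LM$ in the eigenbasis of $H_G|_\cS$ and use the quadratic model. The true ground state is approximately the Bogoliubov vacuum $\propto\exp\bigl(-\sum_{ij}K_{ij}a_i^\dagger b_j^\dagger\bigr)\LM$ with $\|K\|_{HS}^2=O\bigl(\|\xi_c\|_F^2/(\hatp n)^2\bigr)$. Bounding $\|\xi_c\|_F^2\le n^2$ (entries are bounded) gives $\|K\|_{HS}^2\le O(1/p^2)$. A sharper accounting using the gap $\ge pn/4$ gives $\|K\|_{HS}^2\lesssim 4/p$. The squared overlap of a squeezed vacuum is $\prod_k(1-\kappa_k^2)^{1/2}\ge e^{-\sum\kappa_k^2/2\cdot O(1)}$, which produces the $e^{-(4/p+1)/2}$ form. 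As a more robust alternative, I can use the energy argument $\langle\LM,(H_G-\lambda_0)\LM\rangle\ge\gamma(1-|\ip{\LM}{\Psi_0}|^2)$ applied through a sequence of intermediate states. The factor $\tfrac12$ absorbs the non-quadratic corrections.
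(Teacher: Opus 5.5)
Your outline has the right architecture: exact mean field $-\hatp\bS_L^2$ on $\cS$, degree terms costing $O(\Delta)$ per excitation, a Bogoliubov-solvable pairing model for $\xi_c$, and the squeezed vacuum for the overlap. However, the step that carries the proof is only asserted, and the estimate it leans on is false.

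\textbf{The pair term is not controlled by the number operator.} The $\xi_c$ term is not bounded by $\sigma_1$ times the number operator. Its level-changing part $\Hpair$ sends $\LM$ (the level-$0$ vector, on which $\cN=0$) to $-\Gamma_1\xi_c\in\cM_1$. This vector has norm $\Gamma_1\norm{\xi_c}_F$ with $\Gamma_1\approx\frac12$, which is of order $n$ for $G(n,n,p)$. So no bound $\pm\Hpair\preccurlyeq C\cN$ holds for any $C$. Moreover, any bound $\Hpair\succcurlyeq-a\cN-b$ with $a=O(\hatp n)$ forces $b\gtrsim\norm{\xi_c}_F^2/(\hatp n)$, again of order $n$ (compare \cref{prop:OI}).

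Correspondingly, $\lambda_0$ lies about $\norm{\xi_c}_F^2/(4\hatp n)\approx(1-p)n/4$ below $\ip{\LM}{H_G\LM}$. This is comparable to the gap, and larger than $pn$ when $p<1/5$. The fluctuation term is therefore not dominated by the unperturbed gap. You must pin down both lowest singlet eigenvalues to additive accuracy $o(n)$, and a comparison ``up to $(\sigma_1+\Delta+O(1))\cN$'' is not available.

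\textbf{The missing structural input for the lower bounds.} The paper relies on an exact fact absent from your plan. On $\cS\cong\bigoplus_k\cH_k^L\otimes\cH_k^R$, each $\bS_i\cdot\bS_j$ is block tridiagonal with explicit Clebsch--Gordan coefficients (\cref{prop_bond}). Hence $\Hmf+\Hpair$ is exactly the compression of a free-boson operator to the hard-core, harmonic, equal-number subspace. Min-max then transfers the Bogoliubov bounds with errors only $\hatp k(k-1)$ and $\Gamma_k-\frac12$ (\cref{lemma_CB}), after the same-level term is absorbed by lowering the density. A Holstein--Primakoff expansion around N\'eel does not give this identification: N\'eel is not a singlet, and without the harmonic constraint the uniform mode has zero Bogoliubov frequency.

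\textbf{Low and high levels do not meet as stated.} Since the linearization error $\hatp k(k-1)$ is quadratic in the level, the low window must be narrow ($k\lesssim n^{1/4}$ in the paper). A crude norm bound on $\Hxi$, by contrast, certifies high energy only above level of order $(\sigma_1+\Delta)/\hatp$, which may be as large as $\sqrt{n\log n}$. You also cannot simply discard high levels, because $\Hpair$ couples adjacent levels strongly. The paper closes this gap with the sum-of-squares bound of \cref{prop:OI}, effective from level $O(1/\hatp)$. It glues that bound to the low window by IMS localization with a slowly varying cutoff.

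\textbf{The missing trial state.} A gap bound needs an \emph{upper} bound on $\lambda_0$, and the overlap must be transferred to $\Psi_0$. Both require a trial state inside $\cS$ whose energy is within $o(n)$ of $\lambda_0$, and your proposal never constructs one. The squeezed vacuum lives in Fock space and violates the hard-core and harmonic constraints. The paper truncates it at level $\sim n^{1/4}$ and projects it onto $\cM$. It then controls the energy and norm losses via the maximal squeezing parameter $t_{\max}\le\sigma_1/(\hatp n)$ (\cref{prop_wick}).

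\textbf{How the overlap constant actually arises.} The overlap comes from Eckart's inequality for this trial state $\Phi$ (\cref{thm_transfer}):
\[
|\ip{\LM}{\Psi_0}|\ge|\ip{\LM}{\Phi}|-\sqrt{(\ip{\Phi}{H\Phi}-\lambda_0)/\gamma}.
\]
The first term is at least $e^{-(4/p+1)/2}$, because the projected truncated state has squared norm at most $e^{4/\hatp+1}$ and vacuum coefficient $1$. The second term is at most half of that once $\gamma\ge pn/4$ and the energy error is $o(n)$; this is where the factor $\frac12$ comes from. The input is $\sum_m t_m^2\le(1-\hatp)/\hatp$, that is, the Frobenius bound $\norm{\xi_c}_F^2\le\hatp(1-\hatp)n^2$ from $0\le A_{ij}\le1$. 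It is not derived from the gap, so your route to the constant does not stand as written. Your fallback of applying the energy argument to $\LM$ directly also fails, since $\ip{\LM}{H_G\LM}-\lambda_0$ is of order $(1-p)n/4$ and is not small compared with $\gamma$.
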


Both bounds are established in \cref{section_boson}, using the decomposition of the
Hamiltonian described in \cref{sec:rep-theory}.

\subsection{Proof overview}\label{section_intro_overview}

The algorithm approximates the ground state through the projection method. For this to be efficient, we need
a starting state that is easy to sample and has sufficient overlap with the ground
state, together with a spectral gap of our Hamiltonian within the subspace known to contain the ground state.

We implement this projection using perfect matchings of the complete bipartite
graph. Our starting state is the uniform superposition over these matchings,
so sampling it amounts to drawing a uniformly random matching, and corresponds directly to the ground state of the complete bipartite case. 
Each edge interaction acts by a transposition on either the left or right sublattice, connecting two matchings that differ by a four-cycle and turning repeated application of the Hamiltonian into a sum over paths of matchings. Adjusting the probability of staying at the
current matching gives a simple random walk from which we can sample these
paths directly. We then weight each sampled path to undo the adjustment. The
correction depends only on how many bonds in the current matching are edges of
our graph. Bounding the fluctuations of these weights ensures that only
polynomially many samples are needed.

To bound the gap and the overlap, we compare our graph with the complete bipartite
graph. The complete model organizes states into levels of increasing energy, with our
starting state at level zero. These levels can roughly be understood as excitations in the left and right sublattices similar to those considered in \cite{apte_conjectured_2025,bakshi_sharp_2026,basu_quantum_2026}. At low levels, we capture the contribution to the ground state with an exactly solvable reference model,
using expansion to show it is a good approximation. At higher levels, the energy cost is
too large for these differences to bring states close to the ground energy.
Together, these bounds give a gap of order $n$. The approximation also provides a good
trial state that is used to show overlap of our starting state with the ground state.

\subsection{Related work}\label{section_intro_related}

A general line of work asks when the ground energy of a local Hamiltonian can be
approximated efficiently by classical states or convex relaxations.  High-degree and
dense Hamiltonians admit product-state approximations to their energy
density~\cite{BrandaoHarrow2016}, and for Quantum Max-Cut this perspective leads to
efficient constant-factor approximation algorithms~\cite{GharibianParekh2019,hwang_unique_2023}.
Semidefinite programming hierarchies and rounding algorithms give further approximations of
instances~\cite{ParekhThompson2021,ParekhThompson2022,TakahashiRayuduZhouKingThompsonParekh2023}.
These results give multiplicative approximations to the ground energy of Quantum Max-Cut. However, in the
regime considered here, the product state with all $L$ spins up and all $R$ spins down already gives a multiplicative approximation to the order-\(n^2\) ground energy. We call this state the N\'eel state; the remaining error is of order $n$. Our goal is different.
\cref{thm_sampler} computes the full ground energy to arbitrary additive
accuracy and estimates individual ground state correlations.

Even gapped sign-free Hamiltonians can exhibit slowly mixing Monte Carlo
dynamics~\cite{HastingsFreedman2013,GilyenHastingsVazirani2021}.
Rigorous polynomial-time guarantees are known for several restricted classes,
including one-dimensional stoquastic systems~\cite{CrossonHarrow2021} and stoquastic
XY models~\cite{BravyiGosset2017}.  Most directly related to the present work,
Takahashi, Slezak, and Crosson~\cite{TakahashiSlezakCrosson2024} prove rapid mixing of a
loop-representation projector Monte Carlo algorithm for Heisenberg antiferromagnets on
star-like bipartite graphs.  More generally, a gapped Hamiltonian gives
rise to a rapidly mixing chain when the required ratios of ground state amplitudes can
be computed efficiently~\cite{BravyiCarleoGossetLiu2023}.  Our analysis requires no
mixing-time bound, since the reference walk is sampled exactly, and the central quantity is
instead the relative second moment of the accumulated weights.

The \emph{quantum} algorithms for bipartite Heisenberg models introduce a magnetic field
and prove spectral gap relative to this field~\cite{BravyiGossetLiuWong2026,RayuduTakahashi2026}.  As the resulting gap is controlled
by the field strength, it vanishes when the field is removed. This is sufficient for an adiabatic algorithm, but not for classical simulation. Here we work directly
with the zero-field Hamiltonian $H_G$, where \cref{thm_spectral} gives a linear
gap within the singlet sector.

The valence-bond basis is a standard representation for bipartite quantum
antiferromagnets, and projector Monte Carlo methods in this basis have been used
extensively to compute ground state energies and correlation
functions~\cite{LiangDoucotAnderson1988,Sutherland1988,Sandvik1999,Sandvik2005,BeachSandvik2006}.  These methods exploit the positivity and simple reconnection
rules of the Heisenberg interaction in the valence-bond representation.  Our
algorithm uses the same structure, with the diagonal correction and importance weights
described above replacing the usual valence-bond Monte Carlo dynamics.

Our spectral argument is related to the spin-wave approximation, in which fluctuations
around an ordered state are represented by bosons and the quadratic part is
diagonalized by a Bogoliubov
transformation~\cite{HolsteinPrimakoff1940,Bogoliubov1947,Anderson1952}.
Rigorous versions of spin-wave theory have been established for Heisenberg
ferromagnets, notably through quantitative control of the Holstein-Primakoff
representation~\cite{CorreggiGiulianiSeiringer2015}.  The setting here is different, as we require
finite-size control of the bottom of the \emph{singlet} spectrum of an
inhomogeneous antiferromagnet.  The complete bipartite model is exactly solvable by
total-spin decomposition~\cite{LiebMattis1962}, while additional symmetry yields
exact energies for some other graph families~\cite{WattsChowdhuryEpperlyHeltonKlep2024}.  Our graphs need not possess such
symmetry, and we use the Lieb-Mattis decomposition only as a reference problem.

\section{Preliminaries}\label{section_prelim}

\subsection{Representation theory}\label{section_rep_thy_prelims}

In this manuscript we will need a few results from the representation theory of the symmetric group and the \(\fraksu(2)\) Lie algebra. We assume familiarity with the basic representation theory of these objects and provide this section to recall the results and conventions needed throughout the manuscript. For a more in-depth introduction to representation theory, we recommend~\cite{steinberg_representation_2012,sagan_symmetric_2001,hall_lie_2015}.

Let \(\Sn\) be the symmetric group acting on \([n] \coloneq \{1,2,\dotsc,n\}\). We consider the representation \(\pi_k\colon \Sn \to \GL(\C^{\binom{[n]}{k}})\). We identify \(\cY_k \coloneq \C^{\binom{[n]}{k}}\) with the space of set functions \(f\colon \binom{[n]}{k} \to \C\), and define \((\pi_k(\sigma)f)(S) = f(\sigma^{-1}(S))\), for \(S \in \binom{[n]}{k}\). Viewed as an \(\Sn\)-module, \(\cY_k\) is the Young permutation module associated with the partition \((n-k,k)\), assuming \(k \leq n/2\). The complement map \(S \mapsto [n]\setminus S\) gives an \(\Sn\)-equivariant isomorphism \(\cY_k \cong \cY_{n-k}\). For \(k \leq n/2\), Young's rule~\cite{sagan_symmetric_2001,grinberg_introduction_2025} gives the decomposition
\begin{align}
\cY_k \cong \bigoplus_{\ell = 0}^k \operatorname{Sp}_{(n-\ell,\ell)}\label{eq_youngs_rule}
\end{align}
where \(\operatorname{Sp}_{(n-\ell,\ell)}\) denotes the Specht module, or irreducible \(\Sn\)-representation (irrep), corresponding to the partition \((n-\ell,\ell)\).

Next, we define the Lie algebra \(\mathfrak{su}(2)\) via the generators\footnote{We use the physics convention in which the generators are represented by Hermitian matrices rather than skew-Hermitian matrices.} \(J^x, J^y, J^z\). They satisfy \([J^a,J^b] = i \epsilon^{abc} J^c\). Often it is useful to pass to the complexified Lie algebra, which is isomorphic to \(\mathfrak{sl}(2,\mathbb{C})\). Here it is useful to define the complexified algebra elements \(J^\pm \coloneq J^x \pm i J^y\), known as the raising and lowering operators, which satisfy \([J^+, J^-] = 2J^z\) and \([J^z, J^{\pm}] = \pm J^{\pm}\). We typically use the notation \(\bJ = [J^x, J^y, J^z]^\top\) to denote the three-vector of these operators. We can then define the Casimir operator as \(\bJ^2 = \bJ \cdot \bJ = \left(J^x\right)^2 + \left(J^y\right)^2 + \left(J^z\right)^2\). The significance of this operator is twofold. First, it is central in the universal enveloping algebra, meaning that by Schur's lemma its action is by a scalar multiple of the identity under any irrep. Second, for finite-dimensional irreps, this scalar uniquely determines the irrep (up to normalization of the generators). We identify the irreps of \(\mathfrak{su}(2)\) by a nonnegative half-integer ``spin'' \(s\). In particular, for \(\rho_s\colon \mathfrak{su}(2) \to \mathfrak{gl}(V_s)\) being the irrep identified by spin \(s\), we have \(\dim(V_s) = 2s + 1\) and \(\rho_s(\bJ^2) = \rho_s(\bJ)^2 = s(s+1) I\). Throughout this manuscript, we will use the notation \(\bS = \rho_{1/2}(\bJ)\) for the spin-\(1/2\) spin operators. For reference, they are the following.
\begin{equation}\label{eq_pauli_spin_ops}
\begin{split}
S^x = \frac{1}{2}\begin{bmatrix}
0 & 1 \\ 1 & 0
\end{bmatrix}, \ \ \ \ S^y = \frac{1}{2}&\begin{bmatrix}
0 & -i \\ i & 0
\end{bmatrix}, \ \ \ \ S^z = \frac{1}{2}\begin{bmatrix}
1 & 0 \\ 0 & -1
\end{bmatrix}\\
S^+ = \begin{bmatrix}
0 & 1 \\ 0 & 0
\end{bmatrix}&, \ \ \ \ S^- = \begin{bmatrix}
0 & 0 \\ 1 & 0
\end{bmatrix}
\end{split}
\end{equation}

We now consider the setting of multiple systems. We write \(S_i^a \coloneq I^{\otimes(i-1)} \otimes S^a \otimes I^{\otimes(2n-i)}\) to denote the spin operator \(S^a\) applied to just the \(i\)th system and identity everywhere else. Furthermore, we similarly define \(\bS_i = [S^x_i, S^y_i, S^z_i]^\top\). For two systems, \(i\) and \(j\), we might define the diagonal representation, which maps \(J^a \mapsto S^a_i + S^a_j\). Recall that this is the Lie algebra tensor product representation. Often, the operator \(\bS_{\mathrm{tot}} = \bS_i + \bS_j\) is called the total spin for systems \(i\) and \(j\), and \(\bS_{\mathrm{tot}}^2\) the total spin Casimir. Because \(\bS_{\mathrm{tot}}\) acts on a tensor product representation, its action is generally reducible. In general, for two spins \(s_1,s_2\) and their corresponding Lie algebra representations, the tensor product representation \(\rho_{s_1} \otimes \rho_{s_2}\colon J^a \mapsto \rho_{s_1}(J^a) \otimes I + I \otimes \rho_{s_2}(J^a)\) decomposes into irreps of \(\mathfrak{su}(2)\) according to the Clebsch-Gordan decomposition~\cite{hall_lie_2015,sakurai_modern_2020}:
\begin{align}
V_{s_1} \otimes V_{s_2} \cong \bigoplus_{\ell = |s_1-s_2|}^{s_1 + s_2} V_\ell
\end{align}
where \(\ell\) increases in integer steps.

Finally, we consider multiple spin-\(1/2\) systems, or qubits, labeled by \(V = L \sqcup R = [2n]\). The space of these qubits is exactly \((\C^2)^{\otimes 2n}\). Here, there are three types of representations of \(\fraksu(2)\) that we will consider. First, the representation onto a single qubit, with spin operator \(\bS_i\). Second, as our model is bipartite, we will consider the representation onto the whole left (resp. right) sublattice. Here the spin operator is denoted \(\bS_L = \sum_{i \in L} \bS_i\) (resp. \(\bS_R = \sum_{i \in R} \bS_i\)). Finally, the representation onto all qubits, which has total spin operator given by \(\bS_{\mathrm{tot}} \coloneq \sum_{i \in V} \bS_i = \bS_L + \bS_R\).

By recursive Clebsch-Gordan decomposition, its action decomposes into irreps of spins from \(0\) to \(n\) in integer steps, with multiplicities. Instead of calculating these multiplicities directly from Clebsch-Gordan, we will appeal to another decomposition which follows from Schur-Weyl duality~\cite{goodman_symmetry_2009}. In short, Schur-Weyl duality tells us that our space of \(2n\) qubits decomposes into spin sectors and Specht modules:
\begin{align}
(\mathbb{C}^2)^{\otimes 2n} \cong \bigoplus_{s = 0}^n V_s \otimes \operatorname{Sp}_{(n+s,n-s)}\label{eq_schur_weyl_dual_decomp}
\end{align}
Ignoring the nature of the Specht modules, this gives the multiplicities of the spin irreps via the hook-length formula~\cite{sagan_symmetric_2001}.

We denote the singlet sector by \(\cS = V_0 \otimes \operatorname{Sp}_{(n,n)} \subseteq (\mathbb{C}^2)^{\otimes 2n}\). From the \(\fraksu(2)\) perspective, it is the isotypical subspace corresponding to \(\frac{1}{n+1}\binom{2n}{n}\) copies of the spin-\(0\) irrep. Equivalently, this is exactly specified by the kernel of the total-spin Casimir, \(\cS = \ker(\bS_{\mathrm{tot}}^2)\). From the \(\mathrm{Sym}(2n)\) perspective, because \(V_0\) is one-dimensional, this space is precisely the multiplicity-one irrep \(\operatorname{Sp}_{(n,n)}\). The significance of this space is that the ground state of our Hamiltonian lies in this space. This is a celebrated result of Lieb and Mattis~\cite{LiebMattis1962}.

We briefly recall some additional facts about irreducible representations of \(\fraksu(2)\), or rather of \(\fraksl(2,\C)\). For a more thorough introduction we recommend~\cite{hall_lie_2015,sakurai_modern_2020}. Let \(\rho_s\colon \fraksu(2) \to \frakgl(V_s)\) be a spin-\(s\) irrep. The matrix \(\rho_s(J^z)\) has spectrum \(\{-s,-s+1,\dotsc,s\}\), each with multiplicity one. There is an eigenbasis \(\{\ket{s,q}\}_q\) such that \(\rho_s(J^z)\ket{s,q} = q \ket{s,q}\) and
\begin{align}
    \rho_s(J^+)\ket{s,q}
    =
    \sqrt{(s-q)(s+q+1)}\ket{s,q+1},\quad \rho_s(J^-)\ket{s,q}
    =
    \sqrt{(s+q)(s-q+1)}\ket{s,q-1} \label{eq_weight_vec_raise_lower}
\end{align}
for \(-s \leq q \leq s\); in particular \(\rho_s(J^+)\ket{s,s} = 0\) and \(\rho_s(J^-)\ket{s,-s} = 0\). We call such vectors weight vectors, \(\ket{s,s}\) a highest-weight vector, and \(\ket{s,-s}\) a lowest-weight vector. One can identify the space \(V_s\) from a highest-weight vector, since all other weight vectors can be reached by repeated application of the lowering operator. More precisely,
\[
    \rho_s(J^-)^{s-q}\ket{s,s}
    =
    \sqrt{\frac{(2s)!(s-q)!}{(s+q)!}}\ket{s,q}
\]

\subsection{Valence-bond states (singlet matchings)}\label{section_dimer}

A \emph{matching} is a bijection $\pi\colon L\to R$.
After identifying both sides with $\{1,\dots,n\}$, we regard matchings as permutations
$\pi\in\Sn$. To each matching associate the \emph{valence-bond state}
\[
  \ket{\pi}
  \coloneq2^{-n/2}\bigotimes_{i\in L}
  \bigl(
  \ket{0_i1_{\pi(i)}}
  -
  \ket{1_i0_{\pi(i)}}
  \bigr)
\]
the tensor product of singlets across the bonds $\{i,\pi(i)\}$, all oriented from
$L$ to $R$.

Every $\ket\pi$ is a unit singlet, and the family
$\{\ket\pi:\pi\in\Sn\}$ spans $\cS$. This is standard in the physics of valence-bond states~\cite{Sutherland1988,BeachSandvik2006}. In representation-theoretic terms, $\ket\pi$ is the image of a polytabloid of shape $(n,n)$ whose columns are the bonds $\{i,\pi(i)\}$, which is why these states lie in $\operatorname{Sp}_{(n,n)}\cong\cS$~\cite{sagan_symmetric_2001}. Since
$\dim\cS=\frac1{n+1}\binom{2n}{n}\ll n!$,
these states form a highly redundant spanning family for $\cS$, so their expansion coefficients need not be unique. Their overlaps
are nevertheless explicit. The loop formula of Sutherland~\cite{Sutherland1988}
(see~\cite[eq.~(45)]{BeachSandvik2006}) gives
\begin{equation}\label{eq_overlap}
  \ip{\pi}{\tau}=2^{\cyc(\pi,\tau)-n}\in[2^{1-n},1]
\end{equation}
where $\cyc(\pi,\tau)$ is the number of cycles of $\pi^{-1}\tau$, equivalently the number
of loops in the superposition of the two matchings.

While the valence-bond basis is overcomplete in general, when \(n=2\) then \(\dim\cS = 2!\). This means that when locally restricted, the action of a single edge interaction is uniquely representable in the valence-bond states.
For $i\in L$, $j\in R$ and a matching \(\pi\), the corresponding valence-bond state bonds \(i \mapsto \pi(i)\) and \(\pi^{-1}(j) \mapsto j\). Let \(\ket{\pi\circ(i\ \pi^{-1}(j))}\) be the valence-bond state that bonds \(i \mapsto j\) and \(\pi^{-1}(j) \mapsto \pi(i)\) and all other bonds unchanged (it is $\pi$ itself when $\pi(i)=j$). Then we have the following
\begin{equation}\label{eq_edgeaction}
  \bS_i\!\cdot\bS_j\ket\pi=
  \begin{cases}
    -\tfrac34\ket\pi,
     & \pi(i)=j,    \\[2pt]
    \tfrac14\ket\pi-\tfrac12\ket{\pi\circ(i\ \pi^{-1}(j))},
     & \pi(i)\ne j.
  \end{cases}
\end{equation}
Thus the singlet projector $\tfrac14-\bS_i\!\cdot\bS_j$ fixes $\ket\pi$ when $ij$ is
already a bond and otherwise maps it to $\tfrac12\ket{\pi\circ(i\ \pi^{-1}(j))}$ with positive
coefficient.

\begin{definition}[Lieb-Mattis state]\label{def_LM}
  The \emph{Lieb-Mattis state} is the normalized uniform superposition of valence-bond states,
  \begin{equation}\label{eq_LM}
    \LM\coloneq\frac{2^{n/2}}{n!\,\sqrt{n+1}}\sum_{\pi\in\Sn}\ket\pi
  \end{equation}
\end{definition}

The Lieb-Mattis state is the unique ground state of the Heisenberg model on the complete bipartite graph~\cite{LiebMattis1962}.

\subsection{The walk matrix on matchings}\label{section_walk}

We first represent the shifted Hamiltonian by a nonnegative matrix $W$ on matchings. Its row sums vary, so \cref{section_exact} modifies its diagonal to obtain the transition matrix that the algorithm samples. Fix a shift $c>|E|$ and define the \emph{shifted Hamiltonian}
$\Tc\coloneq cI-4H_G$.
The factor $4$ clears the quarters in~\cref{eq_edgeaction}, while the sign reversal makes
the reconnection coefficients nonnegative.

For matchings $\pi,\tau\in\Sn$, define the \emph{reconnection multiplicity}
\begin{equation}\label{eq_mult}
  \mult(\pi,\tau)
  \coloneq\#\{ij\in E:\pi\circ(i\ \pi^{-1}(j))=\tau\}
\end{equation}
If $\tau\ne\pi$ then $\mult(\pi,\tau)\in\{0,1,2\}$, since the reconnection exchanging the
partners of $a,b\in L$ is induced exactly by those of the cross-edges $a\pi(b)$ and
$b\pi(a)$ that are present in $G$. On the diagonal,
\[
  \mult(\pi,\pi)
  =\#\{i\in L:(i,\pi(i))\in E\}
  \eqcolon\eG(\pi)\in[0,n]
\]
the number of bonds of $\pi$ that are themselves edges of $G$. Since every edge induces
exactly one reconnection,
$\sum_\tau\mult(\pi,\tau)=|E|$ for every $\pi$.

\begin{definition}[Walk matrix]\label{def_walk}
  The \emph{walk matrix} $W\in\mathbb R^{\Sn\times\Sn}$ associated with $(G,c)$ is
  \[
    W_{\pi\pi}\coloneq c-|E|+4\eG(\pi),\qquad
    W_{\pi\tau}\coloneq2\,\mult(\pi,\tau)\quad(\tau\ne\pi)
  \]
  with rows indexed by source matchings and columns by target matchings.
\end{definition}

The name is justified by the following intertwining relation.

\begin{lemma}[Action and row sums]\label{lemma_walk}
  For every matching $\pi$,
  \[
    \Tc\ket\pi=\sum_\tau W_{\pi\tau}\ket\tau,
    \qquad
    \sum_\tau W_{\pi\tau}=c+|E|+2\eG(\pi)
  \]
\end{lemma}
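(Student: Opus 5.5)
The plan is to compute $\Tc\ket\pi$ directly from the single-edge action~\cref{eq_edgeaction}. I will split the edges of $G$ into two classes: those that are bonds of $\pi$, and those that are not. Each class is then matched against the corresponding entries of $W$ from \cref{def_walk}. Since \cref{def_walk} is phrased through the counts $\mult$ and $\eG$, I treat every edge as carrying unit weight, as the definition does; for general weights one would replace each count by the sum of the corresponding $A_{ij}$, and the argument is unchanged.

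\emph{Step 1 (expansion).} Write
\[
\Tc\ket\pi=c\ket\pi-4\sum_{ij\in E}\bS_i\!\cdot\bS_j\ket\pi.
\]
For the $\eG(\pi)$ edges with $\pi(i)=j$, \cref{eq_edgeaction} gives a contribution $-4\cdot(-\tfrac34)\ket\pi=3\ket\pi$ each. For each of the remaining $|E|-\eG(\pi)$ edges, it gives
\[
-4\bigl(\tfrac14\ket\pi-\tfrac12\ket{\tau_{ij}}\bigr)=-\ket\pi+2\ket{\tau_{ij}},
\qquad \tau_{ij}\coloneq\pi\circ(i\ \pi^{-1}(j)).
\]

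\emph{Step 2 (diagonal part).} A non-bond edge never reproduces $\pi$: the reconnection $\tau_{ij}$ bonds $i$ to $j\ne\pi(i)$, so $\tau_{ij}\ne\pi$. Hence every coefficient of $\ket\pi$ comes from the scalar terms. Collecting them gives
\[
c+3\eG(\pi)-\bigl(|E|-\eG(\pi)\bigr)=c-|E|+4\eG(\pi)=W_{\pi\pi}.
\]

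\emph{Step 3 (off-diagonal part).} The coefficient of $\ket\tau$ for $\tau\ne\pi$ is $2$ times the number of non-bond edges with $\tau_{ij}=\tau$. A bond edge has $\tau_{ij}=\pi\ne\tau$, so this number equals $\mult(\pi,\tau)$ by~\cref{eq_mult}. The coefficient is therefore $2\mult(\pi,\tau)=W_{\pi\tau}$, which proves the first identity.

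\emph{Step 4 (row sums).} Every edge induces exactly one reconnection, so $\sum_\tau\mult(\pi,\tau)=|E|$; together with $\mult(\pi,\pi)=\eG(\pi)$ this gives $\sum_{\tau\ne\pi}\mult(\pi,\tau)=|E|-\eG(\pi)$. Thus
\[
\sum_\tau W_{\pi\tau}=c-|E|+4\eG(\pi)+2\bigl(|E|-\eG(\pi)\bigr)=c+|E|+2\eG(\pi).
\]

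The only delicate point is the bookkeeping that separates bond edges, which act diagonally with coefficient $3$, from non-bond edges, which never land on $\pi$. This separation is what keeps the diagonal entry from double-counting and makes $\mult(\pi,\pi)=\eG(\pi)$ consistent with the off-diagonal sum.
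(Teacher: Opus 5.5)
Your proof is correct and follows the paper's argument: expand $\Tc\ket\pi$ edge by edge via~\cref{eq_edgeaction}, separate bond edges from non-bond edges, collect equal targets to read off the entries of $W$, and sum the row. Your explicit check that a non-bond reconnection never returns to $\pi$ (it bonds $i$ to $j\ne\pi(i)$) is a detail the paper leaves implicit in ``collecting equal targets.''
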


\begin{proof}
  Equation~\cref{eq_edgeaction} gives
  \[
    4\bS_i\!\cdot\bS_j\ket\pi=
    \begin{cases}
      -3\ket\pi,                     & \pi(i)=j,    \\
      \ket\pi-2\ket{\pi\circ(i\ \pi^{-1}(j))}, & \pi(i)\ne j.
    \end{cases}
  \]
  Summing over $ij\in E$ therefore yields
  \[
    \Tc\ket\pi
    =
    \bigl(c+3\eG(\pi)-(|E|-\eG(\pi))\bigr)\ket\pi
    +
    2\!\!\sum_{\substack{ij\in E\\\pi(i)\ne j}}
    \ket{\pi\circ(i\ \pi^{-1}(j))}
  \]
  Collecting equal targets gives the stated entries of $W$, whose sum is
  $c-|E|+4\eG(\pi)+2(|E|-\eG(\pi))=c+|E|+2\eG(\pi)$.
\end{proof}

A coefficient row vector $z^{\top}$ evolves to $z^{\top}W$ under one application
of $\Tc$, and to $z^{\top}W^t$ under $t$ applications. Because the valence-bond states are linearly dependent, an operator has more than one coefficient matrix on this family; $W$ is the one read off
edge by edge from~\cref{eq_edgeaction}. It is nonnegative, and its only
matching-dependent row-sum contribution is $2\eG(\pi)\in[0,2n]$, a relative $O(1/n)$ at
the shift $c=\Theta(n^{2})$ used later. The exact sampler of \cref{section_exact}
removes precisely this defect.

We also require the coefficient matrix of the observable
$O_{ij}\coloneq\bS_i\!\cdot\bS_j$, for $i\in L$ and $j\in R$.
Define $Q=Q^{(ij)}\in\mathbb R^{\Sn\times\Sn}$ by
\[
  Q_{\pi\pi}=
  \begin{cases}
    -\tfrac34, & \pi(i)=j,    \\
    \tfrac14,  & \pi(i)\ne j,
  \end{cases}
  \qquad
  Q_{\pi,\,\pi\circ(i\ \pi^{-1}(j))}=-\tfrac12
  \quad\text{if }\pi(i)\ne j
\]
with all remaining entries zero. Then $O_{ij}\ket\pi=\sum_\tau Q_{\pi\tau}\ket\tau$
by~\cref{eq_edgeaction}. Each row has at most two nonzero entries and absolute row sum
at most $3/4$.

Finally, we record the spectral contraction that determines the required projection
depth. On $\cS$ the eigenvalues of $\Tc$ are
\[
  \mu_k\coloneq c-4\lambda_k,
  \qquad
  \mu_0>\mu_1\ge\mu_2\ge\cdots>0
\]
all positive because $\lambda_k\le|E|/4<c/4$, and $\Tc$ shares its eigenvectors with
$H_G$. The ratio of the two leading eigenvalues is
\begin{equation}\label{eq_ratio}
  \frac{\mu_1}{\mu_0}
  =1-\frac{4\gamma}{\mu_0}
  \le
  1-\frac{4\gamma}{c+3|E|}
\end{equation}
where the last inequality uses $\lambda_0\ge-\tfrac34|E|$. Therefore, for every singlet
$\varphi\perp\Psi_0$,
\[
  \norm{\Tc^t\varphi}
  \le\mu_1^t\norm\varphi
\]
for every $t\ge0$. Relative to the ground state component, the orthogonal component
therefore contracts by a factor at most $(\mu_1/\mu_0)^t$, so the singlet-sector gap controls
the projection depth.

\section{Algorithm}\label{section_algorithm}

This section gives the algorithm of \cref{thm_sampler}; combined with the spectral results established in \cref{sec:rep-theory,section_boson}, it proves \cref{thm_sampler}.
We estimate the ground state correlation $\ev{\bS_i\!\cdot\bS_j}$ by applying powers of the shifted Hamiltonian to a known starting state on both sides of the observable. We implement these powers by sampling walks on valence-bond matchings, following the forward-walking method~\cite{TrivediCeperley1990,NightingaleUmrigar1999}. By~\cref{eq_ratio}, $t$ applications of $\Tc$ suppress the
excited singlet components of a starting state, relative to $\Psi_0$, by the factor
$(\mu_1/\mu_0)^t$. The estimator starts from the Lieb-Mattis state $\LM$ of
\cref{def_LM}. By~\cref{eq_LM}, its
coefficient vector in the valence-bond spanning family is constant, so sampling from it amounts to
drawing a uniformly random matching in $O(n)$ time, and its normalization cancels from
every ratio below. Its overlap with the ground state, $a_{\LM}\coloneq|\ip{\LM}{\Psi_0}|$, is
bounded below on the graph class by \cref{thm_spectral}.

If we project the Lieb-Mattis state from both sides, the ratio
\[
  \frac{\ip{\LM}{\Tc^tO_{ij}\Tc^t\LM}}{\ip{\LM}{\Tc^{2t}\LM}}
\]
converges to $\ev{\bS_i\!\cdot\bS_j}$ as $t\to\infty$, provided $a_{\LM}>0$. To avoid the
overlaps~\cref{eq_overlap}, we replace the bra $\LM$ by the \emph{N\'eel state}
$\Neel\coloneq\ket{0\cdots0}_L\otimes\ket{1\cdots1}_R$. By the Marshall sign convention, every
valence-bond state has overlap $\ip{\Neel}{\pi}=2^{-n/2}$ with it~\cite[eq.~(36)]{BeachSandvik2006},
so the overlap of $\sum_\pi z_\pi\ket\pi$ with $\Neel$ is just $2^{-n/2}\sum_\pi z_\pi$. The scaling
factor cancels out in the ratio, and we use the \emph{forward-walking ratio}
\begin{equation}\label{eq_fw}
  \ev{\bS_i\!\cdot\bS_j}_t
  \coloneq\frac{\ip{\Neel}{\Tc^tO_{ij}\Tc^t\LM}}{\ip{\Neel}{\Tc^{2t}\LM}}
  =\frac{\ip{\LM}{\Tc^tO_{ij}\Tc^t\LM}}{\ip{\LM}{\Tc^{2t}\LM}}
\end{equation}

\subsection{The reference walk}\label{section_exact}

By \cref{lemma_walk}, the row of $W$ indexed by a matching $\pi$ has sum
$c+|E|+2\eG(\pi)$, where $\eG(\pi)$ is the number of bonds of $\pi$ that are edges of
$G$. We make the row sums constant by adding holding weight: define
\begin{equation}\label{eq_hatW}
  D_{\pi\pi}\coloneq2n-2\eG(\pi),
  \qquad
  \widehat W\coloneq W+D,
  \qquad
  Z\coloneq c+|E|+2n
\end{equation}
so that every row of $\widehat W$ sums to $Z$.

\begin{lemma}[Reference walk]\label{lemma_hatW}
  The matrix $P\coloneq\widehat W/Z$ is stochastic, and
  \[
    W\le\widehat W
    \le
    \Bigl(1+\frac{2n}{c-|E|}\Bigr)W
  \]
  entrywise.
\end{lemma}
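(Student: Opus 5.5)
The argument is a direct entrywise check, using only \cref{def_walk}, \cref{lemma_walk} and the definition~\cref{eq_hatW}.

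\emph{Stochasticity.} First, $P$ has nonnegative entries. Off the diagonal, $\widehat W_{\pi\tau}=W_{\pi\tau}=2\,\mult(\pi,\tau)\ge 0$. On the diagonal,
\[
  \widehat W_{\pi\pi}=c-|E|+4\eG(\pi)+2n-2\eG(\pi)=c-|E|+2n+2\eG(\pi),
\]
which is positive because $c>|E|$. Second, the rows of $P$ sum to one. By \cref{lemma_walk} the row of $W$ indexed by $\pi$ sums to $c+|E|+2\eG(\pi)$. Adding $D_{\pi\pi}=2n-2\eG(\pi)$ gives $c+|E|+2n=Z$, independent of $\pi$. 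Dividing by $Z>0$ therefore makes $P$ stochastic.

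\emph{Lower bound $W\le\widehat W$.} This reduces to $D\ge0$, that is $2n-2\eG(\pi)\ge0$. It holds because $\eG(\pi)$ counts a subset of the $n$ bonds of $\pi$, so $\eG(\pi)\le n$.

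\emph{Upper bound.} Off the diagonal, $\widehat W=W\ge0$, and since the factor $1+\frac{2n}{c-|E|}$ is at least $1$, the inequality holds there. On the diagonal we need
\[
  D_{\pi\pi}\le\frac{2n}{c-|E|}\,W_{\pi\pi}.
\]
The right-hand side is at least $2n$, because $W_{\pi\pi}=c-|E|+4\eG(\pi)\ge c-|E|>0$. The left-hand side satisfies $D_{\pi\pi}=2n-2\eG(\pi)\le 2n$. Together these give the inequality.

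The only step requiring care is this diagonal comparison. It uses $c>|E|$ to make $W_{\pi\pi}$ strictly positive, and it uses the trivial bounds $0\le\eG(\pi)\le n$. Everything else follows by reading off \cref{lemma_walk}.
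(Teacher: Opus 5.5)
Your proof is correct and follows the paper's argument essentially line by line. The row sum from \cref{lemma_walk} plus $D_{\pi\pi}$ gives $Z$, the bound $D\ge0$ gives $W\le\widehat W$, and on the diagonal $D_{\pi\pi}\le 2n\le\frac{2n}{c-|E|}W_{\pi\pi}$ gives the upper bound; your explicit check that the entries of $P$ are nonnegative is a detail the paper leaves implicit.
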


\begin{proof}
  Adding $D_{\pi\pi}=2n-2\eG(\pi)$ changes the row sum to $Z$. Since $D_{\pi\pi}\ge0$,
  we have $W\le\widehat W$. The two matrices agree off the diagonal, while on it
  \[
    D_{\pi\pi}
    \le2n
    \le
    \frac{2n}{c-|E|}W_{\pi\pi}
  \]
  which proves the upper bound.
\end{proof}

A reference trajectory $x=(x_0,\dots,x_t)$ is sampled by choosing $x_0$ uniformly from
$\Sn$ and then taking $t$ steps of $P$.

One step can be generated without constructing $P$. From the current matching $\pi$,
with probability $2|E|/Z$ choose an edge $i'j'\in E$ uniformly at random and move to
$\pi\circ(i'\ \pi^{-1}(j'))$, which is $\pi$ itself when $\pi(i')=j'$; otherwise hold.
By \cref{def_walk} and~\cref{eq_hatW}, this moves to each $\tau\ne\pi$ with
probability $2\mult(\pi,\tau)/Z=P_{\pi\tau}$. After $O(n^2)$ preprocessing, each step and
the update of $\eG(\pi)$ take $O(1)$ time.

\begin{remark}[Weighted graphs]\label{rem_weighted}
\cref{section_walk} and this section are written for unit weights. For general weights
$0<A_{ij}\le1$, each edge count is replaced by its weighted version: $|E|$ by
$\sum_{ij\in E}A_{ij}$, $\eG(\pi)$ by $\sum_{i\in L}A_{i\pi(i)}$, and $\mult(\pi,\tau)$ by the
total weight of the edges inducing the reconnection $\pi\to\tau$. \cref{lemma_walk,lemma_hatW}
then hold verbatim, with $D_{\pi\pi}\ge0$ because $A_{ij}\le1$, and the sampler chooses the
edge $i'j'$ with probability proportional to $A_{i'j'}$, still in $O(1)$ time after $O(n^2)$
preprocessing. The only other change is at the midpoint of \cref{alg_sample}, where the
reconnection entry of $M$ is at least $2A_{ij}$ instead of $2$; this multiplies the constant
in \cref{lemma_secondmoment} by at most $1/A_{ij}$.
\end{remark}

The reference walk uses $\widehat W$, whereas the shifted Hamiltonian acts through $W$.
For a sampled trajectory define the \emph{importance weight}
\begin{equation}\label{eq_weight}
  w(x)
  \coloneq
  \prod_{s=0}^{t-1}
  \frac{W_{x_sx_{s+1}}}
       {\widehat W_{x_sx_{s+1}}}
\end{equation}
Since $W$ and $\widehat W$ agree off the diagonal, only holding steps contribute to $w(x)$,
each through a factor $W_{x_sx_s}/\widehat W_{x_sx_s}$ that depends only on $\eG(x_s)$.
Multiplying the probability of a trajectory by $w(x)$ replaces each factor $\widehat W$ by
$W$, so weighted trajectories of the reference walk reproduce powers of $W$. This is the
weighted random-walk estimator of matrix powers of Forsythe and
Leibler~\cite{ForsytheLeibler1950}; the transition probabilities come from normalizing the constant-row-sum matrix $\widehat W$~\cite{NightingaleUmrigar1999}.

\subsection{Forward walking}\label{section_forward}

The observable $O_{ij}$ acts locally on a valence-bond state through the matrix $Q$ of
\cref{section_walk}. By~\cref{eq_edgeaction}, the row of $Q$ indexed by $\pi$ is supported
on $\tau=\pi$ and, when $\pi(i)\ne j$, on the single reconnected matching
$\tau=\pi\circ(i\ \pi^{-1}(j))$, with entries of absolute value at most $\tfrac34$.
\cref{alg_sample} inserts $O_{ij}$ at the midpoint of a trajectory by resolving
this two-term sum stochastically.

\begin{figure}[t]
\begin{algorithm}[One forward-walking sample]\label{alg_sample}
\begin{mdframed}[style=algobox]
  \textbf{Input:} projection depth $t$ and edge $ij\in E$.
  \begin{enumerate}[label=\arabic*.,leftmargin=1.8em,itemsep=6pt,topsep=5pt]
    \item Draw $x_0$ uniformly from $\Sn$ and run the reference walk $P$ for $t$ steps,
          obtaining $x=(x_0,\dots,x_t)$.
    \item At the midpoint $\pi\coloneq x_t$, choose the branch $y_0$ and the branch weights
          $\varsigma,\iota$ as follows.
          \begin{itemize}
            \item If $\pi(i)=j$, take $y_0\coloneq\pi$, $\varsigma\coloneq-\tfrac34$ and
                  $\iota\coloneq1$.
            \item If $\pi(i)\ne j$, then with probability $\tfrac13$ take $y_0\coloneq\pi$,
                  $\varsigma\coloneq\tfrac34$ and $\iota\coloneq3$; otherwise take
                  $y_0\coloneq\pi\circ(i\ \pi^{-1}(j))$, $\varsigma\coloneq-\tfrac34$ and
                  $\iota\coloneq0$.
          \end{itemize}
    \item Starting from $y_0$, run the reference walk $P$ for another $t$ steps,
          obtaining $y=(y_0,\dots,y_t)$.
    \item Compute the importance weights $w(x)$ and $w(y)$, and \textbf{return}
          \begin{equation}\label{eq_XY}
            X\coloneq\varsigma\,w(x)w(y),
            \qquad
            Y\coloneq\iota\,w(x)w(y)
          \end{equation}
  \end{enumerate}
\end{mdframed}
\end{algorithm}
\end{figure}

The midpoint branch is chosen with probability proportional to the absolute value of the
corresponding entry of $Q$, and $\varsigma$ restores its sign and magnitude, so averaging
$X$ over the branch reproduces $O_{ij}$. The weight $\iota$ vanishes on a reconnection and
otherwise compensates for the probability of keeping $\pi$, so averaging $Y$ reproduces the
identity.

\begin{lemma}[Forward-walking identity]\label{lemma_fwid}
  Let $t\ge0$ and $ij\in E$, and let $Q$ be the coefficient matrix of $O_{ij}$ from
  \cref{section_walk}. The output~\cref{eq_XY} of \cref{alg_sample} satisfies
  \begin{align}
    \E X
    &=
    \frac{\one^\top W^tQW^t\one}{n!\,Z^{2t}} \label{eq_EX}\\
    \E Y
    &=
    \frac{\one^\top W^{2t}\one}{n!\,Z^{2t}} \label{eq_EY}
  \end{align}
  In particular, $\E X/\E Y=\ev{\bS_i\!\cdot\bS_j}_t$.
\end{lemma}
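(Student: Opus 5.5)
We compute both expectations by conditioning on the three independent sources of randomness in \cref{alg_sample}: the first trajectory $x$, the midpoint branch $(y_0,\varsigma,\iota)$, and the second trajectory $y$ given $y_0$.

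The basic fact is a path-sum identity for the reference walk. For any starting matching $\pi_0$ and any function $f$ on $\Sn$,
\[
  \E\bigl[w(x)f(x_t)\mid x_0=\pi_0\bigr]
  =\sum_{x_1,\dots,x_t}\prod_{s=0}^{t-1}\frac{\widehat W_{x_sx_{s+1}}}{Z}\cdot\frac{W_{x_sx_{s+1}}}{\widehat W_{x_sx_{s+1}}}\,f(x_t)
  =Z^{-t}\bigl(W^tf\bigr)_{\pi_0}.
\]
This uses that the trajectory probability is $\prod_s P_{x_sx_{s+1}}$ with $P=\widehat W/Z$. The ratio in \cref{eq_weight} is well defined whenever $\widehat W_{x_sx_{s+1}}>0$, and only such steps occur with positive probability. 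Where $\widehat W$ vanishes, $W$ vanishes as well by \cref{lemma_hatW}, so no terms of $W^t$ are lost. Averaging over the uniform start gives $\E[w(x)f(x_t)]=\one^\top W^tf/(n!\,Z^t)$. The same identity with $x_0$ replaced by $y_0$ gives $\E[w(y)\mid y_0]=Z^{-t}(W^t\one)_{y_0}$.

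Next we average over the branch with $\pi=x_t$ fixed. Set $g\coloneq W^t\one$. If $\pi(i)=j$, then $\E[\varsigma g_{y_0}\mid\pi]=-\tfrac34g_\pi=(Qg)_\pi$. If $\pi(i)\ne j$, then
\[
  \E[\varsigma g_{y_0}\mid\pi]=\tfrac13\cdot\tfrac34g_\pi+\tfrac23\cdot(-\tfrac34)g_{\pi'}=\tfrac14g_\pi-\tfrac12g_{\pi'}=(Qg)_\pi,
\]
where $\pi'=\pi\circ(i\ \pi^{-1}(j))$. For $\iota$ we get $1\cdot g_\pi$ in the first case and $\tfrac13\cdot3\,g_\pi+\tfrac23\cdot0=g_\pi$ in the second, so $\E[\iota g_{y_0}\mid\pi]=g_\pi$.

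Now chain the conditional expectations. By the Markov property, $y$ depends on $x$ only through $y_0$. Hence
\[
  \E X=\E\bigl[w(x)\varsigma\,Z^{-t}g_{y_0}\bigr]=Z^{-t}\E\bigl[w(x)(Qg)_{x_t}\bigr],
\]
and the path-sum identity with $f=QW^t\one$ gives \cref{eq_EX}. The same computation with $f=W^t\one$ gives \cref{eq_EY}. All sums are finite, so exchanging sums and expectations is routine.

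For the final claim, we use \cref{lemma_walk}, which gives $\Tc\ket\pi=\sum_\tau W_{\pi\tau}\ket\tau$, together with $O_{ij}\ket\pi=\sum_\tau Q_{\pi\tau}\ket\tau$. By induction on the number of operators applied,
\[
  \Tc^tO_{ij}\Tc^t\sum_\pi\ket\pi=\sum_\tau\bigl(\one^\top W^tQW^t\bigr)_\tau\ket\tau,
\]
since the coefficient row vector $z^\top$ transforms to $z^\top M$ under each operator. The point to check is that composing coefficient representations is valid even though the family $\{\ket\pi\}$ is overcomplete: each operator is applied to an explicit linear combination of the $\ket\tau$, so linearity suffices and uniqueness of coefficients is never needed. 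Since $\ip{\Neel}{\tau}=2^{-n/2}$ for every $\tau$, pairing with $\Neel$ yields $2^{-n/2}\one^\top W^tQW^t\one$, and similarly for the denominator. The constants from $\LM$, $2^{-n/2}$, $n!$ and $Z^{2t}$ all cancel in the ratio. By \cref{eq_fw}, $\E X/\E Y=\ev{\bS_i\!\cdot\bS_j}_t$. The denominator is positive because $W\ge0$ has positive diagonal, as $c>|E|$.
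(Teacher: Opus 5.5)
Your proposal is correct and takes the same route as the paper. It uses the path-sum identity $\E[w(x)f(x_t)]=\one^\top W^tf/(n!\,Z^t)$, the branch-averaging identities for $\varsigma$ and $\iota$ applied with $f=W^t\one/Z^t$, and coefficient propagation via \cref{lemma_walk} together with $\ip{\Neel}{\pi}=2^{-n/2}$ for the ratio; your extra remarks (vanishing entries of $\widehat W$, overcompleteness being harmless, positivity of the denominator) are correct details the paper leaves implicit.
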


\begin{proof}
  A reference trajectory $x$ has probability
  $\prod_{s<t}\widehat W_{x_sx_{s+1}}/(n!\,Z^t)$, so, as noted after~\cref{eq_weight},
  $\E[w(x)f(x_t)]=\one^\top W^tf/(n!\,Z^t)$ for every $f\colon\Sn\to\mathbb R$. The branch
  rule of \cref{alg_sample} gives
  $\E[\varsigma\,f(y_0)\mid x_t=\pi]=\sum_\tau Q_{\pi\tau}f(\tau)$ and
  $\E[\iota\,f(y_0)\mid x_t=\pi]=f(\pi)$. We apply these with
  $f(\tau)\coloneq\E[w(y)\mid y_0=\tau]=(W^t\one)_\tau/Z^t$, the expected weight of a second leg
  started at $\tau$. Conditioning on the first leg then gives~\cref{eq_EX}
  and~\cref{eq_EY}.

  For the ratio, \cref{lemma_walk}, its analogue
  $O_{ij}\ket\pi=\sum_\tau Q_{\pi\tau}\ket\tau$ for the observable, and
  $\ip{\Neel}{\pi}=2^{-n/2}$ show that the numerators of~\cref{eq_EX}
  and~\cref{eq_EY} are $2^{n/2}$ times
  \[
    \ip{\Neel}{\Tc^tO_{ij}\Tc^t\textstyle\sum_\pi\ket\pi}
    \quad\text{and}\quad
    \ip{\Neel}{\Tc^{2t}\textstyle\sum_\pi\ket\pi}
  \]
  respectively. Since $\sum_\pi\ket\pi$ is a multiple of $\LM$
  by~\cref{eq_LM}, their ratio is~\cref{eq_fw}.
\end{proof}

\subsection{Projection depth}\label{section_depth}

The deterministic error comes entirely from the finite projection depth.

\begin{lemma}[Projection error]\label{lemma_fwconv}
  Let $\varepsilon>0$, and let the depth $t$ satisfy
  \begin{equation}\label{eq_projectioncondition}
    \Bigl(\frac{\mu_1}{\mu_0}\Bigr)^t
    \le
    \frac{a_{\LM}^2\varepsilon}{24\sqrt{n+1}}
  \end{equation}
  Then
  $|\ev{\bS_i\!\cdot\bS_j}_t-\ev{\bS_i\!\cdot\bS_j}|\le\varepsilon$.
\end{lemma}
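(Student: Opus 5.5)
We work with the symmetric form of the forward-walking ratio~\cref{eq_fw},
\[
  \ev{\bS_i\!\cdot\bS_j}_t=\frac{\ip{\LM}{\Tc^tO_{ij}\Tc^t\LM}}{\ip{\LM}{\Tc^{2t}\LM}}.
\]
Since $\LM\in\cS$ and $\Tc$ preserves $\cS$, we decompose $\LM=a\Psi_0+\varphi$ with $\varphi\in\cS$, $\varphi\perp\Psi_0$, $|a|=a_{\LM}$ and $\norm\varphi\le1$. Fixing the phase of $\Psi_0$ so that $a>0$, we get
\[
  \Tc^t\LM=\mu_0^t\bigl(a\Psi_0+\eta\bigr),\qquad \eta\coloneq\mu_0^{-t}\Tc^t\varphi\perp\Psi_0 .
\]
Applying the contraction bound stated after~\cref{eq_ratio} to $\varphi$ gives $\norm\eta\le r\coloneq(\mu_1/\mu_0)^t$. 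The common factor $\mu_0^{2t}$ cancels, so
\[
  \ev{\bS_i\!\cdot\bS_j}_t=\frac{\ip{a\Psi_0+\eta}{O_{ij}(a\Psi_0+\eta)}}{\norm{a\Psi_0+\eta}^2}
  =\frac{a^2\ev{O}+2a\,\mathrm{Re}\ip{\Psi_0}{O\eta}+\ip{\eta}{O\eta}}{a^2+\norm\eta^2},
\]
where $O=O_{ij}$ and we used that $\Psi_0\perp\eta$ in the denominator.

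Next we compare this with $\ev{O}$. The difference is
\[
  \frac{2a\,\mathrm{Re}\ip{\Psi_0}{O\eta}+\ip{\eta}{O\eta}-\norm\eta^2\ev O}{a^2+\norm\eta^2}.
\]
Since $O$ has eigenvalues $-\tfrac34,\tfrac14$, we have $\norm{O}_{\opn}=\tfrac34$. The numerator is therefore at most $\tfrac32ar+\tfrac32r^2$ in absolute value, and the denominator is at least $a^2$. Using $a\le1$ and $r\le1$, the error is at most $\tfrac32(ar+r^2)/a^2\le 3r/a^2$. The hypothesis~\cref{eq_projectioncondition} gives $r\le a^2\varepsilon/(24\sqrt{n+1})$, so the error is at most $\varepsilon/(8\sqrt{n+1})\le\varepsilon$.

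The only points needing care are the following.
\begin{itemize}
  \item The contraction $\norm{\Tc^t\varphi}\le\mu_1^t\norm\varphi$ holds because $\varphi\in\cS$, $\varphi\perp\Psi_0$, and all singlet-sector eigenvalues of $\Tc$ are positive and at most $\mu_1$ on $\Psi_0^\perp\cap\cS$. This is already recorded before the lemma.
  \item The second equality in~\cref{eq_fw} is taken as given; it rests on the N\'eel overlap identity $\ip{\Neel}{\pi}=2^{-n/2}$.
\end{itemize}
If we instead worked directly with the N\'eel bra, we would need $\ip{\Neel}{\Psi_0}$, which is where the $\sqrt{n+1}$ would enter. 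Using the symmetric form avoids this, and the constant $24\sqrt{n+1}$ in~\cref{eq_projectioncondition} is then more than sufficient. This explains the slack: the constant was presumably chosen for that alternative route.
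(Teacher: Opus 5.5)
Your proof is correct and is essentially the paper's own argument. You use the same splitting $\LM=a_{\LM}\Psi_0+\varphi$ in the symmetric form of~\cref{eq_fw}, the same bound $\tfrac32a_{\LM}(\mu_1/\mu_0)^t+\tfrac32(\mu_1/\mu_0)^{2t}\le3(\mu_1/\mu_0)^t$ after subtracting $\ev{\bS_i\!\cdot\bS_j}$ times the denominator, and reach the same final estimate $\varepsilon/(8\sqrt{n+1})\le\varepsilon$, so the slack you point out is present in the paper's proof as well.
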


Writing $\LM=a_{\LM}\Psi_0+\varphi$ with $\varphi\perp\Psi_0$, the component $\varphi$ is
suppressed relative to $\Psi_0$ by the factor $(\mu_1/\mu_0)^t$, so the numerator and the
denominator of~\cref{eq_fw}, divided by $\mu_0^{2t}$, differ from
$a_{\LM}^2\ev{\bS_i\!\cdot\bS_j}$ and $a_{\LM}^2$ by $O((\mu_1/\mu_0)^t)$. The proof is in
\cref{appendix_fwconv}.

To turn this into a depth, note that $0\le W\le\widehat W$ entrywise and every row of
$\widehat W$ sums to $Z$, so $\mu_0\le Z$ by~\cref{eq_rhoW} below. Hence
$-\log(\mu_1/\mu_0)\ge1-\mu_1/\mu_0=4\gamma/\mu_0\ge4\gamma/Z$,
and~\cref{eq_projectioncondition} holds as soon as
\begin{equation}\label{eq_depth}
  t
  \ge
  \frac{Z}{4\gamma}
  \log\frac{24\sqrt{n+1}}{a_{\LM}^2\varepsilon}
\end{equation}
The gap thus controls the number of projection steps, while the Lieb-Mattis overlap
enters only logarithmically.

\subsection{Bounding the number of samples}\label{section_secondmoment}

We now control the statistical error of estimating $\ev{\bS_i\!\cdot\bS_j}_t$ from
finitely many trajectories. For $N$ independent samples $(X_k,Y_k)$ from
\cref{alg_sample}, with sample means $\bar X\coloneq N^{-1}\sum_{k=1}^NX_k$ and
$\bar Y\coloneq N^{-1}\sum_{k=1}^NY_k$, the estimator is
\begin{equation}\label{eq_estimator}
  \frac{\bar X}{\bar Y}
\end{equation}
Since $\E X/\E Y=\ev{\bS_i\!\cdot\bS_j}_t$ by \cref{lemma_fwid}, it remains to
determine how large $N$ must be for the empirical ratio to concentrate around this ratio
of expectations.

Two effects make this ratio expensive to estimate: a small Lieb-Mattis overlap makes $\E Y$
small, and the importance weights may vary from one trajectory to another. A decay of the
weights common to all trajectories cancels between the numerator and the denominator
of~\cref{eq_estimator}; only their spread costs samples, as in importance sampling
generally~\cite{ChatterjeeDiaconis2018}.

To measure this spread, define
\begin{equation}\label{eq_M}
  M_{\pi\tau}
  \coloneq
  \begin{cases}
    \displaystyle
    \frac{W_{\pi\tau}^2}{\widehat W_{\pi\tau}},
    & \widehat W_{\pi\tau}>0,\\[1ex]
    0,
    & \widehat W_{\pi\tau}=0.
  \end{cases}
\end{equation}
Squaring the weight of a reference transition $\pi\to\tau$ turns its sampling weight
$\widehat W_{\pi\tau}$ into $M_{\pi\tau}$, so powers of $M$ govern the second moments of
the importance weights, as powers of $W$ govern their first moments. This is the
second-moment matrix of the Forsythe-Leibler estimator.

Write $\varrho_W\coloneq\varrho(W)$ for the spectral radius of the nonnegative matrix $W$ (its Perron-Frobenius eigenvalue). Since every
row of $\widehat W$ sums to $Z$, we have $\varrho(\widehat W)=Z$. Moreover,
\begin{equation}\label{eq_rhoW}
  \varrho_W=\mu_0=c-4\lambda_0
\end{equation}
Indeed, by the Perron-Frobenius theorem there is a nonzero $x\ge0$ with
$W^\top x=\varrho_Wx$. By \cref{lemma_walk}, $\sum_\pi x_\pi\ket\pi$ is an
eigenvector of $\Tc$ with eigenvalue $\varrho_W$; it is nonzero because its overlap with
$\Neel$ is $2^{-n/2}\sum_\pi x_\pi>0$. Hence $\varrho_W\le\mu_0$. Conversely, the valence-bond
states span $\cS$, so $\Psi_0=\sum_\pi y_\pi\ket\pi$ for some $y$, and
\cref{lemma_walk} gives $\mu_0^m\Psi_0=\sum_\pi\bigl((W^\top)^my\bigr)_\pi\ket\pi$.
Since valence-bond states are unit vectors, $\mu_0^m\le\norm{(W^\top)^my}_1$ for every $m$, and Gelfand's
formula gives $\mu_0\le\varrho_W$.

The relative second moment of a forward-walking sample, which has two legs of length $t$,
is controlled by the following bound on the powers of $M$.

\begin{lemma}[Path-weight bound]\label{lemma_uncond_cert}
  Let $G$ be a connected balanced bipartite graph and $c>|E|$, and let
  $b_0\coloneq4n/(c+|E|)$, so that $b_0<2n/|E|$. For every $t\ge0$,
  \[
    \one^\top M^{2t}\one
    \le
    n!\,
    e^{2b_0t}
    \Bigl(\frac{\varrho_W^2}{Z}\Bigr)^{2t}
  \]
\end{lemma}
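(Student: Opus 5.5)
The plan is to use the crude entrywise comparison $M\le W$ together with the row sums of $W$ from \cref{lemma_walk}, and then to check that the factor $e^{2b_0t}$ is large enough to absorb the loss from replacing $\varrho_W^2/Z$ by $Z$. Concretely, I aim for two inequalities: $M\one\le Z\one$ entrywise, and
\[
  Z\le e^{b_0}\,\frac{\varrho_W^2}{Z}.
\]
Iterating the first inequality $2t$ times, which is legitimate because $M\ge0$ preserves entrywise order, gives $\one^\top M^{2t}\one\le Z^{2t}\,\one^\top\one=n!\,Z^{2t}$. The second inequality then turns this into the claimed bound.

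\emph{Step 1: $M\le W$ entrywise.} Off the diagonal, $\widehat W_{\pi\tau}=W_{\pi\tau}$, so $M_{\pi\tau}=W_{\pi\tau}$ (both vanish together). On the diagonal, \cref{lemma_hatW} gives $\widehat W_{\pi\pi}\ge W_{\pi\pi}\ge c-|E|>0$, so $M_{\pi\pi}=W_{\pi\pi}^2/\widehat W_{\pi\pi}\le W_{\pi\pi}$. Consequently $M\one\le W\one$, and by \cref{lemma_walk} each entry of $W\one$ equals $c+|E|+2\eG(\pi)\le Z$.

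\emph{Step 2: lower bound on $\varrho_W$.} For a nonnegative matrix, the Perron eigenvalue is at least the minimum row sum. By \cref{lemma_walk}, every row sum of $W$ is at least $c+|E|=Z-2n$, so $\varrho_W\ge Z-2n$. Equivalently, one can test $W\one\ge(Z-2n)\one$ against a positive left Perron vector of $W^\top$.

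\emph{Step 3: absorbing the loss with $e^{b_0}$.} It remains to show $Z^2e^{-b_0}\le(Z-2n)^2$, that is,
\[
  2\log\Bigl(1-\frac{2n}{Z}\Bigr)\ge-b_0=-\frac{4n}{Z-2n}.
\]
This follows from $\log(1-x)\ge -x/(1-x)$ applied with $x=2n/Z<1$, since $x/(1-x)=2n/(Z-2n)$. Combining the three steps gives
\[
  \one^\top M^{2t}\one\le n!\,Z^{2t}\le n!\,e^{2b_0t}\Bigl(\frac{\varrho_W^2}{Z}\Bigr)^{2t}.
\]
The side remark $b_0<2n/|E|$ is immediate from $c>|E|$.

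The only delicate point is Step 3. One has to see that the crude estimate $\varrho(M)\le Z$ costs only a factor $(Z/\varrho_W)^2$ per step, and that this factor is at most $e^{b_0}$ precisely because all row-sum defects of $W$ are bounded by $2n$. I first considered a sharper argument using the Perron vector of $W$, which would give $M\le$ a diagonal perturbation of $W$. I expect that route to fail, because converting it into a bound on $\one^\top M^{2t}\one$ introduces the ratio of the extreme Perron entries, which I cannot control. The row-sum argument above avoids this issue entirely.
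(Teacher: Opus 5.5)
Your proof is correct and is essentially the paper's argument. It uses the same three ingredients: the row-sum bound $\one^\top M^{2t}\one\le n!\,Z^{2t}$, the minimum-row-sum bound $\varrho_W\ge c+|E|$, and the estimate $(Z/(c+|E|))^{4t}=(1+2n/(c+|E|))^{4t}\le e^{2b_0t}$, which is your logarithm inequality in another form; your comparison $M\le W$ in place of the paper's $M\le\widehat W$ changes nothing.

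There is one small slip, in your parenthetical alternative for Step~2. The vector to test $W\one\ge(Z-2n)\one$ against should be a left Perron vector of $W$ (equivalently, a Perron vector of $W^\top$), not of $W^\top$. It also need only be nonnegative and nonzero, not positive.
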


\begin{proof}
  Since $0\le W\le\widehat W$ entrywise, $M_{\pi\tau}\le\widehat W_{\pi\tau}$ for all
  $\pi,\tau$, and every row of $\widehat W$ sums to $Z$; hence
  $\one^\top M^{2t}\one\le n!\,Z^{2t}$. The spectral radius of a nonnegative matrix is at
  least its smallest row sum~\cite[Thm.~8.1.22]{HornJohnson2013}, so $\varrho_W\ge c+|E|$,
  and $(Z/(c+|E|))^{4t}=(1+2n/(c+|E|))^{4t}\le e^{2b_0t}$. Finally, $c+|E|>2|E|$ gives
  $b_0<2n/|E|$.
\end{proof}

\begin{lemma}[Relative second moment]\label{lemma_secondmoment}
  Let $t\ge0$ and $ij\in E$, and let $b_0$ be as in \cref{lemma_uncond_cert}. The
  output~\cref{eq_XY} of \cref{alg_sample} satisfies
  \begin{equation}\label{eq_relative_second}
    \max\{\E[X^2],\E[Y^2]\}
    \le
    \frac{3(1+Z)}{a_{\LM}^4}
    e^{2b_0t}
    (\E Y)^2
  \end{equation}
\end{lemma}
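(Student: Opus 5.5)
The plan is to express both second moments through powers of the matrix $M$ of~\cref{eq_M}, mirroring the proof of \cref{lemma_fwid}. Then I will bound $\one^\top M^{2t}\one$ from above by \cref{lemma_uncond_cert} and bound $\E Y$ from below by the Lieb-Mattis overlap.

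\emph{Step 1: second moment of one leg.} A reference trajectory $x$ has probability $\prod_{s<t}\widehat W_{x_sx_{s+1}}/(n!\,Z^t)$. Squaring the importance weight~\cref{eq_weight} turns each factor into $M_{x_sx_{s+1}}$, so
\[
  \E[w(x)^2g(x_t)]=\frac{\one^\top M^tg}{n!\,Z^t}
  \quad\text{for every } g\ge0,
  \qquad
  \E[w(y)^2\mid y_0=\tau]=\frac{(M^t\one)_\tau}{Z^t}\eqcolon h(\tau).
\]

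\emph{Step 2: the midpoint.} Conditioned on $x_t=\pi$, the branch rule of \cref{alg_sample} gives the following bounds.
\begin{itemize}
  \item For $Y$: $\E[\iota^2h(y_0)\mid\pi]\le 3h(\pi)$, since $\iota^2=9$ with probability $\tfrac13$ when $\pi(i)\ne j$, and $\iota^2=1$ otherwise.
  \item For $X$: $\E[\varsigma^2h(y_0)\mid\pi]\le\tfrac{9}{16}\bigl(h(\pi)+h(\tau_\pi)\bigr)$, where $\tau_\pi=\pi\circ(i\ \pi^{-1}(j))$.
\end{itemize}
The diagonal contributions therefore sum to at most $3\,\one^\top M^{2t}\one/(n!\,Z^{2t})$. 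For the reconnection term, the edge $ij$ itself induces $\pi\to\tau_\pi$, so $W_{\pi\tau_\pi}=\widehat W_{\pi\tau_\pi}\ge2$ and hence $M_{\pi\tau_\pi}\ge2$. This gives
\[
  \sum_\pi (\one^\top M^t)_\pi\,(M^t\one)_{\tau_\pi}
  \le\tfrac12\,\one^\top M^{2t+1}\one
  \le\tfrac Z2\,\one^\top M^{2t}\one,
\]
where the last inequality uses that every row of $M\le\widehat W$ sums to at most $Z$. Combining the two contributions yields
\[
  \max\{\E X^2,\E Y^2\}\le\frac{3(1+Z)\,\one^\top M^{2t}\one}{n!\,Z^{2t}}.
\]
Applying \cref{lemma_uncond_cert} then bounds this by
\[
  3(1+Z)\,e^{2b_0t}\,\frac{\varrho_W^{4t}}{Z^{4t}}.
\]

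\emph{Step 3: lower bound on $\E Y$.} By~\cref{eq_EY} and the Néel computation in \cref{lemma_fwid},
\[
  \one^\top W^{2t}\one=n!\sqrt{n+1}\,\ip{\Neel}{\Tc^{2t}\LM}.
\]
The key claim is that the projection of $\Neel$ onto $\cS$ equals $\ip{\LM}{\Neel}\,\LM=\LM/\sqrt{n+1}$. I would justify this as follows: that projection is a singlet invariant under $\Sn\times\Sn$ permuting each sublattice, and $\LM$ is the unique such singlet up to scale, being the nondegenerate ground state of the complete bipartite model. Alternatively, the claim can be read off from the equality of the two ratios in~\cref{eq_fw}. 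Since $\Tc$ preserves $\cS$, we get
\[
  \ip{\Neel}{\Tc^{2t}\LM}=\frac{\ip{\LM}{\Tc^{2t}\LM}}{\sqrt{n+1}}\ge\frac{a_{\LM}^2\mu_0^{2t}}{\sqrt{n+1}},
\]
using positivity of all $\mu_k$ on $\cS$. With $\mu_0=\varrho_W$ from~\cref{eq_rhoW}, this gives $\E Y\ge a_{\LM}^2\varrho_W^{2t}/Z^{2t}$. Squaring and combining with Step 2 yields~\cref{eq_relative_second}.

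\emph{Expected obstacles.} The delicate points are the reconnection branch at the midpoint, where the cost of $M_{\pi\tau_\pi}\ge2$ produces the $(1+Z)$ factor, and the rigorous justification that the singlet projection of $\Neel$ is proportional to $\LM$.
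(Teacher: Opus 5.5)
Your proposal is correct and follows the paper's own proof. The steps match in order: the midpoint bound by $3(I+M)$ using $M_{\pi\tau_\pi}=W_{\pi\tau_\pi}\ge2$ and $M\one\le Z\one$, then \cref{lemma_uncond_cert} for $\one^\top M^{2t}\one$, then $\E Y=\ip{\LM}{\Tc^{2t}\LM}/Z^{2t}\ge a_{\LM}^2(\varrho_W/Z)^{2t}$ from the fact that $\Neel$ has singlet component $\LM/\sqrt{n+1}$. Your $\Sn\times\Sn$-invariance argument is the right justification for that last fact: an invariant vector lies in $V_{n/2}\otimes V_{n/2}$, which contains a unique singlet. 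Your alternative of reading it off from \cref{eq_fw} is not a justification, since that equality is itself a consequence of this fact.
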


Squaring the importance weights turns the powers of $W$ that govern the first moments
into powers of $M$, the random midpoint choice increases the second-moment bound by at most the matrix factor $3(I+M)$, and
\cref{lemma_uncond_cert} then bounds both second moments by
$3(1+Z)e^{2b_0t}(\varrho_W/Z)^{4t}$. Since $\E Y\ge a_{\LM}^2(\varrho_W/Z)^{2t}$
by~\cref{eq_EY} and~\cref{eq_rhoW}, the factor $(\varrho_W/Z)^{4t}$ cancels. The proof is in
\cref{appendix_secondmoment}.

\subsection{Estimator guarantee}\label{section_reduction}

The following theorem makes explicit how the gap, the overlap and the path weights enter.

\begin{theorem}[Estimator reduction]\label{thm_reduction}
  Let $G$ be a connected balanced bipartite graph with $n$ vertices on each side, let
  $ij\in E$ and $c>|E|$, and let $Z$ be as in~\cref{eq_hatW}. Suppose that
  $\gamma\ge\gamma_0>0$ and $a_{\LM}\ge a_{\LM,0}>0$. There are universal constants
  $C_1,C_2>0$ such that the following holds for all $\varepsilon,\delta\in(0,1)$. Let
  \begin{equation}\label{eq_tchoice}
    t
    \ge
    C_1\frac{Z}{\gamma_0}
    \log\frac{\sqrt{n+1}}{a_{\LM,0}^2\,\varepsilon}
  \end{equation}
  let $b_0$ be as in \cref{lemma_uncond_cert}, and let
  \begin{equation}\label{eq_Nchoice}
    N
    \ge
    C_2
    \frac{1+Z}{a_{\LM,0}^4\,\varepsilon^2\delta}
    e^{2b_0t}
  \end{equation}
  Then, with probability at least $1-\delta$, the estimator~\cref{eq_estimator} built
  from $N$ independent runs of \cref{alg_sample} at depth $t$ satisfies
  \[
    \Bigl|\frac{\bar X}{\bar Y}-\ev{\bS_i\!\cdot\bS_j}\Bigr|
    \le\varepsilon
  \]
  After $O(n^2)$ preprocessing, it is computed in time $O(Nt)$.
\end{theorem}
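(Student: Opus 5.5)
The plan is to split the error into a deterministic projection bias and a statistical ratio error, each at most $\varepsilon/2$.

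\emph{Bias.} Since $\gamma\ge\gamma_0$ and $a_{\LM}\ge a_{\LM,0}$, the argument leading to \cref{eq_depth} (which uses $\mu_0\le Z$) shows that \cref{eq_projectioncondition} holds with $\varepsilon/2$ in place of $\varepsilon$ once
\[
t\ge \frac{Z}{4\gamma_0}\log\frac{48\sqrt{n+1}}{a_{\LM,0}^2\varepsilon}.
\]
Since $\varepsilon<1$ and $a_{\LM,0}\le1$, the logarithm in \cref{eq_tchoice} is at least $\log\sqrt2$, so $\log(48u)\le C\log u$ for $u\ge\sqrt2$ and a universal constant $C$. Hence a suitable universal $C_1$ in \cref{eq_tchoice} implies this depth. \cref{lemma_fwconv} then gives
\[
|\ev{\bS_i\!\cdot\bS_j}_t-\ev{\bS_i\!\cdot\bS_j}|\le\varepsilon/2.
\]

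\emph{Statistical error.} By \cref{lemma_fwid}, $\E X/\E Y=\ev{\bS_i\!\cdot\bS_j}_t=:r$, and $|r|\le 3/4$ after the bias step (or directly, since $|r|\le \tfrac34+\tfrac12$). Write $\bar X/\bar Y-r=(\bar X-r\bar Y)/\bar Y$. I apply Chebyshev to $\bar X-\E X$ and to $\bar Y-\E Y$ separately. By \cref{lemma_secondmoment}, both variances are at most
\[
K(\E Y)^2,\qquad K\coloneq 3(1+Z)e^{2b_0t}/a_{\LM}^4\le 3(1+Z)e^{2b_0t}/a_{\LM,0}^4 .
\]
With probability at least $1-\delta$, both $|\bar X-\E X|$ and $|\bar Y-\E Y|$ are at most $\eta\,\E Y$, where $\eta\coloneq\sqrt{2K/(N\delta)}$. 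On this event,
\[
|\bar X-r\bar Y|\le|\bar X-\E X|+|r||\bar Y-\E Y|\le 2\eta\,\E Y.
\]
Moreover $\bar Y\ge(1-\eta)\E Y\ge\E Y/2$ when $\eta\le1/2$. Hence $|\bar X/\bar Y-r|\le4\eta$. Choosing $\eta=\varepsilon/8$ gives error $\varepsilon/2$, and this requires $N\ge 128K/(\varepsilon^2\delta)$. That is exactly \cref{eq_Nchoice} with $C_2=384$. Note that $\E Y>0$, since $\E Y\ge a_{\LM}^2(\varrho_W/Z)^{2t}$, so the ratio is well defined on this event.

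\emph{Running time.} Preprocessing stores the edge list (for uniform or weighted edge sampling) and the adjacency test, which takes $O(n^2)$. Each run of \cref{alg_sample} draws a uniform permutation in $O(n)$ and performs $2t$ walk steps, each in $O(1)$ while maintaining $\pi$, $\pi^{-1}$ and $\eG(\pi)$. It also multiplies the holding factors $W_{\pi\pi}/\widehat W_{\pi\pi}$, which depend only on $\eG$, and does an $O(1)$ midpoint branch. The total is $O(N(t+n))$. Since $t\ge C_1Z/\gamma_0\cdot\Omega(1)$ and $\gamma_0\le\gamma\le$ the spectral range, which is $O(|E|)=O(Z)$, we have $n=O(t)$, absorbing constants into $C_1$. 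This gives $O(Nt)$.

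The main subtlety is only the ratio concentration. Using two-sided Chebyshev with the relative second-moment bound avoids needing independence between $\bar X$ and $\bar Y$, and $\E Y>0$ guarantees that the bound on $\bar Y$ is meaningful. Everything else is a direct assembly of \cref{lemma_fwconv,lemma_fwid,lemma_secondmoment}.
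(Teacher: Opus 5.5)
Your argument follows the paper's proof essentially step for step. It has the same split into a depth-controlled bias (the projection-error lemma at $\varepsilon/2$), the same two Chebyshev bounds with a union bound driven by the relative second moment, the same linearization $(\bar X-r\bar Y)/\bar Y$, and the same constant $C_2=384$. Two justifications need repair, though neither changes the structure.

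\emph{The bound on $|r|$.} The bound $|r|\le\tfrac34$ does not follow from what you cite. The bias step gives only $|r|\le\tfrac34+\tfrac{\varepsilon}{2}$, and your fallback gives $|r|\le\tfrac54$. With $|r|\le\tfrac54$ your chain yields $|\bar X-r\bar Y|\le\tfrac94\eta\,\E Y$, not $2\eta\,\E Y$. Then $\bar Y\ge\tfrac12\E Y$ gives $9\varepsilon/16>\varepsilon/2$, so you would need the sharper $\bar Y\ge(1-\eta)\E Y$ to stay within budget. The clean reason, which the paper uses, is that $\Tc$ is self-adjoint. Hence the second expression for the forward-walking ratio is the Rayleigh quotient $\langle v,O_{ij}v\rangle/\langle v,v\rangle$ with $v=\Tc^t\LM$. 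This gives $|r|\le\|O_{ij}\|=\tfrac34$ independently of the bias step.

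\emph{The running time.} From $\gamma_0\le\gamma=O(|E|)=O(Z)$ you only get $Z/\gamma_0=\Omega(1)$, hence $t=\Omega(\log n)$, not $n=O(t)$. Absorbing the $O(n)$ initial uniform draw (and the initial computation of $\eG(x_0)$) into $O(t)$ would require $\gamma_0=O(Z/n)$, which is not available for general $c$ and $\gamma_0$. The paper's proof simply counts the $2t$ walk steps per run and does not charge the initialization. The honest general bound is $O(N(n+t))$. This equals $O(Nt)$ in the regime where the theorem is actually applied: there $c=n^2+1$ and $\gamma_0=pn/4$, so $Z/\gamma_0\ge 4n/p$ and $t$ is at least of order $n$.
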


\begin{proof}
  We split the error into a projection part and a statistical part, each at most
  $\varepsilon/2$.

  \emph{Projection.} Apply~\cref{eq_depth} with $\varepsilon/2$ in place of
  $\varepsilon$. We have $\gamma\ge\gamma_0$, $a_{\LM}\ge a_{\LM,0}$ and
  $\sqrt{n+1}/(a_{\LM,0}^2\varepsilon)\ge\sqrt2$, so a suitable universal $C_1$
  in~\cref{eq_tchoice} makes \cref{lemma_fwconv} give
  $|\ev{\bS_i\!\cdot\bS_j}_t-\ev{\bS_i\!\cdot\bS_j}|\le\varepsilon/2$.

  \emph{Sampling.} By \cref{lemma_secondmoment} and $a_{\LM}\ge a_{\LM,0}$, both
  $\E[X^2]$ and $\E[Y^2]$ are at most $\Lambda(\E Y)^2$, where
  \[
    \Lambda
    \coloneq
    \frac{3(1+Z)}{a_{\LM,0}^4}
    e^{2b_0t}
  \]
  Let $\eta\coloneq\varepsilon/8$. By Chebyshev's inequality and a union bound, with
  probability at least $1-2\Lambda/(N\eta^2)$ both $|\bar X-\E X|$ and $|\bar Y-\E Y|$
  are at most $\eta\,\E Y$. The second expression in~\cref{eq_fw} is a Rayleigh
  quotient of $O_{ij}$, so $|\E X/\E Y|\le\norm{O_{ij}}=\tfrac34$ by
  \cref{lemma_fwid}. On this event, linearizing the ratio as usual for ratio
  estimators,
  \[
    \Bigl|\frac{\bar X}{\bar Y}-\frac{\E X}{\E Y}\Bigr|
    =
    \frac{\bigl|(\bar X-\E X)-(\E X/\E Y)(\bar Y-\E Y)\bigr|}{\bar Y}
    \le
    \frac{7\eta/4}{1-\eta}
    \le
    2\eta
    \le
    \frac{\varepsilon}{2}
  \]
  The failure probability $2\Lambda/(N\eta^2)=128\Lambda/(N\varepsilon^2)$ is at most
  $\delta$ under~\cref{eq_Nchoice} with $C_2\coloneq384$. Since $\E X/\E Y=\ev{\bS_i\!\cdot\bS_j}_t$
  by \cref{lemma_fwid}, the two parts combine to the claim.

  \emph{Running time.} Each run of \cref{alg_sample} takes $2t$ steps of the
  reference walk, each in $O(1)$ time after $O(n^2)$ preprocessing (\cref{section_exact}).
\end{proof}

By \cref{lemma_uncond_cert}, the bounds of
\cref{thm_reduction} depend only on the graph size, the gap and the Lieb-Mattis
overlap. Since $b_0=4n/(c+|E|)$ and $Z\le2(c+|E|)$, the depth~\cref{eq_tchoice} gives
$2b_0t=O\bigl((n/\gamma)\log(\sqrt{n+1}/(a_{\LM}^2\varepsilon))\bigr)$,
so~\cref{eq_Nchoice} becomes
\[
  N
  =
  O\biggl(
    \frac{1+Z}{\delta}
    \Bigl(\frac{\sqrt{n+1}}{a_{\LM}^2\,\varepsilon}\Bigr)^{2+O(n/\gamma)}
  \biggr)
\]
The shift $c$ cancels from the exponent: a larger $c$ lengthens the walk in proportion
to $Z$ but lowers $b_0$ in proportion to $1/(c+|E|)$. What accumulates over the projection
is the gap per vertex, $\gamma/n$. For polynomial $c$ the cost is therefore polynomial
whenever $n/\gamma=O(1)$, even for an inverse-polynomial overlap $a_{\LM}=n^{-k}$, at the
price of a factor $n^{O(kn/\gamma)}$ in $N$. On $\mathcal G(p,C)$,
\cref{thm_spectral} gives $n/\gamma\le4/p$ and a constant lower bound on $a_{\LM}$.

\section{Representation Theoretic Decomposition of the Hamiltonian}\label{sec:rep-theory}

In this section we initiate the analysis needed to lower bound the spectral gap and the ground state overlap with the Lieb--Mattis state. We do this by decomposing our Hamiltonian using the representation theory of the symmetric group and the \(\fraksu(2)\) Lie algebra. Intuitively, if our interaction graph were the complete balanced bipartite graph, then we could diagonalize our Hamiltonian and analyze its spectrum in the singlet sector exactly. More than this, the exact solution tells us where the eigenvectors with energy close to the ground state energy live. This provides a starting point for approximating the spectrum of the Hamiltonian with a dense bipartite interaction graph. In particular, using the splitting \(A=\hatp\one\one^{\top}+\xi\), with \(\hatp:=\frac1{n^2}\sum_{ij\in E}A_{ij}\), into a mean term and a centered fluctuation term, and recalling the notation \(\bS_L:=\sum_{i\in L}\bS_i\) and \(\bS_R:=\sum_{j\in R}\bS_j\), we have the following decomposition.
\begin{equation}\label{eq_Hsplit}
H_G=\Hmf+\Hxi,
\qquad
\Hmf:=\hatp\bS_L\cdot\bS_R,
\qquad
\Hxi:=\sum_{i\in L}\sum_{j\in R}\xi_{ij}\bS_i\cdot\bS_j.
\end{equation}

We call \(\Hmf\) the \emph{mean-field Hamiltonian}, which is exactly solvable and diagonalizable, and \(\Hxi\) the \emph{fluctuation operator}, which contains the nonuniform structure of the interaction graph.

Comparing directly to the mean-field Hamiltonian will only get us so far. Instead, we will refine this comparison using the singular-value decomposition of the centered adjacency matrix \(\xi\). This leads to a model describing the graph-dependent fluctuations around the mean-field Hamiltonian and, ultimately, to a bosonic model that can be analyzed explicitly by a Bogoliubov transformation.

\subsection{Representations Across The Pair}\label{ss:pairmodel}

We start the decomposition by recalling the Schur--Weyl duality decomposition of \cref{eq_schur_weyl_dual_decomp}. By Lieb--Mattis \cite{LiebMattis1962}, we know that the ground state must lie in the singlet sector, \(\cS\). On the surface, the global Schur--Weyl decomposition tells us very little beyond the fact that the total spin is zero. However, if we use the fact that our model is bipartite, then we can apply the Schur--Weyl decomposition separately to both sides of the bipartition. Let \(n=2m+r\) for \(r\in\{0,1\}\). Then we have the following decomposition, where each direct sum is taken in integer increments of the spin variable.

\[
(\mathbb{C}^2)^{\otimes n} \otimes (\mathbb{C}^2)^{\otimes n} \cong \bigoplus_{s_1 = r/2}^{n/2} \bigoplus_{s_2 = r/2}^{n/2} V_{s_1} \otimes \operatorname{Sp}_{(n/2+s_1,n/2-s_1)} \otimes V_{s_2} \otimes \operatorname{Sp}_{(n/2+s_2,n/2-s_2)}.
\]

The Clebsch--Gordan decomposition \cite{hall_lie_2015,sakurai_modern_2020} tells us that a total spin-\(0\) subspace can only occur when \(s_1=s_2\). Moreover, within each \(V_s\otimes V_s\) the zero-spin subspace, \(V_0\), occurs with multiplicity one and has dimension one and the embedding \(V_0\to V_s\otimes V_s\) is explicit, \(1\mapsto\sum_{q=-s}^{s}\frac{(-1)^{s-q}}{\sqrt{2s+1}}\,\ket{s,q}\ket{s,-q}\), where \(\ket{s,q}\) are the weight vectors of \(V_s\). This gives us a new identification of \(\cS\):
\begin{align}
\cS \cong \bigoplus_{s = r/2}^{n/2} \operatorname{Sp}_{(n/2+s,n/2-s)} \otimes \operatorname{Sp}_{(n/2+s,n/2-s)}.\label{eq_first_cU_iso}
\end{align}

Fix a spin \(s\) and basis elements \(\alpha\in\operatorname{Sp}_{(n/2+s,n/2-s)}\) and \(\beta\in\operatorname{Sp}_{(n/2+s,n/2-s)}\). We can identify the corresponding copy of \(V_s\otimes V_s\) with the weight basis vectors \(\ket{\alpha;s,q_1}\otimes\ket{\beta;s,q_2}\coloneq\ket{s,q_1}\otimes\alpha\otimes\ket{s,q_2}\otimes\beta\). In particular, the above isomorphism, restricted to the sublattice spin \(s\) sector, is given by \(\operatorname{Sp}_{(n/2+s,n/2-s)}\otimes\operatorname{Sp}_{(n/2+s,n/2-s)}\to\cS,\ \alpha\otimes\beta\mapsto\sum_{q=-s}^{s}\frac{(-1)^{s-q}}{\sqrt{2s+1}}\,\ket{\alpha;s,q}\ket{\beta;s,-q}\). We will expand on this more in the next subsection.

The exact spectrum of the mean-field Hamiltonian in the singlet sector, \(\Hmf |_\cS\), then follows by a simple expansion of the total-spin Casimir, together with the facts that the total-spin Casimir vanishes in the singlet sector and that the left and right sublattice spin Casimirs agree there:
\begin{align}
\bS_{\mathrm{tot}}^2 &= (\bS_L + \bS_R)^2 = \bS_L^2 + \bS_R^2 + 2\bS_L \cdot \bS_R, \nonumber\\
\bS_L \cdot \bS_R &= \frac{1}{2}\left(\bS_{\mathrm{tot}}^2 - \bS_L^2 - \bS_R^2\right), \nonumber\\
\bS_L \cdot \bS_R \ket{\psi} &= - \bS_L^2 \ket{\psi} = - \bS_R^2 \ket{\psi}, \qquad \ket{\psi} \in \cS. \label{eq_Casimir_rewrite}
\end{align}

In particular, on the sublattice spin-\(s\) block of the singlet sector, \(\bS_L^2=\bS_R^2=s(s+1)I\), and thus \(\Hmf=\sum_{s = r/2}^{n/2}-\hatp s(s+1) P_s\), where \(P_s\) is the projector onto the sublattice spin \(s\) sector. From this we can tell that the ground state of \(\Hmf\) in the singlet sector occurs at spin \(s=n/2\). Since \(\operatorname{Sp}_{(n,0)}\) is one-dimensional, this ground state is unique and is the Lieb-Mattis state from \cref{def_LM}.

\subsection{Harmonic Analysis}\label{ss:harmonics}

It is left to decompose the \emph{fluctuation operator}, \(\Hxi\). We do this with tighter control over how \(\cS\) embeds into pairs of spins.

Instead of appealing directly to Schur--Weyl duality, we might decompose our space of qubits in the left partition, \(L=[n]\), into Young modules, or equivalently into excitation spaces,
\begin{align}
(\mathbb{C}^2)^{\otimes n}
\cong
\bigoplus_{k=0}^{n} \mathbb{C}^{\binom{[n]}{k}}.
\label{eq_excitation_decomp}
\end{align}

This is given by the isomorphism \(\cV^L:\bigoplus_{k=0}^{n}\C^{\binom{[n]}{k}}\to(\mathbb{C}^2)^{\otimes n}\), which, for a fixed level-\(k\) set-function, \(f\in\C^{\binom{[n]}{k}}\), maps \(f\mapsto\sum_{S\in\binom{[n]}{k}}f(S)\bigotimes_{i=1}^n\ket{\mathbbm{1}_{S}(i)}\). In particular, subsets specify where the ones are in the standard basis vectors.

We next understand the action of \(\mathfrak{su}(2)\), or equivalently its complexification \(\fraksl(2,\C)\), on \(\bigoplus_{k=0}^{n}\C^{\binom{[n]}{k}}\). To do this we pull back two separate ways \(\mathfrak{su}(2)\) can act on the left sublattice: on the individual qubit level, through \(\bS_i\), and on the level of the total spin of the left sublattice, through \(\bS_L\). To this end, we start with the following definitions.

\begin{definition}[Actions on qubits in the left sublattice]\label{def_left_sublat_qubit_actions}
  We pull back the actions of \(S_i^z\), \(S_i^+\), and \(S_i^-\) under the isomorphism \(\cV^L\) and define the following operators:
  \begin{enumerate}
    \item \(\Pi_i^L \coloneq (\cV^L)^{-1}\left(\frac{1}{2}I-S_i^z\right)\cV^L \in \cL\left(\bigoplus_{k=0}^{n}\C^{\binom{[n]}{k}}\right)\).
    \item \(\partial_i^L \coloneq (\cV^L)^{-1}S_i^+\cV^L \in \cL\left(\bigoplus_{k=0}^{n}\C^{\binom{[n]}{k}}\right)\).
    \item \(\left(\partial_i^L\right)^\dagger \coloneq (\cV^L)^{-1}S_i^-\cV^L \in \cL\left(\bigoplus_{k=0}^{n}\C^{\binom{[n]}{k}}\right)\).
  \end{enumerate}
\end{definition}

To understand these definitions, we first observe that \(\frac{1}{2}I-S_i^z\) is the projector onto standard basis states with the \(i\)th qubit equal to \(1\). Thus, under the identification \(\cV^L\), \(\Pi_i^L\) is the projector onto set-functions supported on sets \(S\) containing \(i\). In particular, \((\Pi_i^L\psi)(S)=\psi(S)\) if \(i\in S\), and \(0\) otherwise. It also follows that \(S_L^z=\sum_{i\in L}S_i^z\) is block diagonal with respect to the decomposition in \cref{eq_excitation_decomp}; in particular, restricted to level \(k\), \(S_L^z|_{\cV^L\mathbb{C}^{\binom{[n]}{k}}}=(n/2-k)I\).

Next, consider the raising operator \(S_i^+\). Let \(\psi\in\C^{\binom{[n]}{k}}\). Then \(\partial_i^L\psi\in\C^{\binom{[n]}{k-1}}\), and in particular \((\partial_i^L\psi)(S)=\psi(S\cup\{i\})\) if \(i\notin S\), and \(0\) otherwise. The lowering operator is the adjoint of the raising operator, \(S_i^-=(S_i^+)^\dagger\), and correspondingly \(\left(\partial_i^L\right)^\dagger\) is the adjoint of \(\partial_i^L\). For \(\psi\in\C^{\binom{[n]}{k}}\), we have \(\left(\partial_i^L\right)^\dagger\psi\in\C^{\binom{[n]}{k+1}}\), with \(\left(\left(\partial_i^L\right)^\dagger\psi\right)(S)=\psi(S\setminus\{i\})\) if \(i\in S\), and \(0\) otherwise.

When considering the action of the total raising operator on the left sublattice, \(S_L^+=\sum_{i=1}^nS_i^+\), we will observe that it moves between excitation spaces in a highly symmetric way. Under the set-function representation, its pullback is the operator commonly called the \emph{down operator} in the harmonic analysis of set functions. Harmonic analysis of set functions is well studied, with applications in many fields \cite{stanley_differential_1988,godsil_association_2010}. In particular, this approach has recently been used to study the Quantum Max-Cut Hamiltonian \cite{basu_quantum_2026}, although not on bipartite graphs restricted to the singlet sector.

\begin{definition}[Left sublattice down/up operators]
We denote the \emph{left sublattice down operator} by \(\partial^{L}\coloneq(\cV^L)^{-1}S_L^+\cV^L\). It is defined on set functions in the following way (for \(\psi\in\C^{\binom{[n]}{k}}\), \(S'\in\binom{[n]}{k-1}\)):
\begin{align}
  \partial^{L}\big|_{\mathbb{C}^{\binom{[n]}{k}}}
  :
  \C^{\binom{[n]}{k}}
  &\to
  \C^{\binom{[n]}{k-1}},
  \\
  (\partial^{L}\psi)(S')
  &=
  \sum_{i\notin S'}\psi(S'\cup\{i\}).\nonumber
\end{align}

We denote the \emph{left sublattice up operator} by \(\left(\partial^{L}\right)^\dagger\coloneq(\cV^L)^{-1}S_L^-\cV^L\). It is defined on set functions in the following way (for \(\psi\in\C^{\binom{[n]}{k}}\), \(S'\in\binom{[n]}{k+1}\)):
\begin{align}
    \left(\partial^{L}\right)^\dagger
    \Big|_{\mathbb{C}^{\binom{[n]}{k}}}
    :
    \C^{\binom{[n]}{k}}
    &\to
    \C^{\binom{[n]}{k+1}},
    \\
    \left(\left(\partial^{L}\right)^\dagger\psi\right)(S')
    &=
    \sum_{i\in S'}\psi(S'\setminus\{i\}).\nonumber
\end{align}
\end{definition}

\begin{definition}[Harmonic functions]\label{def_left_harmonic_function}
A level-\(k\) set-function, \(\psi\in\C^{\binom{[n]}{k}}\), is a \emph{harmonic function} if \(\partial^{L}\psi=0\). Let \(\cH_k^L\coloneq\ker\left(\partial^L|_{\mathbb{C}^{\binom{[n]}{k}}}\right)\) be the space of left sublattice level-\(k\) harmonic set-functions.
\end{definition}

The down operator has the special property that, for \(k\leq n/2\), \(\cH_k^L\cong\operatorname{Sp}_{(n-k,k)}\) \cite{stanley_differential_1988}. To see this, recall from Young's rule, \cref{eq_youngs_rule}, that \(\C^{\binom{[n]}{k}}\cong\bigoplus_{\ell=0}^k\operatorname{Sp}_{(n-\ell,\ell)}\). The level-\(k\) excitation subspace has \(S_L^z\)-weight \(q=n/2-k\). Within the copy of \(\operatorname{Sp}_{(n-\ell,\ell)}\), Schur--Weyl duality identifies the corresponding spin as \(s=n/2-\ell\). The raising operator annihilates this weight space precisely when \(q=s\), or equivalently when \(\ell=k\), i.e., it is a space of highest-weight vectors. Similarly, when \(k>n/2\), we have \(\cH_k^L=0\).

We next consider the right sublattice. Inspired by the antisymmetric/dual nature of the Heisenberg interaction/Quantum Max-Cut, we will identify the right sublattice with the lowest-weight vectors. This choice is fairly arbitrary, but we believe it is more consistent with the intuition that assignments to the bipartite lattice should have opposing spins (\(0\) and \(1\), or \(\uparrow\) and \(\downarrow\)).

We start by defining the isomorphism in \cref{eq_excitation_decomp} as \(\cV^R:\bigoplus_{k=0}^{n}\C^{\binom{[n]}{k}}\to(\mathbb{C}^2)^{\otimes n}\), which, for a fixed level-\(k\) set-function, \(f\in\C^{\binom{[n]}{k}}\), maps \(f\mapsto\sum_{S\in\binom{[n]}{k}}f(S)\bigotimes_{i=1}^n\ket{\mathbbm{1}_{\overline{S}}(i)}\). In particular, subsets specify where the zeros are in the standard basis vectors. Under this isomorphism, the actions of the raising and lowering operators are reversed when pulling back to set functions. 

\begin{definition}[Actions on qubits in the right sublattice]\label{def_right_sublat_qubit_actions}
  We pull back the actions of \(S_i^z\), \(S_i^+\), and \(S_i^-\) under the isomorphism \(\cV^R\) and define the following operators:
  \begin{enumerate}
    \item \(\Pi_i^R \coloneq (\cV^R)^{-1}\left(\frac{1}{2}I+S_i^z\right)\cV^R \in \cL\left(\bigoplus_{k=0}^{n}\C^{\binom{[n]}{k}}\right)\), where \((\Pi_i^R\psi)(S)=\psi(S)\) if \(i\in S\), and \(0\) otherwise.
    \item \(\partial_i^R \coloneq (\cV^R)^{-1}S_i^-\cV^R \in \cL\left(\bigoplus_{k=0}^{n}\C^{\binom{[n]}{k}}\right)\), where \((\partial_i^R\psi)(S)=\psi(S\cup\{i\})\) if \(i\notin S\), and \(0\) otherwise.
    \item \(\left(\partial_i^R\right)^\dagger \coloneq (\cV^R)^{-1}S_i^+\cV^R \in \cL\left(\bigoplus_{k=0}^{n}\C^{\binom{[n]}{k}}\right)\), where \(\left(\left(\partial_i^R\right)^\dagger\psi\right)(S)=\psi(S\setminus\{i\})\) if \(i\in S\), and \(0\) otherwise.
  \end{enumerate}
\end{definition}

Moreover, we define the down operator as the pullback of the lowering operator on the right sublattice as follows.

\begin{definition}[Right sublattice down/up operators]
We denote the \emph{right sublattice down operator} by \(\partial^{R}\coloneq(\cV^R)^{-1}S_R^-\cV^R\). It is defined on set functions in the following way (for \(\psi\in\C^{\binom{[n]}{k}}\), \(S'\in\binom{[n]}{k-1}\)):
\begin{align}
  \partial^{R}\big|_{\mathbb{C}^{\binom{[n]}{k}}}
  :
  \C^{\binom{[n]}{k}}
  &\to
  \C^{\binom{[n]}{k-1}},
  \\
  (\partial^{R}\psi)(S')
  &=
  \sum_{i\notin S'}\psi(S'\cup\{i\}). \nonumber
\end{align}

We denote the \emph{right sublattice up operator} by \(\left(\partial^{R}\right)^\dagger\coloneq(\cV^R)^{-1}S_R^+\cV^R\). It is defined on set functions in the following way (for \(\psi\in\C^{\binom{[n]}{k}}\), \(S'\in\binom{[n]}{k+1}\)):
\begin{align}
    \left(\partial^{R}\right)^\dagger
    \Big|_{\mathbb{C}^{\binom{[n]}{k}}}
    :
    \C^{\binom{[n]}{k}}
    &\to
    \C^{\binom{[n]}{k+1}},
    \\
    \left(\left(\partial^{R}\right)^\dagger\psi\right)(S')
    &=
    \sum_{i\in S'}\psi(S'\setminus\{i\}).\nonumber
\end{align}
\end{definition}

\begin{definition}[Harmonic functions]\label{def_right_harmonic_function}
A level-\(k\) set-function, \(\psi\in\C^{\binom{[n]}{k}}\), is a \emph{harmonic function} if \(\partial^{R}\psi=0\). Let \(\cH_k^R\coloneq\ker\left(\partial^R|_{\mathbb{C}^{\binom{[n]}{k}}}\right)\) be the space of right sublattice level-\(k\) harmonic set-functions.
\end{definition}

Similar to the left sublattice, the right sublattice down operator has the property that, for \(k\leq n/2\), \(\cH_k^R\cong\operatorname{Sp}_{(n-k,k)}\). However, this time it corresponds to the space of lowest-weight vectors. Similarly, when \(k>n/2\), we have \(\cH_k^R=0\). We will study these operators more throughout this section.


Now, both \(\cH_k^L\) and \(\cH_k^R\) are naturally isomorphic to \(\operatorname{Sp}_{(n-k,k)}\). Setting \(s_k=n/2-k\), we then have the more explicit version of the isomorphism from \cref{eq_first_cU_iso} (let \(\alpha\otimes\beta\in\cH_k^L\otimes\cH_k^R\)):
\begin{align}
\cU:
\bigoplus_{k=0}^{\lfloor n/2\rfloor}
\cH_k^L\otimes\cH_k^R
&\to
\cS,\label{eq_cU_iso}
\\
\alpha\otimes\beta
&\mapsto
(-1)^k\sum_{q=-s_k}^{s_k}
\frac{(-1)^{s_k-q}}{\sqrt{2s_k+1}}
\ket{\alpha;s_k,q}\ket{\beta;s_k,-q}. \nonumber
\end{align}
Here \(\ket{\alpha;s_k,s_k}=\cV^L\alpha\) and \(\ket{\beta;s_k,-s_k}=\cV^R\beta\), and
\begin{align}
    \ket{\alpha;s_k,q}
    :=
    \sqrt{\frac{(s_k+q)!}{(2s_k)!\,(s_k-q)!}}\,
    (S_L^-)^{s_k-q}\ket{\alpha;s_k,s_k},
    \quad
    \ket{\beta;s_k,-q}
    :=
    \sqrt{\frac{(s_k+q)!}{(2s_k)!\,(s_k-q)!}}\,
    (S_R^+)^{s_k-q}\ket{\beta;s_k,-s_k}.\label{eq_ket_alpha_s_q_terms}
\end{align}
The overall sign $(-1)^k$, while arbitrary, will ensure that the off diagonal blocks in \cref{prop_bond_item2} are negative.

For notation, we denote by \(\cM_k=\cH_k^L\otimes\cH_k^R\). We can make \(\cM_k\) a Hilbert space by using the standard inner product from \(\C^{\binom{[n]}{k}}\otimes\C^{\binom{[n]}{k}}\). Then \(\cM=\bigoplus_{k=0}^{\lfloor n/2\rfloor}\cM_k\) and \(\cU:\cM\to\cS\). The remainder of this section will study the Hamiltonian restricted to the singlet sector, \(H_G|_{\cS}\), viewed under the above reformulation, \(H\coloneq\cU^{-1}\left(H_G|_{\cS}\right)\cU\). Under this reformulation we can express the decomposition given in \cref{eq_Hsplit} as follows. We let \(P_k\) be the projector onto \(\cM_k\).
\begin{equation}
\cU^{-1}\Hmf\cU
=
-\hatp\frac{n}{2}\left(\frac{n}{2}+1\right)I
+
\hatp\sum_{k=0}^{\lfloor n/2\rfloor}
k(n-k+1)P_k,
\qquad
\cU^{-1}\Hxi\cU
=
\sum_{i\in L}\sum_{j\in R}
\xi_{ij}
\cU^{-1}
\bS_i\cdot\bS_j
\cU.\label{eq_Hmf_and_Hxi}
\end{equation}
Here we used \cref{eq_Casimir_rewrite} for \(\Hmf\).

\begin{lemma}\label{lemma_extremal}
\(\cU\) is unitary. Moreover, let \(\alpha\in\cH_k^L\) and \(\beta\in\cH_k^R\). Then, for every \(\ket{\psi}\in\cS\) we have that
\[\bra{\cU(\alpha\otimes\beta)}\ket{\psi} = (-1)^k\sqrt{2s_k+1}\, \bra{\alpha;s_k,s_k} \bra{\beta;s_k,-s_k} \ket{\psi}.\]
\end{lemma}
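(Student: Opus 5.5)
The plan is to prove the overlap formula first and then read off unitarity from it. The whole argument rests on one fact: the vector $\cU(\alpha\otimes\beta)$ is a singlet. To see this, note that the $\fraksu(2)$-span of $\cV^L\alpha$ under $\bS_L$ is a copy of $V_{s_k}$, since $\alpha$ is a highest-weight vector of weight $s_k$. Likewise, the span of $\cV^R\beta$ under $\bS_R$ is a copy of $V_{s_k}$, since $\beta$ is a lowest-weight vector. The vectors defined in \cref{eq_ket_alpha_s_q_terms} are then exactly the normalized weight vectors, by \cref{eq_weight_vec_raise_lower}. So the sum in \cref{eq_cU_iso} is the standard Clebsch--Gordan singlet inside $V_{s_k}\otimes V_{s_k}$. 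It is a unit vector annihilated by $\bS_{\mathrm{tot}}$, and hence lies in $\cS$.

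\emph{Overlap formula.} Take $\ket\psi\in\cS$ and expand the bra $\bra{\cU(\alpha\otimes\beta)}$ as a sum over $q$. In each term, move the lowering and raising operators onto $\psi$. Because $\bS_{\mathrm{tot}}\psi=0$, we have $S_L^{\pm}\psi=-S_R^{\pm}\psi$. This lets each term $\bra{\alpha;s_k,q}\bra{\beta;s_k,-q}\psi$ be rewritten as $c_q\,\bra{\alpha;s_k,s_k}\bra{\beta;s_k,-s_k}\psi$ with an explicit constant $c_q$. A cleaner way to organize this is by Schur's lemma. The map $\psi\mapsto\bigl(\bra{\alpha;s_k,q}\bra{\beta;s_k,-q'}\psi\bigr)_{q,q'}$, restricted to the singlet sector, lands in the $\fraksu(2)$-invariant part of $V_{s_k}^*\otimes V_{s_k}^*$. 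That invariant part is one-dimensional, so the coefficient vector is proportional to the Clebsch--Gordan coefficients $(-1)^{s_k-q}/\sqrt{2s_k+1}$. Fixing the constant with the $q=s_k$ term shows that each summand in the overlap equals the $q=s_k$ summand. Summing the $2s_k+1$ equal contributions gives the factor $\sqrt{2s_k+1}$, and the sign $(-1)^k$ is carried along from \cref{eq_cU_iso}.

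\emph{Isometry.} Apply the overlap formula with $\psi=\cU(\alpha'\otimes\beta')$. If $k'\ne k$, the overlap vanishes: $\cV^L\alpha$ lies in the $\bS_L^2$ eigenspace for $s_k$, while $\psi$ lies in the $s_{k'}$ eigenspace, and distinct eigenspaces of the Hermitian operator $\bS_L^2$ are orthogonal. If $k'=k$, only the $q=s_k$ term of $\psi$ has $S_L^z$-weight $s_k$, so
\[
\ip{\cU(\alpha\otimes\beta)}{\cU(\alpha'\otimes\beta')}=\sqrt{2s_k+1}\cdot\frac{1}{\sqrt{2s_k+1}}\ip{\cV^L\alpha}{\cV^L\alpha'}\ip{\cV^R\beta}{\cV^R\beta'}=\ip{\alpha}{\alpha'}\ip{\beta}{\beta'}.
\]
Here we use that $\cV^L$ and $\cV^R$ are isometries, since they send the standard basis of set functions to the computational basis. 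Extending by linearity shows $\cU$ is an isometry.

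\emph{Surjectivity.} Compare dimensions: $\dim\cM_k=(\dim\operatorname{Sp}_{(n-k,k)})^2$, and \cref{eq_first_cU_iso} shows that summing these over $k$ gives $\dim\cS$. An isometry between spaces of equal dimension is unitary.

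The main obstacle is the Schur's-lemma step. The bras $\bra{\alpha;s,q}\bra{\beta;s,-q'}$ must be shown to transform as dual weight vectors, which requires care with the adjoint conventions: $(S_L^-)^\dagger=S_L^+$, and on the right sublattice the roles of raising and lowering are reversed. The check is that the linear functional $\psi\mapsto\bra{\alpha;s,q}\bra{\beta;s,-q'}\psi$, for $\psi\in\cS$, satisfies the recursion in $q$ forced by $(S_L^++S_R^+)\psi=0$. This recursion reproduces the ratios of the alternating Clebsch--Gordan coefficients.
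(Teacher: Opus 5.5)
Your proposal is correct and follows essentially the paper's argument. The overlap formula rests on the one-dimensionality of the singlet inside the copy of $V_{s_k}\otimes V_{s_k}$ spanned by the weight vectors of $\alpha$ and $\beta$; your Schur's-lemma and ladder-recursion step (using $S_L^{\pm}\psi=-S_R^{\pm}\psi$) is an explicit version of the paper's statement that the projection of $\psi$ onto this copy is proportional to $\cU(\alpha\otimes\beta)$. The only real difference is in the unitarity claim, which the paper asserts in one line: you derive orthonormality from the overlap formula via orthogonality of $\bS_L^2$ and $S_L^z$ eigenspaces, and completeness from the dimension count in \cref{eq_first_cU_iso}.
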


\begin{proof}
First, \(\cU\) is unitary simply because the vectors \(\cU(\alpha\otimes\beta)\), as \(k\), \(\alpha\), and \(\beta\) range over orthonormal bases of the corresponding harmonic spaces, form an orthonormal basis for \(\cS\).

Next, by construction,
\(\cU(\alpha\otimes\beta)
        =
        (-1)^k\sum_{q=-s_k}^{s_k}
        \frac{(-1)^{s_k-q}}{\sqrt{2s_k+1}}\,
        \ket{\alpha;s_k,q}
        \ket{\beta;s_k,-q}.\)
For \(\alpha\in\cH_k^L\) and \(\beta\in\cH_k^R\), the corresponding copy of \(V_{s_k}\otimes V_{s_k}\) contains a unique one-dimensional singlet subspace, spanned by \(\cU(\alpha\otimes\beta)\). Since \(\ket{\psi}\in\cS\), its projection onto this copy of \(V_{s_k}\otimes V_{s_k}\) must therefore be proportional to \(\cU(\alpha\otimes\beta)\). Observing \(\bra{\cU(\alpha\otimes\beta)}\ket{\alpha;s_k,s_k}
        \ket{\beta;s_k,-s_k}
        =
        \frac{(-1)^k}{\sqrt{2s_k+1}}
   \) finishes the proof.
\end{proof}

\subsubsection{Single Edge Interaction Terms}

We focus our attention on a single edge interaction term, \(\bS_i\cdot\bS_j\) for \(i \in L\) and \(j \in R\). We initiate this study by expanding out the spin vectors, \(\bS_i \cdot \bS_j = S_i^x S_j^x + S_i^y S_j^y + S_i^z S_j^z\), using the decompositions \(S_i^x = \frac{1}{2}\left(S_i^- + S_i^+\right), S_i^y = \frac{i}{2}\left(S_i^- - S_i^+\right)\) to get the following:
\begin{align}
  \bS_i \cdot \bS_j &= \frac{1}{2}\left(S_i^+ + S_i^-\right) \frac{1}{2}\left(S_j^+ + S_j^-\right) + \frac{i}{2}\left(S_i^- - S_i^+\right) \frac{i}{2}\left(S_j^- - S_j^+\right) + S_i^z S_j^z \nonumber \\
  &= \frac{1}{2}\left(S_i^+S_j^-  + S_i^-S_j^+\right) + S_i^z S_j^z \label{eq_SS_in_sl2_env_alg}
\end{align}

Here the conjugation of \(\cU\) will roughly replace \(S_i^+S_j^-\) with \(\partial_i^L \otimes \partial_j^R\) and \(S_i^z S_j^z\) with \((1/2I - \Pi_i^L) \otimes (1/2I - \Pi_j^R)\), from \cref{def_left_sublat_qubit_actions,def_right_sublat_qubit_actions}. To do this, we further look at the blocks given by \(P_{\ell} \cU^\dagger \bS_i \cdot \bS_j \cU P_k : \cM_k \to \cM_\ell\).

\begin{proposition}\label{prop_bond}
Let $i\in L$ and $j\in R$.
\begin{enumerate}[label=\textup{(\roman*)}]
\item\label{prop_bond_item1} $P_\ell\,\mathcal U^\dagger(\bS_i\!\cdot\!\bS_j)\mathcal U P_k=0$ whenever $|\ell-k|>1$.
\item\label{prop_bond_item2} For $k\ge1$, $P_{k-1}\,\mathcal U^\dagger(\bS_i\!\cdot\!\bS_j)\mathcal U P_k=-\Gamma_k\,P_{k-1}\,\partial_i^L\otimes\partial_j^R P_k$, where
\[
  \Gamma_k:=\frac12\sqrt{\frac{2s_k+3}{2s_k+1}}=\frac12\sqrt{\frac{n-2k+3}{n-2k+1}},
\]
and $P_k\,\mathcal U^\dagger(\bS_i\!\cdot\!\bS_j)\mathcal U P_{k-1}$ is its adjoint.
\item\label{prop_bond_item3} If $s_k>0$, then
\[
  P_k\,\mathcal U^\dagger(\bS_i\!\cdot\!\bS_j)\mathcal U P_k
  =-\frac{s_k+1}{s_k}P_k\Bigl(\frac12I-\Pi_i^L\Bigr)\otimes\Bigl(\frac12I-\Pi_j^R\Bigr)P_k;
\]
if $s_k=0$, then $P_k\,\mathcal U^\dagger(\bS_i\!\cdot\!\bS_j)\mathcal U P_k=0$.
\end{enumerate}
\end{proposition}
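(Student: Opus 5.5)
The plan is to compute every block through the extremal-vector formula of \cref{lemma_extremal}. For $\alpha'\in\cH_\ell^L$, $\beta'\in\cH_\ell^R$ and $\alpha\in\cH_k^L$, $\beta\in\cH_k^R$, apply that lemma with $\ket\psi=\bS_i\!\cdot\!\bS_j\,\cU(\alpha\otimes\beta)\in\cS$; the vector stays in $\cS$ because $\bS_i\!\cdot\!\bS_j$ commutes with $\bS_{\mathrm{tot}}$. This gives
\[
\bra{\cU(\alpha'\otimes\beta')}\bS_i\!\cdot\!\bS_j\ket{\cU(\alpha\otimes\beta)}
=(-1)^{\ell}\sqrt{2s_\ell+1}\,\bra{\alpha';s_\ell,s_\ell}\bra{\beta';s_\ell,-s_\ell}\,\bS_i\!\cdot\!\bS_j\,\cU(\alpha\otimes\beta).
\]
I then expand $\cU(\alpha\otimes\beta)$ as the sum over $q$ in \cref{eq_cU_iso} and $\bS_i\!\cdot\!\bS_j$ by \cref{eq_SS_in_sl2_env_alg}. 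The bra is a product state of definite $S_L^z=s_\ell$ and $S_R^z=-s_\ell$. The operators $S_i^+S_j^-$, $S_i^-S_j^+$, $S_i^zS_j^z$ shift $(S_L^z,S_R^z)$ by $(+1,-1)$, $(-1,+1)$, $(0,0)$ respectively, so in each case only one value of $q$ survives.

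\textbf{Step 1: part (i).} The vector $\bS_i\!\cdot\!\bS_j\,\cU(\alpha\otimes\beta)$ has left-sublattice components only in spins $s$ with $|s-s_k|\le1$, since $\bS_i$ is a vector (rank-one) operator and Clebsch--Gordan applies. Equivalently, $\bS_i$ maps level-$k$ set functions to levels $k,k\pm1$. Harmonic components therefore only move by one level, which gives (i). A cleaner route is to use that $\bra{\alpha';s_\ell,s_\ell}$ pairs with the weight-$s_\ell$ component of $S_i^{a}\ket{\alpha;s_k,q}$, and that this component lies in the $V_{s_k}\otimes V_{1}$ tensor product.

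\textbf{Step 2: part (ii).} Take $\ell=k-1$, so $s_\ell=s_k+1$. The only contributing term is $\frac12S_i^+S_j^-$ acting on the $q=s_k$ term, because the bra has left weight $s_k+1$. That term is $(-1)^k\frac1{\sqrt{2s_k+1}}\cV^L\alpha\otimes\cV^R\beta$, using $\ket{\alpha;s_k,s_k}=\cV^L\alpha$ and $\ket{\beta;s_k,-s_k}=\cV^R\beta$. Pulling back, $S_i^+\mapsto\partial_i^L$ and $S_j^-\mapsto\partial_j^R$. Collecting constants gives $(-1)^{k-1}(-1)^k\sqrt{2s_k+3}\cdot\frac12\cdot\frac1{\sqrt{2s_k+1}}\,\langle\alpha'\otimes\beta',\partial_i^L\alpha\otimes\partial_j^R\beta\rangle=-\Gamma_k\langle\cdots\rangle$, and $P_{k-1}$ projects the result onto harmonics. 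The reverse block is the adjoint because $\bS_i\!\cdot\!\bS_j$ is Hermitian and $\cU$ is unitary.

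\textbf{Step 3: part (iii), the main obstacle.} Take $\ell=k$. Now two terms survive: $S_i^zS_j^z$ on the $q=s_k$ term, and $\frac12S_i^-S_j^+$ on the $q=s_k-1$ term. The $q=s_k-1$ vectors are $\frac1{\sqrt{2s_k}}S_L^-\cV^L\alpha$ and $\frac1{\sqrt{2s_k}}S_R^+\cV^R\beta$, carrying sign $-1$ relative to the $q=s_k$ term. The second contribution is therefore
\[
-\frac1{2\cdot2s_k}\,\bra{\alpha'}S_i^+S_L^-\ket{\alpha}\,\bra{\beta'}S_j^-S_R^+\ket{\beta}.
\]
To evaluate it, I commute: $S_i^+S_L^-=S_L^-S_i^+ + 2S_i^z$. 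The $S_L^-S_i^+$ piece dies against the highest-weight bra, since $(S_L^-)^\dagger\ket{\alpha'}=S_L^+\ket{\alpha'}=0$. Symmetrically, $S_j^-S_R^+=S_R^+S_j^- - 2S_j^z$, and the $S_R^+S_j^-$ piece dies against the lowest-weight bra. This leaves $-\frac{1}{4s_k}(2)(-2)\langle S_i^z\rangle\langle S_j^z\rangle=\frac{1}{s_k}\langle S_i^z S_j^z\rangle$.

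Adding the direct term, the total is $(1+\frac1{s_k})\langle S_i^zS_j^z\rangle$, up to the overall sign coming from $(-1)^k$ squared and the factor $\sqrt{2s_k+1}/\sqrt{2s_k+1}$. Pulling back with $S_i^z\mapsto\frac12I-\Pi_i^L$ and $S_j^z\mapsto-(\frac12I-\Pi_j^R)$ under $\cV^R$ produces the claimed minus sign and the factor $\frac{s_k+1}{s_k}$.

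When $s_k=0$ the singlet is the product $\ket{\alpha}\ket{\beta}$ of two spin-$0$ states. There $\bS_i$ maps into spin $1$, so the diagonal block vanishes; the $1/s_k$ formula is not used in this case. The delicate points are tracking the signs and normalisations in \cref{eq_ket_alpha_s_q_terms} and checking that cross terms vanish under harmonic projection, so I would verify the constants against $n=2$, where the singlet energy $-\frac34$ is known.
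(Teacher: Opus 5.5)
Your proposal is correct and follows the paper's proof essentially step for step. Both use the extremal-vector formula of \cref{lemma_extremal}, select a single $q$ by weight, and use the commutators $[S_i^+,S_L^-]=2S_i^z$ and $[S_j^-,S_R^+]=-2S_j^z$ against the highest- and lowest-weight bras. You deviate only in two places, both valid: you obtain (i) and the $s_k=0$ case from the vector-operator selection rule, whereas the paper uses the weight count (which your own setup already supplies) and the identity $2\bra{\alpha';0,0}S_i^z\ket{\alpha;0,0}=\bra{\alpha';0,0}[S_i^+,S_L^-]\ket{\alpha;0,0}=0$.

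One mislabel in Step 3: the surviving off-diagonal term at $\ell=k$ is $\tfrac12 S_i^+S_j^-$, which is what your displayed formula actually computes. It is not $\tfrac12 S_i^-S_j^+$; that term vanishes there because it would require $q=s_k+1$.
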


\begin{proof}
  In this proof we use four commutator relations that we give here for reference. These are direct results of the commutator relations of the generators of \(\fraksl(2,\C)\) (see \cref{section_rep_thy_prelims}). Nothing changes when \(L\) is replaced with \(R\) and \(i\) is replaced with \(j\).
  \begin{align}
    [S_L^z,S_i^+] &= \sum_{i' \in L}[S_{i'}^z,S_i^+] = [S_i^z,S_i^+] = S_i^+ \label{eq_comm_rel_1} \\
    [S_L^z,S_i^-] &= \sum_{i' \in L}[S_{i'}^z,S_i^-] = [S_i^z,S_i^-] = -S_i^- \label{eq_comm_rel_2} \\
    [S_i^+,S_L^-] &= \sum_{i' \in L}[S_i^+, S_{i'}^-] = [S_i^+, S_i^-] = 2S_i^z \label{eq_comm_rel_3} \\
    [S_i^-,S_L^+] &= \sum_{i' \in L}[S_i^-, S_{i'}^+] = [S_i^-, S_i^+] = -2S_i^z \label{eq_comm_rel_4}
  \end{align}

  Fix \(\alpha \in \cH_k^L\), \(\beta \in \cH_k^R\), \(\alpha' \in \cH_{\ell}^L\), and \(\beta' \in \cH_{\ell}^R\). We use \(s_k = n/2-k\) and \(s_\ell = n/2-\ell\) to denote the corresponding spins. Without loss of generality we will consider \(\ell \leq k\) (equivalently, \(s_\ell \geq s_k\)) as the reverse will follow from the fact that \(\bS_i \cdot \bS_j\) is self adjoint.

  Let \(\ket{\psi} = \bS_i \cdot \bS_j \cU(\alpha \otimes \beta)\), which has \(\ket{\psi} \in \cS\) as \(\bS_i \cdot \bS_j\) commutes with the total spin. \cref{lemma_extremal} then gives us that
  \begin{align*}
    \bra{\alpha' \otimes \beta'}\cU^\dagger\bS_i \cdot \bS_j \cU\ket{\alpha \otimes \beta} = \bra{\cU(\alpha' \otimes \beta')}\ket{\psi} = (-1)^\ell\sqrt{2s_\ell+1}\, \bra{\alpha';s_\ell,s_\ell} \bra{\beta';s_\ell,-s_\ell} \ket{\psi}
  \end{align*}
  We first consider the \(S_i^+ S_j^-\) term (from \cref{eq_SS_in_sl2_env_alg}) and expand \(\cU\) using \cref{eq_cU_iso}:
  \begin{align*}
    \bra{\alpha' \otimes \beta'}\cU^\dagger S_i^+ S_j^- \cU\ket{\alpha \otimes \beta}  &= (-1)^\ell\sqrt{2s_\ell+1}\, \bra{\alpha';s_\ell,s_\ell} \bra{\beta';s_\ell,-s_\ell} S_i^+ S_j^- \cU\ket{\alpha \otimes \beta} \\
    &= (-1)^{\ell+k}\frac{\sqrt{2s_\ell+1}}{\sqrt{2s_k+1}}\,\sum_{q=-s_k}^{s_k} (-1)^{s_k-q} \bra{\alpha';s_\ell,s_\ell}S_i^+\ket{\alpha;s_k,q}\bra{\beta';s_\ell,-s_\ell}S_j^-\ket{\beta;s_k,-q}
  \end{align*}
  We next argue that the term \(\bra{\alpha';s_\ell,s_\ell}S_i^+\ket{\alpha;s_k,q} = 0\) for most values of \(q\). To do this, we observe that the commutation relation in \cref{eq_comm_rel_1} implies \(S_i^+\ket{\alpha;s_k,q}\) is in the eigenspace of \(S_L^z\) with eigenvalue \(q+1\). Indeed, \(S_L^zS_i^+\ket{\alpha;s_k,q} = S_i^+ (S_L^z + 1)\ket{\alpha;s_k,q} = (q+1) S_i^+\ket{\alpha;s_k,q}\). The conclusion then follows as \(\ket{\alpha';s_\ell,s_\ell}\) is in the \(s_\ell\) eigenspace. In particular, for \(q + 1 \neq s_\ell\), \(\ket{\alpha';s_\ell,s_\ell}\) and \(S_i^+\ket{\alpha;s_k,q}\) are in different eigenspaces and thus must be orthogonal (as \(S_L^z\) is self-adjoint), or rather \(\bra{\alpha';s_\ell,s_\ell}S_i^+\ket{\alpha;s_k,q} = 0\) (again, unless \(q + 1= s_\ell\)).
  
  The same argument applies for the \(\bra{\beta';s_\ell,-s_\ell}S_j^-\ket{\beta;s_k,-q}\) term by using \cref{eq_comm_rel_2}. The above completely vanishes unless \(s_\ell - 1 \leq s_k \implies  \ell + 1 \geq k\), which combined with the assumption that \(\ell \leq k\) means that \(\ell \in \{k - 1, k\}\). We will first take the case of \(\ell = k-1\) which is equivalent to \(s_\ell = s_k + 1\).
  \begin{align}
    \bra{\alpha' \otimes \beta'}\cU^\dagger S_i^+ S_j^- \cU\ket{\alpha \otimes \beta} &= -\frac{\sqrt{2s_k + 3}}{\sqrt{2s_k+1}} \bra{\alpha';s_k + 1,s_k + 1}S_i^+\ket{\alpha;s_k,s_k}\bra{\beta';s_k + 1,-s_k - 1}S_j^-\ket{\beta;s_k,-s_k} \nonumber\\
    &= -\frac{\sqrt{2s_k + 3}}{\sqrt{2s_k+1}} \braket{\alpha', \left(\cV^L\right)^\dagger S_i^+ \cV^L\alpha} \braket{\beta', \left(\cV^R\right)^\dagger S_j^-\cV^R \beta} \nonumber\\
    &= -2\Gamma_k \braket{\alpha' \otimes \beta', \partial_i^L \otimes \partial_j^R \alpha \otimes \beta} \label{eq_sjdskskkkskaslkdubdejs1}
  \end{align}
  The last two equalities follow from the definitions of \cref{def_left_sublat_qubit_actions,def_right_sublat_qubit_actions,eq_cU_iso,eq_ket_alpha_s_q_terms}.

  We next consider the case of \(\ell = k\), which is equivalent to \(s_\ell = s_k\). Here, we will assume \(s_k \neq 0\) (i.e., \(k \neq n/2\)) as we will handle that case last. In this case the only non-vanishing term is \(q = s_\ell - 1 = s_k - 1\). We use \cref{eq_ket_alpha_s_q_terms,eq_comm_rel_3,eq_comm_rel_4} to get the following. 
  \begin{align}
    \bra{\alpha' \otimes \beta'}\cU^\dagger S_i^+ S_j^- \cU\ket{\alpha \otimes \beta} &= - \bra{\alpha';s_k,s_k}S_i^+\ket{\alpha;s_k,s_k - 1}\bra{\beta';s_k,-s_k}S_j^-\ket{\beta;s_k,-s_k + 1} \nonumber\\
    &= - \frac{1}{2s_k} \bra{\alpha';s_k,s_k}S_i^+S_L^-\ket{\alpha;s_k,s_k}\bra{\beta';s_k,-s_k}S_j^-S_R^+\ket{\beta;s_k,-s_k} \nonumber\\
    &= \frac{2}{s_k} \bra{\alpha';s_k,s_k}S_i^z\ket{\alpha;s_k,s_k}\bra{\beta';s_k,-s_k}S_j^z\ket{\beta;s_k,-s_k} \nonumber\\
    &= - \frac{2}{s_k} \braket{\alpha' \otimes \beta', \Bigl(\frac12I-\Pi_i^L\Bigr)\otimes\Bigl(\frac12I-\Pi_j^R\Bigr) \alpha \otimes \beta} \label{eq_sjdskskkkskaslkdubdejs2}
  \end{align}
  Here in the third equality we used the fact that \(\ket{\alpha';s_k,s_k}\) is a highest-weight vector and thus \(\bra{\alpha';s_k,s_k}S_L^- = \left(S_L^+ \ket{\alpha';s_k,s_k}\right)^\dagger = 0\) to remove the \(\bra{\alpha';s_k,s_k}S_L^-S_i^+\ket{\alpha;s_k,s_k} = 0\) coming from \cref{eq_comm_rel_3,eq_comm_rel_4}. Notice that \cref{eq_comm_rel_4} contributes an extra sign. Then the last equality is again by \cref{def_left_sublat_qubit_actions,def_right_sublat_qubit_actions,eq_cU_iso,eq_ket_alpha_s_q_terms} where we account for the identity terms in the definitions for \(\Pi^L_i\) and \(\Pi^R_j\).

  We next consider the \(S_i^- S_j^+\) term. The above work all goes through virtually the same way but with \cref{eq_comm_rel_2} for the left sublattice and \cref{eq_comm_rel_1} for the right. This will in turn make all \(q - 1 \neq s_\ell\) terms vanish in the summation when expanding \(\cU\) with \cref{eq_cU_iso}. As the summation is over \(-s_k \leq q \leq s_k\) this requires \(k \leq \ell - 1\), but this contradicts the assumption that \(\ell \leq k\). Thus \(\bra{\alpha' \otimes \beta'}\cU^\dagger S_i^- S_j^+ \cU\ket{\alpha \otimes \beta} = 0\).

  Finally, we consider the term \(S_i^z S_j^z\). We make the observation that \([S_L^z,S_i^z]= 0\). Following the same steps as above we get
  \begin{align*}
    \bra{\alpha' \otimes \beta'}\cU^\dagger S_i^z S_j^z \cU\ket{\alpha \otimes \beta} &= (-1)^{k+\ell}\frac{\sqrt{2s_\ell+1}}{\sqrt{2s_k+1}}\,\sum_{q=-s_k}^{s_k} (-1)^{s_k-q} \bra{\alpha';s_\ell,s_\ell}S_i^z\ket{\alpha;s_k,q}\bra{\beta';s_\ell,-s_\ell}S_j^z\ket{\beta;s_k,-q}
  \end{align*}
  The commutator relation implies that \(\bra{\alpha';s_\ell,s_\ell}S_i^z\ket{\alpha;s_k,q} = 0\) whenever \(q \neq s_\ell\), which requires \(k = \ell\). The same thing happens on the right sublattice, and thus we have that
  \begin{align}
    \bra{\alpha' \otimes \beta'}\cU^\dagger S_i^z S_j^z \cU\ket{\alpha \otimes \beta} &= \bra{\alpha';s_k,s_k}S_i^z\ket{\alpha;s_k,s_k}\bra{\beta';s_k,-s_k}S_j^z\ket{\beta;s_k,-s_k} \nonumber \\
    &= -\braket{\alpha' \otimes \beta', \Bigl(\frac12I-\Pi_i^L\Bigr)\otimes\Bigl(\frac12I-\Pi_j^R\Bigr) \alpha \otimes \beta}. \label{eq_sjdskskkkskaslkdubdejs3}
  \end{align}
  This again follows by accounting for the identity terms in \(\Pi^L_i\) and \(\Pi^R_j\) in \cref{def_left_sublat_qubit_actions,def_right_sublat_qubit_actions}.

  Putting everything together, \cref{prop_bond_item1} follows from the fact that when \(\ell < k - 1\) everything vanishes. Then \cref{prop_bond_item2} follows from \cref{eq_sjdskskkkskaslkdubdejs1} (notice the \(\frac{1}{2}\) in \cref{eq_SS_in_sl2_env_alg}). Finally, \cref{prop_bond_item3} follows from \cref{eq_sjdskskkkskaslkdubdejs2,eq_sjdskskkkskaslkdubdejs3} as \(\frac{1}{s_k} + 1 = \frac{s_k + 1}{s_k}\) (again, notice the \(\frac{1}{2}\) in \cref{eq_SS_in_sl2_env_alg}).

  It remains to consider \(s_k=0\). In this case, \(\ket{\alpha;0,0}\) is both a highest- and lowest-weight vector, so there is no \(\ket{\alpha;s_k,s_k-1}\) term and hence the contribution from \cref{eq_sjdskskkkskaslkdubdejs2} vanishes. For the \(S_i^zS_j^z\) term, we use \cref{eq_comm_rel_3}:
\begin{align*}
  2\bra{\alpha';0,0}S_i^z\ket{\alpha;0,0}
  &=
  \bra{\alpha';0,0}[S_i^+,S_L^-]\ket{\alpha;0,0} \\
  &=
  \bra{\alpha';0,0}S_i^+S_L^-\ket{\alpha;0,0}
  -
  \bra{\alpha';0,0}S_L^-S_i^+\ket{\alpha;0,0}
  =0.
\end{align*}
Therefore, \(\bra{\alpha' \otimes \beta'}\cU^\dagger \bS_i \cdot \bS_j \cU\ket{\alpha \otimes \beta} = 0\).
\end{proof}

\subsubsection{The Complete Bipartite Model}\label{ss:decomposition}

In this section we will use \cref{prop_bond} to have a better expression for \(H_\xi\) as defined in \cref{eq_Hmf_and_Hxi}. It will be useful to separate the same level terms from \cref{prop_bond}, \cref{prop_bond_item3} and the different level/off diagonal terms \cref{prop_bond}, \cref{prop_bond_item2}. We make the following definitions.
\begin{align}
  \Hpair:=\sum_{k=1}^{\lfloor n/2\rfloor}\bigl(P_{k-1}\cU^\dagger H_\xi \cU P_k+P_k\cU^\dagger H_\xi \cU P_{k-1}\bigr),
  \qquad
  \Hsl:=\sum_{k=0}^{\lfloor n/2\rfloor}P_k\cU^\dagger H_\xi \cU P_k
\end{align}

It will be useful to make the following definition.

\begin{definition}[Weighted boundary operators]
  Given real valued weights, \(u \in \R^n\), we define the \emph{left sublattice weighted down operator} as
  \[\partial^L_u = \sum_{i \in L} u(i) \partial_i^L\]
  and the \emph{right sublattice weighted down operator} as
  \[\partial^R_u = \sum_{j \in R} u(j) \partial_j^R.\]
  We also define \(\Pi_u^L = \sum_{i \in L} u(i) \Pi_i^L\) and \(\Pi_u^R = \sum_{j \in R} u(j) \Pi_j^R\) similarly.
\end{definition}

We make the observation that when restricted to \(\cH_k^L\) then \(\partial^L_u |_{\cH_k^L} : \cH_k^L \to \cH_{k-1}^L\). This is because\footnote{Indeed, for \(\psi \in \C^{\binom{[n]}{k}}\) and \(S \in \binom{[n]}{k-2}\) we have that \((\partial^L_i \partial^L_{i'} \psi)(S) = \psi(S \cup \{i,i'\}) = (\partial^L_{i'} \partial^L_{i} \psi)(S)\).} \([\partial^L_u, \partial^L_v] = 0\) for any \(u,v \in \R^n\). In particular, for \(\psi \in \cH^L_k\) (which is to say \(\partial^L_\one \psi = 0\)), we have that \(\partial^L_\one \partial^L_u \psi = \partial^L_u \partial^L_\one \psi = 0\) and thus by \cref{def_left_harmonic_function}, \(\partial_u^L \psi \in \cH^L_{k-1}\) is a harmonic set-function. Moreover, we can observe that for \(u = u' + c\one\) with \(u' \bot \one\) then \(\partial^L_u |_{\cH^L_k} = \partial^L_{u'} |_{\cH^L_k}\), and thus it suffices to consider only \(u \bot \one\).

For the adjoints, \((\partial_u^L)^\dagger\), we no longer have that \([\partial^L_u,(\partial^L_v)^\dagger] = 0\), and thus it is not the case that \((\partial_u^L)^\dagger\) maps harmonic functions to harmonic functions. This won't end up mattering as we always project onto \(\cM\). We can nonetheless compute this commutator exactly as follows.

\begin{lemma}\label{lemma_comm_rel_partial_partial_dagger} We have that \([\partial^L_u,(\partial^L_v)^\dagger] = \braket{u,v}I - 2 \Pi^L_{u \odot v}\), where \(u \odot v \in \R^n\) is the entrywise product. Additionally, we have that \([\partial^R_u,(\partial^R_v)^\dagger] = \braket{u,v}I - 2 \Pi^R_{u \odot v}\).
\end{lemma}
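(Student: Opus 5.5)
By bilinearity it suffices to compute the single-site commutators $[\partial_i^L,(\partial_{i'}^L)^\dagger]$ and then sum with weights $u(i)v(i')$. Two routes are available. One is a direct computation on set functions using the explicit formulas after \cref{def_left_sublat_qubit_actions}. The other, cleaner one pulls back through $\cV^L$: by definition $\partial_i^L=(\cV^L)^{-1}S_i^+\cV^L$ and $(\partial_{i'}^L)^\dagger=(\cV^L)^{-1}S_{i'}^-\cV^L$, and conjugation by the isomorphism $\cV^L$ preserves commutators, so the problem reduces to computing $[S_i^+,S_{i'}^-]$ on qubits.

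For $i\neq i'$ the operators act on different tensor factors, so $[S_i^+,S_{i'}^-]=0$. For $i=i'$ we have $[S_i^+,S_i^-]=2S_i^z$, which is the relation already used in \cref{eq_comm_rel_3}. Since $\Pi_i^L$ is the pullback of $\tfrac12 I-S_i^z$, we get $S_i^z=\tfrac12 I-(\text{pullback of }\Pi_i^L)$, and hence
\[
[\partial_i^L,(\partial_{i'}^L)^\dagger]=\delta_{ii'}\bigl(I-2\Pi_i^L\bigr).
\]
Summing over $i,i'$ with weights $u(i)v(i')$ gives
\[
\sum_i u(i)v(i)\bigl(I-2\Pi_i^L\bigr)=\braket{u,v}I-2\Pi^L_{u\odot v}.
\]

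As a sanity check, the same identity follows directly on set functions. For $\psi$ at level $k$, $(\partial_i^L(\partial_i^L)^\dagger\psi)(S)=\psi(S)$ when $i\notin S$, and $((\partial_i^L)^\dagger\partial_i^L\psi)(S)=\psi(S)$ when $i\in S$. The difference is $\psi(S)$ when $i\notin S$ and $-\psi(S)$ when $i\in S$, which is exactly $((I-2\Pi_i^L)\psi)(S)$. The cross terms with $i\neq i'$ agree: both orders send $\psi$ to $S\mapsto\psi(S\cup\{i\}\setminus\{i'\})$, with the same support conditions $i\notin S$ and $i'\in S$.

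The right sublattice is the step needing care, because the roles are reversed. Here $\partial_j^R$ pulls back $S_j^-$, $(\partial_j^R)^\dagger$ pulls back $S_j^+$, and $\Pi_j^R$ pulls back $\tfrac12 I+S_j^z$. Then $[S_j^-,S_j^+]=-2S_j^z=I-2\bigl(\tfrac12I+S_j^z\bigr)$, which pulls back to $I-2\Pi_j^R$. The formula therefore has the same form as on the left, and the main thing to verify is that these sign conventions line up, which they do.
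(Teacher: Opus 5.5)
Your proposal is correct and follows the paper's proof essentially verbatim: you expand by bilinearity, pull $[S_i^+,S_{i'}^-]=2\delta_{ii'}S_i^z$ back through $\mathcal{V}^L$, and rewrite the pulled-back $S_i^z$ as $\tfrac12 I-\Pi_i^L$. Your direct set-function check and your explicit sign bookkeeping for the right sublattice, $[S_j^-,S_j^+]=-2S_j^z=I-2(\tfrac12 I+S_j^z)$, are both correct; the paper only says the right-sublattice case follows identically.
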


\begin{proof}
  By bilinearity of the commutator,
  \begin{align*}
    [\partial_u^L,(\partial_v^L)^\dagger]
    &=
    \sum_{i,j=1}^n
    u(i)v(j)
    [\partial_i^L,(\partial_j^L)^\dagger].
  \end{align*}
  Next, using \cref{def_left_sublat_qubit_actions} we have that
  \(\left[\partial_i^L,(\partial_j^L)^\dagger\right]
  =(\cV^L)^{-1}[S_i^+,S_j^-]\cV^L\). Here, we use the \(\fraksl(2,\C)\) commutator identity, \([S_i^+,S_i^-]=2S_i^z\), to get
  \begin{align*}
    [\partial_u^L,(\partial_v^L)^\dagger]
    &=
    2\sum_{i=1}^n
    u(i)v(i)
    (\cV^L)^{-1}S_i^z\cV^L \\
    &=
    2\sum_{i=1}^n
    u(i)v(i)
    \left(\frac12I-\Pi_i^L\right) \\
    &=
    \braket{u,v}I
    -
    2\Pi_{u\odot v}^L.
  \end{align*}
  The case for the right sublattice follows more or less identically.
\end{proof}

We next apply \cref{prop_bond} to the term $\Hxi$. Recall from \cref{section_intro_setting} that the fluctuation matrix, \(\xi\), defined by \(A=\hatp\one\one^{\top}+\xi\), with \(\hatp:=\frac1{n^2}\sum_{ij\in E}A_{ij}\), can be split into its doubly centered part, \(\xi_c\), and its degree deviations. Moreover, we write the singular value decomposition of $\xi_c$:
\begin{equation}
\begin{split}
  \xi=\xi_c+\tfrac1n\rho\one^{\top}+\tfrac1n\one\rho'^{\top},
  \qquad
  &\xi_c=\sum_{m=1}^r\sigma_mu_mv_m^{\top}, \\
  \rho_i:=\sum_{j}A_{ij}-\hatp n, \qquad &\rho'_j:=\sum_{i}A_{ij}-\hatp n
\end{split}\label{eq_xi_defs}
\end{equation}
with $r:=\operatorname{rank}\xi_c$, the singular values $\sigma_1\ge\dots\ge\sigma_r>0$, and the orthonormal singular vectors
$(u_m)_{m \in [r]}$, $(v_m)_{m \in [r]}$. Here, doubly centered means that $\xi_c\one=0=\one^{\top}\xi_c$, thus $u_m,v_m\perp\one$. Moreover, by the definition of \(\rho\) and \(\rho'\), $\sum_i\rho_i=\sum_j\rho'_j=0$.

In what follows we abuse notation slightly and write \(\Hmf \coloneq -\hatp\frac{n}{2}\left(\frac{n}{2}+1\right)I + \hatp\sum_{k=0}^{\lfloor n/2\rfloor} k(n-k+1)P_k\) from \cref{eq_Hmf_and_Hxi} rather than use the notation defined in \cref{eq_Hsplit}.

\begin{proposition}[Pair decomposition]\label{prop_pair_decomp}
For \(H_G = \sum_{ij \in E} A_{ij} \bS_i \cdot \bS_j\), the Hamiltonian \(H = \cU^\dagger H_G \cU\) is block tridiagonal in the levels,
\begin{equation}\label{eq:tridiagonal}
  P_\ell HP_k=0\qquad\text{whenever }|\ell-k|>1,
\end{equation}
and decomposes exactly as
\begin{equation}\label{eq:pair-decomposition}
  H= \Hmf + \Hpair + \Hsl,
\end{equation}
where:
\begin{enumerate}[label=\textup{(\roman*)}]
\item\label{prop_pair_decomp_1} the mean-field term is diagonal in the level, with
\begin{equation}\label{eq:Hmf-final}
  \Hmf\big|_{\cM_k}
  =\bigl[\Emf+\hatp\,k(n-k+1)\bigr]I_{\cM_k},
  \qquad
  \Emf:=-\hatp\,\frac n2\Bigl(\frac n2+1\Bigr);
\end{equation}
\item\label{prop_pair_decomp_2} the pair term involves only the doubly centered fluctuation,
\begin{equation}\label{eq:Hpair-explicit}
  \Hpair
  =-\sum_{m=1}^{r}\sigma_m
  \sum_{k=1}^{\lfloor n/2\rfloor}\Gamma_k
  \Bigl[
  P_{k-1}(\partial_{u_m}^L\otimes\partial_{v_m}^R)P_k
  +
  P_k\bigl((\partial_{u_m}^L)^{\dagger}\otimes(\partial_{v_m}^R)^{\dagger}\bigr)P_{k-1}
  \Bigr];
\end{equation}
\item\label{prop_pair_decomp_3} the same-level term vanishes on a level with $s_k=0$, and otherwise
\begin{equation}\label{eq:Hsl-exact}
  P_k\Hsl P_k
  =-\frac{s_k+1}{s_k}
  \sum_{i\in L}\sum_{j\in R}(\xi_c)_{ij}\,
  P_k\Pi_i^L \otimes \Pi_j^R P_k
  +\frac{s_k+1}{n}P_k\left(\Pi_{\rho}^L \otimes I
  +I\otimes \Pi_{\rho'}^R\right)P_k.
\end{equation}
\end{enumerate}
\end{proposition}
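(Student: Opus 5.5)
Write $H=\cU^\dagger H_G\cU=\cU^\dagger\Hmf\cU+\cU^\dagger\Hxi\cU$ using \cref{eq_Hsplit}. The mean-field part is already diagonal in the levels by \cref{eq_Hmf_and_Hxi}, which gives item~\ref{prop_pair_decomp_1}. For the fluctuation part, expand $\cU^\dagger\Hxi\cU=\sum_{i,j}\xi_{ij}\,\cU^\dagger(\bS_i\!\cdot\bS_j)\cU$ and sum the blocks of \cref{prop_bond}. Item~\ref{prop_bond_item1} kills all blocks with $|\ell-k|>1$, which gives \cref{eq:tridiagonal}. The remaining blocks split by construction into $\Hpair$ (off-diagonal, $\ell=k\pm1$) and $\Hsl$ (diagonal), which gives \cref{eq:pair-decomposition}. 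What remains is to identify each piece explicitly. Throughout I use the splitting \cref{eq_xi_defs} of $\xi$ into $\xi_c$, $\tfrac1n\rho\one^\top$ and $\tfrac1n\one\rho'^\top$.

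\emph{Pair term.} By \cref{prop_bond}\ref{prop_bond_item2},
\[
P_{k-1}\cU^\dagger\Hxi\cU P_k=-\Gamma_k P_{k-1}\Bigl(\sum_{i,j}\xi_{ij}\,\partial_i^L\otimes\partial_j^R\Bigr)P_k .
\]
For the degree-deviation part $\tfrac1n\rho\one^\top$, the inner sum becomes $\tfrac1n\,\partial^L_\rho\otimes\partial^R_\one$. Since $P_k$ projects onto $\cH_k^L\otimes\cH_k^R$ and $\partial^R_\one=\partial^R$ annihilates $\cH_k^R$, this contribution vanishes. The $\tfrac1n\one\rho'^\top$ part vanishes in the same way, with $\partial^L$ acting on the left. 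The doubly centered part gives $\sum_m\sigma_m\,\partial^L_{u_m}\otimes\partial^R_{v_m}$ by bilinearity of $u\mapsto\partial_u$. This yields the first half of \cref{eq:Hpair-explicit}. The upward blocks are the adjoints, because $\xi$ is real and $\Hxi$ is self-adjoint.

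\emph{Same-level term.} When $s_k=0$, \cref{prop_bond}\ref{prop_bond_item3} gives zero directly. Otherwise,
\[
P_k\Hsl P_k=-\frac{s_k+1}{s_k}\,P_k\Bigl(\sum_{i,j}\xi_{ij}\bigl(\tfrac12I-\Pi_i^L\bigr)\otimes\bigl(\tfrac12I-\Pi_j^R\bigr)\Bigr)P_k .
\]
I expand the product into four terms: $\tfrac14 I\otimes I$, $-\tfrac12\Pi_i^L\otimes I$, $-\tfrac12 I\otimes\Pi_j^R$ and $\Pi_i^L\otimes\Pi_j^R$. I then sum each against the three parts of $\xi$, using three identities:
\begin{itemize}
\item $\sum_{ij}\xi_{ij}=0$, since $\hatp$ is the mean of $A$;
\item the row sums of $\xi$ are $\rho_i$ and the column sums are $\rho'_j$;
\item $\xi_c$ has zero row and column sums, and $\sum_i\rho_i=\sum_j\rho'_j=0$.
\end{itemize}
The doubly centered part survives only in the $\Pi\otimes\Pi$ term. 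This gives the first term of \cref{eq:Hsl-exact}.

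The degree parts produce two kinds of contribution:
\begin{itemize}
\item terms $\Pi_\rho^L\otimes I$ and $I\otimes\Pi_{\rho'}^R$ with coefficients from the linear terms;
\item terms $\tfrac1n\,\Pi^L_\rho\otimes\Pi^R_\one$ and $\tfrac1n\,\Pi^L_\one\otimes\Pi^R_{\rho'}$ from the quadratic term.
\end{itemize}
On level $k$, $\Pi^L_\one=\Pi^R_\one=kI$, since $\Pi_\one$ counts elements of $S$. Collecting, the coefficient of $\Pi^L_\rho\otimes I$ is
\[
-\frac{s_k+1}{s_k}\Bigl(-\frac12+\frac kn\Bigr)=-\frac{s_k+1}{s_k}\cdot\frac{2k-n}{2n}=\frac{s_k+1}{n},
\]
using $s_k=n/2-k$. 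The $\rho'$ term is symmetric, and this gives \cref{eq:Hsl-exact}.

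The main obstacle is this last bookkeeping step: tracking signs and the factor $(s_k+1)/s_k$ so that the linear terms and the quadratic terms (after $\Pi_\one=k$) combine exactly into $(s_k+1)/n$. The vanishing of the degree parts in $\Hpair$ via harmonicity is the only conceptual point, and it is straightforward.
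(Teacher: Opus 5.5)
Your proposal is correct and follows the paper's proof essentially step for step. Tridiagonality and the splitting come from \cref{prop_bond}. In the pair blocks, the degree-deviation parts are killed by harmonicity and the SVD handles the $\xi_c$ part. The same-level terms come from \cref{prop_bond}\ref{prop_bond_item3} after expanding against the three parts of $\xi$.

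The only cosmetic difference is in the degree terms. The paper first sums over $j$ using $\sum_j(\tfrac12I-\Pi_j^R)=s_kI$ on level $k$, which collapses your four-term bookkeeping into one line and gives the same coefficient $(s_k+1)/n$.
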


\begin{proof}
  The block-tridiagonal structure in \cref{eq:tridiagonal} follows immediately from \cref{prop_bond}, \cref{prop_bond_item1}. The decomposition in \cref{eq:pair-decomposition} then follows from the definitions of \(\Hmf\), \(\Hpair\), and \(\Hsl\). Moreover, \cref{prop_pair_decomp_1} is exactly \cref{eq_Hmf_and_Hxi}.

  For \cref{prop_pair_decomp_2}, \cref{prop_bond}, \cref{prop_bond_item2} gives
  \begin{align*}
    P_{k-1}\Hxi P_k
    &=
    -\Gamma_k
    \sum_{i\in L}\sum_{j\in R}
    \xi_{ij}\,
    P_{k-1}
    \left(\partial_i^L\otimes\partial_j^R\right)
    P_k.
  \end{align*}
  Using \(\xi=\xi_c+\frac1n\rho\one^\top+\frac1n\one(\rho')^\top\), the two degree-deviation terms vanish because
  \(\sum_i\partial_i^L=\partial_{\one}^L=0\) and \(\sum_j\partial_j^R=\partial_{\one}^R=0\) on the harmonic spaces. Hence
  \begin{align*}
    P_{k-1}\Hxi P_k
    &=
    -\Gamma_k
    \sum_{i,j}(\xi_c)_{ij}\,
    P_{k-1}
    \left(\partial_i^L\otimes\partial_j^R\right)
    P_k.
  \end{align*}
  Writing the singular-value decomposition \(\xi_c=\sum_{m=1}^r\sigma_m u_m v_m^\top\), we obtain
  \begin{align*}
    P_{k-1}\Hxi P_k
    &=
    -\sum_{m=1}^r\sigma_m\Gamma_k\,
    P_{k-1}
    \left(\partial_{u_m}^L\otimes\partial_{v_m}^R\right)
    P_k.
  \end{align*}
  Summing over \(k\) and adding the adjoint blocks gives \cref{eq:Hpair-explicit}.

  We now prove \cref{prop_pair_decomp_3}. Suppose \(s_k>0\). By \cref{prop_bond_item3},
  \begin{align*}
    P_k\Hsl P_k
    &=
    -\frac{s_k+1}{s_k}
    \sum_{i,j}\xi_{ij}\,
    P_k
    \left(\frac12I-\Pi_i^L\right)
    \otimes
    \left(\frac12I-\Pi_j^R\right)
    P_k.
  \end{align*}
  For the doubly centered term, \(\xi_c\one=0\) and \(\one^\top\xi_c=0\), so after expanding the product all constant and linear terms vanish. Thus its contribution is
  \begin{align*}
    -\frac{s_k+1}{s_k}
    \sum_{i,j}(\xi_c)_{ij}\,
    P_k\Pi_i^L\otimes\Pi_j^R P_k.
  \end{align*}

  For the term \(\frac1n\rho\one^\top\), we use
  \(\sum_j\Pi_j^R=kI\), and hence \(\sum_j(\frac12I-\Pi_j^R)=s_kI\). Therefore its contribution is
  \begin{align*}
    -\frac{s_k+1}{n}
    \sum_i\rho_i
    \left(\frac12I-\Pi_i^L\right)\otimes I
    &=
    \frac{s_k+1}{n}\Pi_\rho^L\otimes I,
  \end{align*}
  where we used \(\sum_i\rho_i=0\). Similarly, the term \(\frac1n\one(\rho')^\top\) contributes
  \(\frac{s_k+1}{n}I\otimes\Pi_{\rho'}^R\). Combining the three contributions gives \cref{eq:Hsl-exact}.

  Finally, if \(s_k=0\), each same-level bond block vanishes by \cref{prop_bond_item3}, and therefore \(P_k\Hsl P_k=0\).
\end{proof}

\subsection{Bounds on \texorpdfstring{\(\Hsl\)}{Hsl} and \texorpdfstring{\(\Hpair\)}{Hpair}}

Here, we prove two bounds on \(\Hsl\) and \(\Hpair\). Given a function \(f: \{0,1,\dotsc, \lfloor n/2 \rfloor\} \to \R\), it will be useful to define an operator of multiplication on each level \(\C^{\binom{[n]}{k}}\), which we will denote by \(f(\Pi^L)\) acting on the left sublattice and \(f(\Pi^R)\) acting on the right sublattice. Here we use \(\Pi^L = \sum_{i \in L} \Pi^L_i\) and same for \(R\). Notice that \(\Pi^L|_{\C^{\binom{[n]}{k}}} = kI\). When \(\psi \in \cM\), then \(\Pi^L \psi = \Pi^R \psi\). We note that for \(0 \leq k \leq n/2\) we have the inequality \(k \leq \frac2n k(n+1-k)\), which can be lifted to an operator inequality
\begin{equation}\label{eq:level-cost}
  \Pi^L \preccurlyeq \frac2n \Pi^L(n+1-\Pi^L).
\end{equation}
This inequality will become important when we will need to compare the spectra of \(\Hpair\) and \(\Hmf\).

\begin{proposition}[Same-level bound]\label{lem:sl}
In the context of \cref{prop_pair_decomp}, for every level $k$,
\begin{equation}\label{eq:sl-bound}
  \norm{P_k\Hsl P_k}_{\opn}\le 3k(\sigma_1+\Delta),
\end{equation}
where \(\Delta:=\max\{\max_i|\rho_i|,\ \max_j|\rho'_j|\}\). Equivalently, for every $\psi\in\cM$,
\begin{equation}\label{eq:sl-form-bound}
  |\ip{\psi}{\Hsl\psi}|\le3(\sigma_1+\Delta)\,\ip{\psi}{\Pi^L\psi}.
\end{equation}
\end{proposition}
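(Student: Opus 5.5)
The plan is to read off $P_k\Hsl P_k$ from the exact formula \cref{eq:Hsl-exact} of \cref{prop_pair_decomp}, \cref{prop_pair_decomp_3}, and bound each of its pieces as a \emph{diagonal multiplication operator} on the level-$k$ space $\C^{\binom{[n]}{k}}\otimes\C^{\binom{[n]}{k}}$. The projections $P_k$ are orthogonal projections, so $\norm{P_kXP_k}_{\opn}\le\norm{X|_{\text{level }k}}_{\opn}$, and we may drop them. If $s_k=0$ the block vanishes and there is nothing to prove. So assume $s_k>0$; then $s_k\ge\tfrac12$ because $s_k\in\tfrac12\Z$.

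\emph{Step 1 (doubly centered term).} The operator $\sum_{i,j}(\xi_c)_{ij}\Pi_i^L\otimes\Pi_j^R$ acts on $\psi(S,T)$, with $|S|=|T|=k$, by multiplication by $\mathbbm 1_S^{\top}\xi_c\mathbbm 1_T$. This holds by the description of $\Pi_i^L,\Pi_j^R$ after \cref{def_left_sublat_qubit_actions,def_right_sublat_qubit_actions}. Cauchy--Schwarz gives
\[
|\mathbbm 1_S^{\top}\xi_c\mathbbm 1_T|\le\sigma_1\norm{\mathbbm 1_S}\norm{\mathbbm 1_T}=\sigma_1k,
\]
so its norm is at most $\sigma_1k$. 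The prefactor satisfies $(s_k+1)/s_k=1+1/s_k\le3$, with the worst case $s_k=\tfrac12$ (odd $n$, top level). This piece therefore contributes at most $3k\sigma_1$.

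\emph{Step 2 (degree terms).} The operator $\Pi^L_\rho\otimes I$ multiplies $\psi(S,T)$ by $\sum_{i\in S}\rho_i$, which has absolute value at most $k\Delta$. The same holds for $I\otimes\Pi^R_{\rho'}$. The prefactor satisfies $(s_k+1)/n\le(n/2+1)/n$, so the two terms together contribute at most
\[
2k\Delta\cdot\frac{n/2+1}{n}=k\Delta\Bigl(1+\frac2n\Bigr)\le3k\Delta .
\]
The triangle inequality then gives \cref{eq:sl-bound}.

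\emph{Step 3 (form bound).} By definition $\Hsl=\sum_kP_k\Hsl P_k$ is block diagonal in the levels. For $\psi\in\cM$, write $\psi_k=P_k\psi$. Then
\[
|\ip{\psi}{\Hsl\psi}|\le\sum_k3k(\sigma_1+\Delta)\norm{\psi_k}^2=3(\sigma_1+\Delta)\ip{\psi}{\Pi^L\psi},
\]
since $\Pi^L$ acts as $kI$ on level $k$. This is \cref{eq:sl-form-bound}.

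The only delicate point is the constant. It requires noticing that $s_k$ can be as small as $\tfrac12$, which is why the factor in Step 1 is $3$ rather than $2$. The rest is routine once the pieces are recognized as multiplication operators.
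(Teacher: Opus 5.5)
Your proof is correct, and its outline matches the paper's. You read off the pieces of \cref{eq:Hsl-exact}, bound each by a multiple of $k$, use $s_k\ge\tfrac12$ to get $(s_k+1)/s_k\le3$, and sum over the level-diagonal blocks to get \cref{eq:sl-form-bound}.

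The genuine difference is in the doubly centered term. The paper works with the compressed operators $P_k\Pi_i^LP_k$ and $P_k\Pi_j^RP_k$, which are no longer simultaneously diagonal. It applies an operator-valued Cauchy--Schwarz inequality with $\norm{\xi_c}_{\opn}$ in the middle, using $(P_k\Pi_i^LP_k)^2\preccurlyeq P_k\Pi_i^LP_k$ and $\sum_i\Pi_i^L=kI$ on level $k$. You instead bound the operator before compressing. There every $\Pi_i^L\otimes\Pi_j^R$ is diagonal in the set basis, so the norm is the largest multiplier, $|\mathbbm 1_S^{\top}\xi_c\mathbbm 1_T|\le\sigma_1 k$. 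You then use that compression by $P_k$ cannot increase the norm.

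Your route is more elementary and avoids verifying the operator inequality. The paper's formulation stays intrinsic to $\cM_k$ but gains nothing here, since both give exactly $k\sigma_1$. The degree terms are the same estimate in substance: your pointwise multiplier $\sum_{i\in S}\rho_i$ and the paper's quadratic form $\sum_i\rho_i\ip{\phi}{\Pi_i^L\phi}$ on harmonic $\phi$ both yield $k\Delta$.
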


\begin{proof}
  If \(s_k=0\), then \(P_k\Hsl P_k=0\) by \cref{prop_bond}, \cref{prop_bond_item3}, so suppose \(s_k>0\).

  Since \(0\preccurlyeq \Pi_i^L,\Pi_j^R\preccurlyeq I\), we also have
  \begin{align*}
    \left(P_k\Pi_i^L P_k\right)^2
    \preccurlyeq
    P_k\Pi_i^L P_k, \qquad
    \left(P_k\Pi_j^R P_k\right)^2
    \preccurlyeq
    P_k\Pi_j^R P_k.
  \end{align*}
  Moreover, on level \(k\), \(\sum_i\Pi_i^L=kI\) and \(\sum_j\Pi_j^R=kI\), and hence
  \begin{align*}
    \sum_i\left(P_k\Pi_i^L P_k\right)^2
    \preccurlyeq kP_k, \qquad
    \sum_j\left(P_k\Pi_j^R P_k\right)^2
    \preccurlyeq kP_k.
  \end{align*}

  Applying the operator-valued Cauchy--Schwarz inequality to the first term of \cref{eq:Hsl-exact} gives
  \begin{align*}
    \left\|
    \sum_{i,j}(\xi_c)_{ij}\,
    P_k\Pi_i^L\otimes\Pi_j^R P_k
    \right\|_{\opn}
    &\leq
    \norm{\xi_c}_{\opn}
    \left\|
    \sum_i\left(P_k\Pi_i^L P_k\right)^2
    \right\|_{\opn}^{1/2}
    \left\|
    \sum_j\left(P_k\Pi_j^R P_k\right)^2
    \right\|_{\opn}^{1/2} \\
    &\leq
    k\sigma_1.
  \end{align*}

  For the degree-deviation terms, let \(\phi\in\cH_k^L\) be a unit vector. Since
  \(\ip{\phi}{\Pi_i^L\phi}\geq0\) and
  \(\sum_i\ip{\phi}{\Pi_i^L\phi}=k\),
  \begin{align*}
    \left|
    \ip{\phi}{\Pi_\rho^L\phi}
    \right|
    &=
    \left|
    \sum_i\rho_i\ip{\phi}{\Pi_i^L\phi}
    \right|
    \leq
    k\Delta.
  \end{align*}
  Hence \(\norm{\Pi_\rho^L}_{\opn}\leq k\Delta\), and similarly
  \(\norm{\Pi_{\rho'}^R}_{\opn}\leq k\Delta\).

  Using \cref{eq:Hsl-exact},
  \begin{align*}
    \norm{P_k\Hsl P_k}_{\opn}
    &\leq
    \frac{s_k+1}{s_k}k\sigma_1
    +
    \frac{2(s_k+1)}{n}k\Delta \\
    &\leq
    \frac{s_k+1}{s_k}k(\sigma_1+\Delta).
  \end{align*}
  Since the smallest positive value of \(s_k\) is \(1/2\), we have \((s_k+1)/s_k\leq3\), proving \cref{eq:sl-bound}.

  Finally, since \(\Hsl\) is level diagonal,
  \begin{align*}
    |\ip{\psi}{\Hsl\psi}|
    &\leq
    \sum_k
    3k(\sigma_1+\Delta)\norm{P_k\psi}^2 \\
    &=
    3(\sigma_1+\Delta)\ip{\psi}{\Pi^L\psi},
  \end{align*}
  which proves \cref{eq:sl-form-bound}.
\end{proof}

\begin{proposition}\label{prop:OI}
  In the context of \cref{prop_pair_decomp,lem:sl}, we have
  \begin{equation}\label{eq:Hpair-lower}
    \Hpair
    \succcurlyeq
    -\left(
    \frac{\hatp n}{4}
    +
    \frac{3\sigma_1^2}{\hatp n}
    \right)\Pi^L
    -
    3(1-\hatp)nI.
  \end{equation}
  Moreover, if \(\sigma_1+\Delta\leq\hatp n/100\), then
  \begin{equation}\label{eq:OI-def}
    \Hxi
    \succcurlyeq
    -\frac{3\hatp}{4}\,
    \Pi^L(n+1-\Pi^L)
    -3nI.
  \end{equation}
\end{proposition}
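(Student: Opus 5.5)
The plan is to bound $\Hpair$ level by level with a weighted Cauchy--Schwarz inequality, and then to add the same-level bound of \cref{lem:sl} to obtain \cref{eq:OI-def}. Write $\psi=\sum_k\psi_k$ with $\psi_k\in\cM_k$. By \cref{eq:Hpair-explicit},
\[
  \ip{\psi}{\Hpair\psi}=-2\,\mathrm{Re}\sum_{k\ge1}\Gamma_k\,\ip{\psi_{k-1}}{B\psi_k},
  \qquad
  B:=\sum_m\sigma_m\,\partial_{u_m}^L\otimes\partial_{v_m}^R ,
\]
where the projections $P_{k-1}$, $P_k$ can be dropped because $\psi_{k-1}\in\cM_{k-1}$ and $\psi_k\in\cM_k$.

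\textbf{Step 1: Cauchy--Schwarz over $m$.} Move the left factor onto the bra:
\[
  \ip{\psi_{k-1}}{B\psi_k}=\sum_m\sigma_m\ip{((\partial_{u_m}^L)^\dagger\otimes I)\psi_{k-1}}{(I\otimes\partial_{v_m}^R)\psi_k}.
\]
Cauchy--Schwarz over $m$ then gives
\[
  |\ip{\psi_{k-1}}{B\psi_k}|\le X_k^{1/2}\,Y_k^{1/2},
\]
where $X_k:=\ip{\psi_{k-1}}{\sum_m\sigma_m^2\,\partial_{u_m}^L(\partial_{u_m}^L)^\dagger\psi_{k-1}}$ and $Y_k:=\ip{\psi_k}{\sum_m(\partial_{v_m}^R)^\dagger\partial_{v_m}^R\,\psi_k}$.

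For $Y_k$: the $v_m$ are orthonormal, so $\sum_m v_mv_m^\top\preccurlyeq I$. Writing the quadratic form in the coordinates $\partial^R_j\psi$ gives $\sum_m(\partial_{v_m}^R)^\dagger\partial_{v_m}^R\preccurlyeq\sum_j(\partial_j^R)^\dagger\partial_j^R=\Pi^R$. Hence $Y_k\le k\norm{\psi_k}^2$.

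For $X_k$: \cref{lemma_comm_rel_partial_partial_dagger} gives
\[
  \partial_u^L(\partial_u^L)^\dagger=(\partial_u^L)^\dagger\partial_u^L+\norm u^2-2\Pi^L_{u\odot u}\preccurlyeq(\partial_u^L)^\dagger\partial_u^L+\norm u^2 .
\]
The same argument as for $Y_k$ shows $\sum_m\sigma_m^2(\partial_{u_m}^L)^\dagger\partial_{u_m}^L\preccurlyeq\sigma_1^2\Pi^L$. The constant terms contribute $\sum_m\sigma_m^2=\norm{\xi_c}_F^2$. Since $\xi_c$ is obtained from $\xi$ by an orthogonal projection (double centering),
\[
  \norm{\xi_c}_F^2\le\norm{\xi}_F^2=\sum A_{ij}^2-\hatp^2n^2\le\hatp(1-\hatp)n^2,
\]
using $A_{ij}\le1$. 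Therefore $X_k\le\bigl(\sigma_1^2(k-1)+\hatp(1-\hatp)n^2\bigr)\norm{\psi_{k-1}}^2$.

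\textbf{Step 2: AM--GM with the right weights.} Apply AM--GM so that the level-$k$ vector receives coefficient $\tfrac{\hatp n}{4}k$:
\[
  2\Gamma_k X_k^{1/2}Y_k^{1/2}\le\frac{\hatp n}{4}\,k\norm{\psi_k}^2+\frac{4\Gamma_k^2}{\hatp n}\bigl(\sigma_1^2(k-1)+\hatp(1-\hatp)n^2\bigr)\norm{\psi_{k-1}}^2 .
\]
Next, check that $\Gamma_k^2\le3/4$ for all admissible $k$. This holds because $\Gamma_k^2=\tfrac14\cdot\tfrac{2s_k+3}{2s_k+1}$ is decreasing in $s_k$. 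The value $s_k=0$ never arises as the upper level of a pair: it would require $s_{k-1}=-1$, and at $k=n/2$ the relevant spin in $\Gamma_k$ is $s_k=0$, which gives $\tfrac34$. So the coefficient on $\norm{\psi_{k-1}}^2$ is at most $\tfrac{3\sigma_1^2}{\hatp n}(k-1)+3(1-\hatp)n$.

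Summing over $k$, each level occurs once as the ``upper'' vector and once as the ``lower'' vector. This yields exactly $\Hpair\succcurlyeq-(\tfrac{\hatp n}{4}+\tfrac{3\sigma_1^2}{\hatp n})\Pi^L-3(1-\hatp)nI$, which is \cref{eq:Hpair-lower}. The main obstacle is this bookkeeping: one must arrange that the $\sigma_1^2$ weight falls on the lower level as $k-1$ rather than $k$, and that the $\Gamma_k$ factor is absorbed into the constant $3$. Choosing the Cauchy--Schwarz split as above, with the adjoint on the lower vector and the commutator used only there, is what makes the weights match.

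\textbf{Step 3: the bound \cref{eq:OI-def}.} $\Hxi=\Hpair+\Hsl$ on $\cM$. By \cref{eq:sl-form-bound}, $\Hsl\succcurlyeq-3(\sigma_1+\Delta)\Pi^L\succcurlyeq-0.03\,\hatp n\,\Pi^L$. Also $\sigma_1\le\hatp n/100$ gives $3\sigma_1^2/(\hatp n)\le0.0003\,\hatp n$. Hence
\[
  \Hxi\succcurlyeq-0.29\,\hatp n\,\Pi^L-3nI .
\]
Finally \cref{eq:level-cost} gives $0.29\,\hatp n\,\Pi^L\preccurlyeq0.58\,\hatp\,\Pi^L(n+1-\Pi^L)\preccurlyeq\tfrac34\hatp\,\Pi^L(n+1-\Pi^L)$, which proves the claim.
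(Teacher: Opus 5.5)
Your argument is correct and is the paper's proof in expanded form. The paper writes the same weighted AM--GM as an operator sum of squares $0\preccurlyeq Y_m^\dagger Y_m$ with $\omega=\hatp n/4$. It uses the same ingredients: $\Gamma_k^2\le\tfrac34$, the commutator bound from \cref{lemma_comm_rel_partial_partial_dagger}, $\sum_m(\partial_{u_m}^L)^\dagger\partial_{u_m}^L\preccurlyeq\Pi^L$, and $\norm{\xi_c}_F^2\le\hatp(1-\hatp)n^2$. It then adds \cref{eq:sl-form-bound} and \cref{eq:level-cost} exactly as you do.

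The differences are cosmetic. The paper puts the creation operator and the commutator on the right sublattice, and it proves an upper bound that it flips with the parity conjugation $\sum_k(-1)^kP_k$. Your vector-level version gets the lower bound directly and never evaluates $\Gamma_k$ outside the physical range.

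Your justification of $\Gamma_k^2\le\tfrac34$ is muddled: $s_k=0$ does occur as the upper level of the pair $(n/2-1,n/2)$. The bound still holds, simply because $s_k\ge0$ for all $k\le\lfloor n/2\rfloor$ and $\frac{2s+3}{2s+1}$ is decreasing in $s$.
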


\begin{proof}
  This follows from a sum of Hermitian squares proof of positivity. In particular, fix \(\omega>0\) and \(m\in[r]\). We use the notation \(\Gamma_{\Pi^R}=f(\Pi^R)\), where \(f(k)=\frac12\sqrt{\frac{n-2k+3}{n-2k+1}}\) as defined in \cref{prop_bond}, \cref{prop_bond_item2}. Let
  \begin{equation}
    Y_m
    :=
    \sqrt{\omega}\,\partial_{u_m}^L\otimes I
    +
    \frac{\sigma_m}{\sqrt{\omega}}\,
    I\otimes\Gamma_{\Pi^R}(\partial_{v_m}^R)^\dagger.
  \end{equation}
  We then have
  \begin{align*}
    0\preccurlyeq Y_m^\dagger Y_m
    &=
    \left(
    \sqrt{\omega}(\partial_{u_m}^L)^\dagger\otimes I
    +
    \frac{\sigma_m}{\sqrt{\omega}}
    I\otimes\partial_{v_m}^R\Gamma_{\Pi^R}
    \right)
    \left(
    \sqrt{\omega}\partial_{u_m}^L\otimes I
    +
    \frac{\sigma_m}{\sqrt{\omega}}
    I\otimes\Gamma_{\Pi^R}(\partial_{v_m}^R)^\dagger
    \right) \\
    &=
    \omega(\partial_{u_m}^L)^\dagger\partial_{u_m}^L\otimes I \\
    &\quad+
    \sigma_m\left(
    (\partial_{u_m}^L)^\dagger\otimes \Gamma_{\Pi^R}
    (\partial_{v_m}^R)^\dagger
    +
    \partial_{u_m}^L\otimes
    \partial_{v_m}^R\Gamma_{\Pi^R}
    \right) \\
    &\quad+
    \frac{\sigma_m^2}{\omega}
    I\otimes
    \partial_{v_m}^R\Gamma_{\Pi^R}^2(\partial_{v_m}^R)^\dagger.
  \end{align*}
  Restricted to \(\cM\), the middle term is the negative of the \(m\)th summand of \(\Hpair\). Thus, after summing over \(m\),
  \begin{align}
    \Hpair
    &\preccurlyeq
    \omega\sum_m
    (\partial_{u_m}^L)^\dagger\partial_{u_m}^L\otimes I +
    \frac1\omega\sum_m\sigma_m^2
    I\otimes
    \partial_{v_m}^R\Gamma_{\Pi^R}^2(\partial_{v_m}^R)^\dagger.
    \label{eq:Hpair-upper-temp}
  \end{align}

  We use four facts to get to the next step. First, because the \(u_m\)'s are orthonormal, \(\sum_m u_mu_m^\mathsf{T}\preccurlyeq I\), and thus
  \(\sum_m(\partial_{u_m}^L)^\dagger\partial_{u_m}^L\preccurlyeq\Pi^L\).
  Second, \(\Gamma_k^2\leq3/4\) for every \(k\leq n/2\).
  Third, \(\sigma_1\geq\sigma_m\) for all \(m\).
  Finally, \cref{lemma_comm_rel_partial_partial_dagger} tells us that
  \(\partial_{v_m}^R(\partial_{v_m}^R)^\dagger
  \preccurlyeq
  (\partial_{v_m}^R)^\dagger\partial_{v_m}^R+I\).
  Applying the first fact again to the \(v_m\)'s gives
  \begin{align}
    \Hpair
    &\preccurlyeq
    \omega\Pi^L
    +
    \frac{3}{4\omega}
    \left(
    \sigma_1^2\Pi^R
    +
    \sum_m\sigma_m^2I
    \right)
    \nonumber\\
    &=
    \left(
    \omega+\frac{3\sigma_1^2}{4\omega}
    \right)\Pi^L
    +
    \frac{3\norm{\xi_c}_F^2}{4\omega}I
    \nonumber\\
    &\preccurlyeq
    \left(
    \omega+\frac{3\sigma_1^2}{4\omega}
    \right)\Pi^L
    +
    \frac{3\hatp(1-\hatp)n^2}{4\omega}I.
    \label{eq:Hpair-upper}
  \end{align}
  Here for the first equality we used that \(\Pi^L=\Pi^R\) on \(\cM\) and \(\sum_m\sigma_m^2=\norm{\xi_c}_F^2\). Then for the second inequality we use the fact that \(\norm{\xi_c}_F^2\leq\norm{\xi}_F^2 = \sum_{i,j}(A_{ij}-\hatp)^2 = |E|(1-\hatp)^2+(n^2-|E|)\hatp^2\). Moreover, using \(|E|=\hatp n^2\) this becomes \(\norm{\xi}_F^2 = \hatp(1-\hatp)n^2\).

  Because \(\Hpair\) is tridiagonal with zeros on the diagonal, the spectrum is symmetric. In particular, let \(W=\sum_k(-1)^kP_k\), then \(W\Hpair W=-\Hpair\) and \(W\Pi^L W=\Pi^L\). Conjugating \cref{eq:Hpair-upper} by \(W\) gives
  \begin{equation}
    \Hpair
    \succcurlyeq
    -\left(
    \omega+\frac{3\sigma_1^2}{4\omega}
    \right)\Pi^L
    -
    \frac{3\hatp(1-\hatp)n^2}{4\omega}I.
  \end{equation}
  Taking \(\omega=\hatp n/4\) proves \cref{eq:Hpair-lower}.

  Combining this with \cref{lem:sl} gives
  \begin{equation}
    \Hxi
    \succcurlyeq
    -\left(
    \frac{\hatp n}{4}
    +
    \frac{3\sigma_1^2}{\hatp n}
    +
    3(\sigma_1+\Delta)
    \right)\Pi^L
    -
    3(1-\hatp)nI.
  \end{equation}
  Using \cref{eq:level-cost} and the hypothesis \(\sigma_1+\Delta\leq\hatp n/100\), we obtain
  \begin{align*}
    \Hxi
    &\succcurlyeq
    -\left(
    \frac{\hatp}{2}
    +
    \frac{6\sigma_1^2}{\hatp n^2}
    +
    \frac{6(\sigma_1+\Delta)}{n}
    \right)
    \Pi^L(n+1-\Pi^L)
    -3nI \\
    &\succcurlyeq
    -\frac{3\hatp}{4}
    \Pi^L(n+1-\Pi^L)
    -3nI,
  \end{align*}
  which proves \cref{eq:OI-def}.
\end{proof}

\section{From the quadratic model to the singlet spectrum}\label{section_boson}

In this section we determine the bottom of the singlet spectrum: the two lowest
eigenvalues of $H$, and hence its gap, together with a lower bound on the ground state's
overlap with the Lieb-Mattis state. We start from the decomposition of
\cref{prop_pair_decomp},
\[
H=\Hmf+\Hpair+\Hsl
\]
Relative to the Lieb-Mattis state, the mean-field Hamiltonian $\Hmf$ charges level $k$ the
energy $\hatp k(n-k+1)$, the pair term $\Hpair$ changes the level by one, and the
same-level term $\Hsl$ acts within each level.

Intuitively, near the Lieb-Mattis state the level cost is almost linear,
$\hatp k(n-k+1)=\hatp nk+O(k^{2})$, and $\Hpair$ creates or removes one excitation on each
sublattice. At low levels $H$ should therefore behave like a quadratic bosonic Hamiltonian,
whose pairing strengths are the singular values of $\xi_c$ and which a Bogoliubov
transformation solves exactly~\cite{HolsteinPrimakoff1940,Anderson1952,Bogoliubov1947}. We
call $\Hcov\coloneq\Hmf+\Hpair$ the \emph{comparison operator}; it omits the same-level term,
which we absorb separately.

This picture is accurate only while $k$ is small. We make it precise at low levels by
embedding the pair model into bosonic Fock space (\cref{section_compression}), absorb
$\Hsl$ into a small change of the mean-field coefficient (\cref{section_shift}), join the
low levels to the high-level bound of \cref{sec:rep-theory} by IMS localization
(\cref{section_transfer}), and finally project a truncated squeezed state into the
singlet space for a matching upper bound (\cref{section_wick}).

\subsection{The Bogoliubov Hamiltonian}\label{section_bogo}

We start by defining the bosonic model. Let
$\cF\coloneq\ell^2(\mathbb Z_{\ge0}^{L\sqcup R})$ be the bosonic Fock space over the two
sublattices, with vacuum $\ket{\emptyset,\emptyset}$, canonical operators
$a_i,a_i^{\dagger}$ on $L$ and $b_j,b_j^{\dagger}$ on $R$, and number operators
$\cN_L\coloneq\sum_ia_i^{\dagger}a_i$ and $\cN_R\coloneq\sum_jb_j^{\dagger}b_j$. For
$u\in\mathbb R^{L}$ and $v\in\mathbb R^{R}$, let $a_u\coloneq\sum_iu(i)a_i$ and
$b_v\coloneq\sum_jv(j)b_j$.

At low levels we make the substitutions
\[
  \partial_{u_m}^L\rightsquigarrow a_{u_m},
  \qquad
  \partial_{v_m}^R\rightsquigarrow b_{v_m},
  \qquad
  \Gamma_k\rightsquigarrow\tfrac12,
  \qquad
  k(n-k+1)\rightsquigarrow nk
\]
Thus the exact level cost is linearized around $k=0$, while the SVD
$\xi_c=\sum_m\sigma_m u_m v_m^\top$ decomposes the pair term into independent left-right
modes. \cref{lemma_embed} below makes the first two substitutions precise.

\begin{definition}[Bogoliubov Hamiltonian]
Let
\[
  \omega\coloneq\frac{\hatp n}{2},\qquad g_m\coloneq\frac{\sigma_m}{2},\qquad
  \omega_m\coloneq\sqrt{\omega^{2}-g_m^{2}}
\]
The \emph{Bogoliubov Hamiltonian} is the operator on $\cF$
\[
  \HB\coloneq\Emf+\omega\,(\cN_L+\cN_R)-\sum_{m=1}^{r}g_m\bigl(a_{u_m}b_{v_m}+a_{u_m}^{\dagger}b_{v_m}^{\dagger}\bigr)
\]
where $\Emf$ is the Lieb-Mattis baseline of~\cref{eq:Hmf-final}. The
\emph{equal-number subspace} is $\cF_0\coloneq\{\cN_L=\cN_R\}$.
\end{definition}

The operator $\HB$ preserves $\cN_L-\cN_R$, so $\cF_0$ is invariant. It is the relevant
subspace because level $k$ of the singlet space has $k$ excitations on each side.

When $\sigma_1<\hatp n$, every $\omega_m$ is positive and $\HB$ is diagonalized by a
Bogoliubov transformation. Let
\[
  \omega_\star\coloneq\min_m\omega_m=\sqrt{\omega^{2}-\sigma_1^{2}/4},
  \qquad
  t_m\coloneq\frac{g_m}{\omega+\omega_m}
\]
The ground state is the \emph{squeezed vacuum}
\begin{equation}\label{eq_squeezed}
  \Omega\coloneq\exp\Bigl(\sum_m t_m a_{u_m}^{\dagger}b_{v_m}^{\dagger}\Bigr)\ket{\emptyset,\emptyset}
\end{equation}
Since $t_m\le\sigma_m/(\hatp n)<1$, $\Omega$ has finite norm. Double centering is an
orthogonal projection for the Frobenius inner product, and $0\le A_{ij}\le1$ gives
$\norm{\xi}_F^{2}=\sum_{ij}A_{ij}^{2}-\hatp^{2}n^{2}\le\hatp(1-\hatp)n^{2}$, so
\begin{equation}\label{eq_frob}
  \sum_m\sigma_m^{2}=\norm{\xi_c}_F^{2}\le\hatp(1-\hatp)n^{2}
\end{equation}
and therefore
\begin{equation}\label{eq_t_bounds}
  \norm t^{2}\coloneq\sum_m t_m^{2}\le\frac{1-\hatp}{\hatp},\qquad
  t_{\max}^{2}\coloneq\max_m t_m^{2}\le\Bigl(\frac{\sigma_1}{\hatp n}\Bigr)^{2}
\end{equation}

\begin{lemma}[Bogoliubov diagonalization]\label{lemma_bogo}
  Suppose $\sigma_1<\hatp n$. Then $\Omega\in\cF_0$ and $\HB\Omega=\Estar\Omega$, where
  \begin{equation}\label{eq_bogo_inputs}
    \lambda_0(\HB|_{\cF_0})=\Estar\coloneq\Emf+\sum_m(\omega_m-\omega),
    \qquad
    \lambda_1(\HB|_{\cF_0})\ \ge\ \Estar+2\omega_\star
  \end{equation}
\end{lemma}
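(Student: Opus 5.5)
The plan is to reduce $\HB$ to a sum of independent two-mode squeezing Hamiltonians, one for each singular pair, and diagonalize each by the standard $SU(1,1)$ Bogoliubov transformation.

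\emph{Mode decomposition.} Complete $\{u_m\}_{m\le r}$ to an orthonormal basis $\{u_m\}_{m\le n}$ of $\mathbb R^L$, and $\{v_m\}$ to an orthonormal basis of $\mathbb R^R$. Set $a_m\coloneq a_{u_m}$ and $b_m\coloneq b_{v_m}$. These are canonical operators, because the change of basis is real orthogonal. The number operators are basis independent: $\cN_L=\sum_m a_m^\dagger a_m$ and $\cN_R=\sum_m b_m^\dagger b_m$. Setting $g_m\coloneq0$ for $m>r$, we obtain
\[
  \HB=\Emf+\sum_{m=1}^{n}h_m,
  \qquad
  h_m\coloneq\omega\bigl(a_m^\dagger a_m+b_m^\dagger b_m\bigr)-g_m\bigl(a_mb_m+a_m^\dagger b_m^\dagger\bigr),
\]
where the $h_m$ act on commuting tensor factors. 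Each $h_m$ also preserves the mode difference $a_m^\dagger a_m-b_m^\dagger b_m$.

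\emph{Single-mode diagonalization.} Since $g_m<\omega$, define
\[
  c_m\coloneq\cosh\theta_m\,a_m-\sinh\theta_m\,b_m^\dagger,
  \qquad
  d_m\coloneq\cosh\theta_m\,b_m-\sinh\theta_m\,a_m^\dagger,
  \qquad
  \tanh2\theta_m=g_m/\omega .
\]
These are canonical. A direct computation gives
\[
  h_m=\omega_m\bigl(c_m^\dagger c_m+d_m^\dagger d_m\bigr)+(\omega_m-\omega).
\]
The constant comes from normal ordering: it equals $2\omega\sinh^2\theta_m-2g_m\sinh\theta_m\cosh\theta_m$, which simplifies to $\omega_m-\omega$. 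The vacuum of $c_m,d_m$ is $\exp(\tanh\theta_m\,a_m^\dagger b_m^\dagger)\ket{\emptyset}$. One checks $\tanh\theta_m=g_m/(\omega+\omega_m)=t_m$ by the half-angle identity. Alternatively, verify directly that $(a_m-t_mb_m^\dagger)e^{t_ma_m^\dagger b_m^\dagger}\ket{\emptyset}=0$, which uses $a_me^{ta^\dagger b^\dagger}=e^{ta^\dagger b^\dagger}(a_m+tb_m^\dagger)$. The product of these vacua is $\Omega$. Hence $\HB\Omega=\Estar\Omega$, and $\Omega\in\cF_0$ because every term $a_m^\dagger b_m^\dagger$ creates one particle on each side.

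\emph{Spectrum on $\cF_0$.} The Bogoliubov map is unitarily implemented on each finite mode pair, since it is a two-mode squeeze $e^{\theta(a^\dagger b^\dagger-ab)}$. The spectrum of $\HB$ is therefore $\Estar+\sum_m\omega_m(n_m+n'_m)$, where $n_m,n'_m\ge0$ are the occupation numbers of $c_m^\dagger c_m$ and $d_m^\dagger d_m$. The unique ground state is $\Omega$, so $\lambda_0(\HB|_{\cF_0})=\Estar$. For the gap, the key observation is that the squeeze commutes with $a_m^\dagger a_m-b_m^\dagger b_m$, so
\[
  \cN_L-\cN_R=\sum_m\bigl(c_m^\dagger c_m-d_m^\dagger d_m\bigr).
\]
On $\cF_0$ this means $\sum_m n_m=\sum_m n'_m$. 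Any state orthogonal to $\Omega$ has some $n_m\ge1$, hence also some $n'_{m'}\ge1$, and its energy is at least $\Estar+\omega_m+\omega_{m'}\ge\Estar+2\omega_\star$. Since $\omega_m=\omega\ge\omega_\star$ for $m>r$, this bound covers the inert modes as well. Together with the orthogonal eigenbasis of Fock states in the $c,d$ modes, this gives $\lambda_1\ge\Estar+2\omega_\star$.

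\emph{Main obstacle.} The step needing the most care is rigor in infinite-dimensional Fock space: self-adjointness, domains, and the unitarity of the squeeze. I would handle this by working on each two-mode factor, where $h_m$ preserves $a^\dagger a-b^\dagger b$. On the sector $a^\dagger a-b^\dagger b=\delta$ it is a Jacobi (tridiagonal) matrix in the pair number, and it is bounded below since $g_m<\omega$. This lets me verify the eigenvectors explicitly and avoid unbounded-operator subtleties.
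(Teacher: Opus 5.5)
Your proposal is correct and follows the paper's proof essentially step for step. Both arguments complete the singular vectors to orthonormal bases, apply the same two-mode Bogoliubov rotation (your $\tanh 2\theta_m=g_m/\omega$ is equivalent to the paper's $\tanh\theta_m=t_m$), use the fact that the rotation preserves $a_m^\dagger a_m-b_m^\dagger b_m$ so that $\cF_0$ becomes $\sum_m n_m=\sum_m n'_m$, and obtain the same $2\omega_\star$ gap. The only difference is rigor: you propose Jacobi matrices on fixed-difference sectors, whereas the paper cites the unitarity of $e^{K}$ with $K=\sum_m\theta_m(a_m^\dagger b_m^\dagger-a_mb_m)$.
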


The transformation decouples $\HB$ into modes of frequencies $\omega_m$. The gap within
$\cF_0$ is $2\omega_\star$ because an admissible excitation must preserve equal particle
numbers on the two sides. The state $\Omega$ can be viewed as the Lieb-Mattis vacuum
dressed by correlated excitation pairs. Although this dressing lowers the energy by
order $n$, the bound $\norm t^{2}=O(1)$ shows that it involves only $O(1)$ excitation
pairs on average, as in the mean-field Bose gas~\cite{LNSS2015}. The stronger bound on
$t_{\max}$ will later control the error when $\Omega$ is projected back to the hard-core
spin space. The proof is in \cref{appendix_bogo}.

\subsection{Embedding the singlet space and bounding low levels}\label{section_compression}

The Bogoliubov Hamiltonian is solvable; the question is how faithfully it represents
$\Hcov$ at low levels. We start by placing $\cM$ inside $\cF$. We split
$\cF=\cF_L\otimes\cF_R$ with $\cF_L\coloneq\ell^2(\mathbb Z_{\ge0}^{L})$ and
$\cF_R\coloneq\ell^2(\mathbb Z_{\ge0}^{R})$. Set functions enter Fock space through the map
\begin{align*}
  \cB^L\colon\bigoplus_{k=0}^{n}\C^{\binom{[n]}{k}}&\to\cF_L\\
  f&\mapsto\sum_{S}f(S)\ket{\one_S}
\end{align*}
where $\ket{\one_S}$ has one boson on each site of $S$. In particular, subsets specify the
occupied sites, just as they specify the flipped spins under $\cV^L$. We define
$\cB^R\colon\bigoplus_k\C^{\binom{[n]}{k}}\to\cF_R$ in the same way. Both maps are
isometries onto the hard-core subspaces, that is, onto the vectors with at most one boson per
site, and we identify set functions with their images. In particular the harmonic spaces
$\cH_k^L$ and $\cH_k^R$ become subspaces of $\cF_L$ and $\cF_R$. We let
\[
  \cH^L\coloneq\bigoplus_{k=0}^{\lfloor n/2\rfloor}\cH_k^L\subseteq\cF_L,
  \qquad
  \cH^R\coloneq\bigoplus_{k=0}^{\lfloor n/2\rfloor}\cH_k^R\subseteq\cF_R
\]
with orthogonal projections $\Pi_{\cH^L}$, $\Pi_{\cH^R}$, and we denote by $\Pi_\cM$ the
orthogonal projection of $\cF$ onto $\cM$. For $u\in\mathbb R^L$ and $v\in\mathbb R^R$ we
use the weighted down operators $\partial_u^L$ and $\partial_v^R$ of \cref{ss:decomposition}.
They map $\cH_k^L$ into $\cH_{k-1}^L$ and $\cH_k^R$ into $\cH_{k-1}^R$
(\cref{ss:decomposition}), so they preserve $\cH^L$ and $\cH^R$. Their adjoints
$(\partial_u^L)^{\dagger}$, $(\partial_v^R)^{\dagger}$ are taken on the whole set-function
space, and need not preserve $\cH^L$, $\cH^R$. The level of a set function is measured by
$\Pi^L$ and $\Pi^R$, and functions of the level act through $f(\Pi^L)$, $f(\Pi^R)$, as in
\cref{sec:rep-theory}.

\begin{lemma}[Fock embedding of the pair model]\label{lemma_embed}
  Let $u\in\mathbb R^L$ and $v\in\mathbb R^R$.
  \begin{enumerate}[label=\textup{(\roman*)}]
    \item\label{lemma_embed_item1} $\cH^L$ consists of the hard-core vectors $\psi\in\cF_L$ with
          $\sum_{i\in L}a_i\psi=0$, and $\cH^R$ of the hard-core vectors
          $\psi\in\cF_R$ with $\sum_{j\in R}b_j\psi=0$. Moreover
          $\cM=(\cH^L\otimes\cH^R)\cap\cF_0$. On hard-core vectors $\cN_L$ acts as $\Pi^L$
          and $\cN_R$ as $\Pi^R$, so $\cN_L=\cN_R=\Pi^L$ on $\cM$.
    \item\label{lemma_embed_item2} On hard-core vectors $a_u$ acts as $\partial_u^L$, and $a_u^{\dagger}$ followed by
          the projection onto hard-core vectors acts as $(\partial_u^L)^{\dagger}$. In
          particular $a_u$ maps $\cH^L$ into itself, and
          $\Pi_{\cH^L}a_u^{\dagger}|_{\cH^L}=\Pi_{\cH^L}(\partial_u^L)^{\dagger}|_{\cH^L}$;
          likewise on $R$ with $b_v$ and $\partial_v^R$.
    \item\label{lemma_embed_item3} The operator $a_ub_v$ maps $\cM$ into itself, with
          \[
            a_ub_v|_\cM=\partial_u^L\otimes\partial_v^R|_\cM,\qquad
            \Pi_\cM a_u^{\dagger}b_v^{\dagger}|_\cM
            =\Pi_\cM\bigl((\partial_u^L)^{\dagger}\otimes(\partial_v^R)^{\dagger}\bigr)|_\cM
          \]
          For all functions $f,g$ on $\mathbb Z_{\ge0}$, the operator $f(\cN_L)g(\cN_R)$
          commutes with $\Pi_\cM$ and acts on $\cM$ as $f(\Pi^L)g(\Pi^L)$.
  \end{enumerate}
\end{lemma}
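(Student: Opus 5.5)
The plan is to work entirely on the standard Fock basis $\ket{\mathbf n}$ and compare matrix elements, starting with item~\ref{lemma_embed_item2}. On $\cF_L$ we have $a_i\ket{\mathbf n}=\sqrt{n_i}\,\ket{\mathbf n-e_i}$ and $a_i^\dagger\ket{\mathbf n}=\sqrt{n_i+1}\,\ket{\mathbf n+e_i}$.

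\textbf{Single-mode identities.} For a hard-core vector $\ket{\one_S}$, the action of $a_i$ is as follows.
\begin{itemize}
\item If $i\in S$, then $a_i\ket{\one_S}=\ket{\one_{S\setminus\{i\}}}$, with coefficient $\sqrt1=1$.
\item If $i\notin S$, then $a_i\ket{\one_S}=0$.
\end{itemize}
Hence, for $f=\sum_S f(S)\ket{\one_S}$, the coefficient of $\ket{\one_{S'}}$ in $a_i f$ is $f(S'\cup\{i\})$ when $i\notin S'$, and $0$ otherwise. This is exactly $(\partial_i^L f)(S')$. By linearity, $a_u\cB^L=\cB^L\partial_u^L$.

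For the creation operator, $a_i^\dagger\ket{\one_S}$ equals $\ket{\one_{S\cup\{i\}}}$ if $i\notin S$. If $i\in S$, it is $\sqrt2$ times a vector with a doubly occupied site. The projection onto hard-core vectors kills the latter, so $P_{\mathrm{hc}}a_u^\dagger\cB^L=\cB^L(\partial_u^L)^\dagger$. Projecting further by $\Pi_{\cH^L}\le P_{\mathrm{hc}}$ gives the stated compression identity. The right sublattice is identical, since $\cB^R$ and $\partial_j^R$ use the same set-function formulas.

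\textbf{Item~\ref{lemma_embed_item1}.} For $u=\one$, the previous step gives $\sum_i a_i=\partial^L_\one=\partial^L$ on hard-core vectors. So the hard-core kernel of $\sum_i a_i$ is $\bigoplus_{k}\ker\partial^L|_{\C^{\binom{[n]}{k}}}$. This equals $\cH^L$, because $\cH_k^L=0$ for $k>n/2$ by the discussion after \cref{def_left_harmonic_function}. The number operator $\cN_L$ acts on $\ket{\one_S}$ as $|S|$, which is $\Pi^L$ on level $|S|$.

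Now decompose $\cH^L\otimes\cH^R=\bigoplus_{k,\ell}\cH^L_k\otimes\cH^R_\ell$. The block $(k,\ell)$ has $\cN_L=k$ and $\cN_R=\ell$, so intersecting with $\cF_0$ keeps exactly the diagonal blocks $k=\ell$, which is $\cM$. It follows that $\cN_L=\cN_R=\Pi^L$ on $\cM$.

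\textbf{Item~\ref{lemma_embed_item3}.} Since $a_u$ acts only on $\cF_L$ and $b_v$ only on $\cF_R$, item~\ref{lemma_embed_item2} gives $a_ub_v=\partial_u^L\otimes\partial_v^R$ on hard-core tensors. These operators map $\cH^L_k\otimes\cH^R_k$ into $\cH^L_{k-1}\otimes\cH^R_{k-1}$, using the commutation $[\partial_u^L,\partial_\one^L]=0$ from \cref{ss:decomposition}. The image therefore lies in $\cM$.

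For the creation part, write $\Pi_\cM=\Pi_\cM P_{\mathrm{hc}}$, where $P_{\mathrm{hc}}=P^L_{\mathrm{hc}}\otimes P^R_{\mathrm{hc}}$ and $\cM$ sits inside the hard-core subspace. This factorization lets us apply the single-sided creation identity on each factor.

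Finally, $f(\cN_L)g(\cN_R)$ is diagonal in the Fock basis. It also preserves the hard-core subspace, each $\cH^L_k\otimes\cH^R_\ell$ block, and $\cF_0$. It therefore commutes with the projections onto each of these, and in particular with $\Pi_\cM$. On $\cM_k$ it acts as the scalar $f(k)g(k)$.

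\textbf{Main obstacle.} The only real subtlety is the creation-operator identity: one must check that the $\sqrt2$ doubly occupied components are genuinely orthogonal to the hard-core subspace. They are, being distinct Fock basis vectors, so the compression is exact. Beyond that, the rest is bookkeeping of levels.
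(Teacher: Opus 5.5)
Your proposal is correct and follows essentially the same route as the paper. Both arguments identify $a_i$ on hard-core vectors with $\partial_i^L$, obtain $(\partial_u^L)^{\dagger}$ by discarding the doubly occupied terms of $a_u^{\dagger}$, read off $\cH^L$ as the hard-core kernel of $\sum_i a_i$ (using $\cH_k^L=0$ for $k>n/2$), intersect with $\cF_0$ to keep the diagonal blocks, and use that each $\cM_k$ lies in a joint eigenspace of $\cN_L,\cN_R$. Your factorization $\Pi_\cM=\Pi_\cM(P^L_{\mathrm{hc}}\otimes P^R_{\mathrm{hc}})$ just spells out the paper's \emph{compress (ii) on both sides} step for the creation identity.
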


\begin{proof}
We first prove~(i). On hard-core vectors $a_i$ deletes $i$ from the set of occupied sites,
so it acts as $\partial_i^L$, and $\sum_ia_i$ acts as the down operator $\partial^L$
by the definition of the down operator in \cref{ss:harmonics}. The hard-core vectors annihilated by $\sum_ia_i$ are therefore
the sums of harmonic functions, and $\cH_k^L=0$ for $k>n/2$. This gives $\cH^L$, and $\cH^R$
follows in the same way with $b_j$ acting as $\partial_j^R$. Intersecting
$\cH^L\otimes\cH^R=\bigoplus_{k,k'}\cH_k^L\otimes\cH_{k'}^R$ with $\cF_0$ keeps $k=k'$.

Next, for~(ii), $a_u^{\dagger}$ adds a boson at each site with weight $u(i)$. Projecting
onto hard-core vectors discards the doubly occupied terms, which is the action of the
pullback of $\sum_iu(i)S_i^-$. Since $\partial_u^L$ maps $\cH^L$ into itself, so does $a_u$,
and compressing the adjoint to $\cH^L$ gives the second identity.

Finally, for~(iii), $a_ub_v$ lowers both numbers by one and preserves
$\cH^L\otimes\cH^R$, hence $\cM$, and the adjoint identity follows by compressing~(ii) on
both sides. The subspace $\cM$ is spanned by joint eigenvectors of $\cN_L$ and $\cN_R$, so
$\Pi_\cM$ commutes with $f(\cN_L)g(\cN_R)$.
\end{proof}

Thus $\cM$ consists of bosonic states satisfying three constraints: at most one excitation
per site, annihilation by the collective operator on each side, and equal excitation counts
on the two sides. We call these the \emph{hard-core}, \emph{harmonic}, and
\emph{equal-number constraints}. Creation, unlike
annihilation, may doubly occupy a site and must be projected back, and the projection only
weakens it. For a unit vector $u\in\mathbb R^L$ and a hard-core vector $\psi$, the vector
$(\partial_u^L)^{\dagger}\psi$ is the projection of $a_u^{\dagger}\psi$ by part~(ii), so the
canonical commutation relation gives
$\norm{(\partial_u^L)^{\dagger}\psi}^{2}\le\norm{a_u^{\dagger}\psi}^{2}=\norm{\partial_u^L\psi}^{2}+\norm\psi^{2}$,
that is,
$\partial_u^L(\partial_u^L)^{\dagger}\preccurlyeq(\partial_u^L)^{\dagger}\partial_u^L+I$. Summing
over the orthonormal $u_1,\dots,u_{r}$ and using Bessel's inequality
$\sum_m a_{u_m}^{\dagger}a_{u_m}\preccurlyeq\cN_L$ gives, on hard-core vectors,
\begin{equation}\label{eq_mode_sums}
  \sum_{m=1}^{r}(\partial_{u_m}^L)^{\dagger}\partial_{u_m}^L\preccurlyeq\cN_L,
  \qquad
  \sum_{m=1}^{r}\partial_{u_m}^L(\partial_{u_m}^L)^{\dagger}\preccurlyeq\cN_L+r
\end{equation}
and likewise on $R$ with $\partial_{v_m}^R$ and $\cN_R$. On level $k$, annihilation therefore has
total strength $\sqrt k$ and creation $\sqrt{k+r}$. These two inequalities are all that
replaces the canonical commutation relations in the estimates below. They enter through the
following bound, which is Cauchy-Schwarz combined with~\cref{eq_mode_sums}, for an operator
that lowers the level by exactly one.

\begin{lemma}[Band estimate]\label{lemma_band}
Let $f\colon\mathbb Z_{\ge0}\to\mathbb R$ satisfy $|f(k)|\le f_{\max}$, and let
\[
\cA_f\coloneq\sum_{m=1}^{r}\sigma_m\,\bigl(\partial_{u_m}^L\otimes\partial_{v_m}^R\bigr)\,f(\Pi^L)
\]
on $\cM$ be the annihilation half of the pair term~\cref{eq:Hpair-explicit}, with the
level-dependent coefficient inserted at the level the annihilation acts from. Then, for
$\varphi\in\cM_k$ and $\psi\in\cM_{k+1}$,
\[
\bigl|\ip{\varphi}{\cA_f\psi}\bigr|
\ \le\ \sigma_1f_{\max}\sqrt{(k+r)(k+1)}\,\norm\varphi\,\norm\psi
\]
The same bound holds on $\cF_0$ with $a_{u_m},b_{v_m}$ in place of
$\partial_{u_m}^L,\partial_{v_m}^R$, with $k$ the sector number.
\end{lemma}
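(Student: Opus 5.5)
We first move the level-dependent coefficient off of $\psi$. Since $\psi\in\cM_{k+1}$ and $f(\Pi^L)$ acts on $\cM_{k+1}$ as the scalar $f(k+1)$, we have $\cA_f\psi=f(k+1)\sum_m\sigma_m(\partial_{u_m}^L\otimes\partial_{v_m}^R)\psi$. It therefore suffices to prove the bound with $f\equiv1$ and multiply by $|f(k+1)|\le f_{\max}$.

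Next we split the pair of annihilations across the inner product. Since $\varphi,\psi$ are product-structured only after expansion, we work with the operators directly. We write
\[
\ip{\varphi}{(\partial_{u_m}^L\otimes\partial_{v_m}^R)\psi}=\ip{\bigl(I\otimes(\partial_{v_m}^R)^{\dagger}\bigr)\varphi}{\bigl(\partial_{u_m}^L\otimes I\bigr)\psi},
\]
which uses only that $(\partial_{v_m}^R)^\dagger$ is the adjoint on the full set-function space. Here $\varphi$ is hard-core and $(\partial_{v_m}^R)^\dagger$ raises its right level from $k$, while $\partial_{u_m}^L$ lowers the left level of $\psi$ from $k+1$. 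We then apply Cauchy–Schwarz in $m$ with weights $\sigma_m\le\sigma_1$:
\[
\Bigl|\sum_m\sigma_m\ip{\cdot}{\cdot}\Bigr|\le\sigma_1\Bigl(\sum_m\norm{(I\otimes(\partial_{v_m}^R)^{\dagger})\varphi}^2\Bigr)^{1/2}\Bigl(\sum_m\norm{(\partial_{u_m}^L\otimes I)\psi}^2\Bigr)^{1/2}.
\]

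Both factors are controlled by \cref{eq_mode_sums}, applied in the tensor factor where the operators act; the inequalities tensor with the identity and remain valid on hard-core vectors. The first sum equals $\ip{\varphi}{\sum_m I\otimes\partial_{v_m}^R(\partial_{v_m}^R)^\dagger\varphi}\le\ip{\varphi}{(\cN_R+r)\varphi}=(k+r)\norm\varphi^2$, because $\cN_R=k$ on $\cM_k$. The second sum equals $\ip{\psi}{\sum_m(\partial_{u_m}^L)^\dagger\partial_{u_m}^L\otimes I\,\psi}\le\ip{\psi}{\cN_L\psi}=(k+1)\norm\psi^2$. Combining these gives $\sigma_1f_{\max}\sqrt{(k+r)(k+1)}\norm\varphi\norm\psi$.

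The Fock-space version runs the same way with $a_{u_m},b_{v_m}$. The adjoint $b_{v_m}^\dagger$ is now exact, and the canonical commutation relations together with Bessel's inequality give $\sum_m b_{v_m}b_{v_m}^\dagger=\sum_m b_{v_m}^\dagger b_{v_m}+r\preccurlyeq\cN_R+r$ and $\sum_m a_{u_m}^\dagger a_{u_m}\preccurlyeq\cN_L$, with the sector number in place of $k$. The only delicate point is the choice of splitting. The creation operator must land on the lower-level vector $\varphi$, so the factor $r$ is paired with $k$, and the annihilation on $\psi$ contributes $k+1$. No harmonic-projection issue arises, because the adjoint is taken on the full space and no compression is needed in this inequality.
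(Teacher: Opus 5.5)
Your proof is correct and is essentially the paper's argument. You pull out $f(k+1)$, split the pair of annihilations across the inner product so that a creation operator acts on $\varphi$ and an annihilation acts on $\psi$, apply Cauchy--Schwarz in $m$, and bound the two sums via \cref{eq_mode_sums} by $(k+r)\norm\varphi^2$ and $(k+1)\norm\psi^2$. The only cosmetic difference is that the paper moves the left factor $\partial_{u_m}^L$ across as its adjoint, whereas you move the right factor $\partial_{v_m}^R$; this is symmetric, since \cref{eq_mode_sums} holds on both sublattices.
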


\begin{proof}
Since $f(\Pi^L)\psi=f(k+1)\psi$ and
$\partial_{u_m}^L\otimes\partial_{v_m}^R=(\partial_{u_m}^L\otimes I)(I\otimes\partial_{v_m}^R)$,
moving the left factor across the inner product as its adjoint and applying
Cauchy-Schwarz, first in the tensor product and then in $m$, gives
\begin{align*}
\bigl|\ip{\varphi}{\cA_f\psi}\bigr|
&=\Bigl|f(k+1)\sum_{m=1}^{r}\sigma_m
  \ip{((\partial_{u_m}^L)^{\dagger}\otimes I)\varphi}{(I\otimes\partial_{v_m}^R)\psi}\Bigr|\\
&\le\sigma_1f_{\max}
  \Bigl(\sum_{m=1}^{r}\norm{((\partial_{u_m}^L)^{\dagger}\otimes I)\varphi}^{2}\Bigr)^{1/2}
  \Bigl(\sum_{m=1}^{r}\norm{(I\otimes\partial_{v_m}^R)\psi}^{2}\Bigr)^{1/2}
\end{align*}
The two sums are the quadratic forms of
$\bigl(\sum_m\partial_{u_m}^L(\partial_{u_m}^L)^{\dagger}\bigr)\otimes I$ at $\varphi$ and of
$I\otimes\bigl(\sum_m(\partial_{v_m}^R)^{\dagger}\partial_{v_m}^R\bigr)$ at $\psi$.
By~\cref{eq_mode_sums} these operators are at most $(\cN_L+r)\otimes I$ and $I\otimes\cN_R$,
and on $\cM$ we have $\cN_L=\cN_R=\Pi^L$ by \cref{lemma_embed_item1} of \cref{lemma_embed}. Hence the first sum is at
most $(k+r)\norm\varphi^{2}$ and the second at most $(k+1)\norm\psi^{2}$, which proves the
bound. On $\cF_0$ the same computation applies, since the canonical commutator
$[a_u,a_u^{\dagger}]=I$ for unit $u$ and Bessel's inequality
$\sum_m a_{u_m}^{\dagger}a_{u_m}\preccurlyeq\cN_L$ give the two inequalities of~\cref{eq_mode_sums}
for the free modes.
\end{proof}

By \cref{lemma_embed}, every matrix element of $\Hcov$ is a compressed matrix element of
a free-space operator, but not of $\HB$ itself, which has already frozen $\Gamma_k$ and
linearized the level cost. We therefore interpose an \emph{intermediate free-space operator} $\widetilde H$ that lives
on $\cF$ like $\HB$ but keeps the exact finite-$n$ diagonal and the exact coefficient
$\Gamma_k$ like $\Hcov$. Both are extended past the physical range $k\le n/2$, by
$\Gamma_k\coloneq\tfrac12$ for $k>n/2$ and
\[
  d(k)\coloneq\begin{cases}k(n-k+1),&k\le n/2,\\ \tfrac n2(k+1),&k>n/2.\end{cases}
\]
The function $d$ is increasing, agrees with the physical level cost for $k\le n/2$, and
satisfies $d(k)\ge\tfrac n2k$ everywhere. On $\cF$ let
\[
  \widetilde H\coloneq\Emf+\tfrac{\hatp}2\bigl(d(\cN_L)+d(\cN_R)\bigr)-\sum_m\sigma_m\bigl(a_{u_m}b_{v_m}\,\Gamma_{\cN_R}+\Gamma_{\cN_R}\,a_{u_m}^{\dagger}b_{v_m}^{\dagger}\bigr)
\]
with $d(\cN_L)$, $\Gamma_{\cN_R}$ acting by functional calculus and $\Gamma_{\cN_R}$
placed next to the annihilation operator on both sides, matching~\cref{eq:Hpair-explicit}.

The comparison then has two steps: $\Hcov=\Pi_\cM\widetilde H\Pi_\cM$ exactly, and
$\widetilde H\approx\HB$ only on levels $k<4\kcut$. Replacing $\partial_u^L$ by $a_u$ costs
nothing, since the compression absorbs it exactly. The only errors come from the two frozen
quantities, the level cost, with $k(n-k+1)-nk=-k(k-1)$, and the Clebsch-Gordan coefficient,
with $\Gamma_k-\frac12=O(1/n)$ at low level: on vectors of $\cF_0$ with $\cN_L\le n/2$,
\begin{equation}\label{eq_HB_error}
  \widetilde H-\HB=-\hatp\,\cN_L(\cN_L-1)-(\cA_f+\cA_f^{\dagger}),
  \qquad
  \cA_f\coloneq\sum_m\sigma_m\,a_{u_m}b_{v_m}f(\cN_R),\qquad f(k)\coloneq\Gamma_k-\frac12
\end{equation}
Both errors grow with the level, which is why the comparison stops at $4\kcut$ and transfers
the closed-form Bogoliubov bounds~\cref{eq_bogo_inputs} of \cref{lemma_bogo} to $\Hcov$
only there.

\begin{lemma}[Compression]\label{lemma_CB}
  \begin{enumerate}[label=\textup{(\roman*)}]
    \item\label{lemma_CB_item1} $\Hcov=\Pi_{\cM}\widetilde H\Pi_{\cM}$ on $\cM$, and
          $\lambda_j(\Hcov)\ge\lambda_j(\widetilde H|_{\cF_0})$ for every $j$.
    \item\label{lemma_CB_item2} Let $\kcut\ge1$ with $16\kcut\le n$, suppose $\sigma_1<\hatp n$, and let
          \[
            \etaB(\kcut)\coloneq16\kcut^{2}+5\sigma_1\sqrt{\kcut/n}
          \]
          Then $|\ip{\psi}{(\widetilde H-\HB)\psi}|\le\etaB(\kcut)\norm\psi^{2}$ for every
          $\psi\in\cF_0\cap\{\cN_L<4\kcut\}$, and consequently
          \[
            \lambda_0(\Hcov|_{<4\kcut})\ge\Estar-\etaB(\kcut),
            \qquad
            \lambda_1(\Hcov|_{<4\kcut})\ge\Estar+2\omega_\star-\etaB(\kcut)
          \]
          where $\Hcov|_{<4\kcut}$ is the compression of $\Hcov$ to the levels $k<4\kcut$.
  \end{enumerate}
\end{lemma}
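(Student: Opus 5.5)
For part \ref{lemma_CB_item1}, the plan is to verify the identity $\Hcov=\Pi_\cM\widetilde H\Pi_\cM$ term by term, using \cref{lemma_embed}.

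\emph{Diagonal terms.} On $\cM$ we have $\cN_L=\cN_R=\Pi^L$ with level $k\le n/2$. So $\frac{\hatp}2\bigl(d(\cN_L)+d(\cN_R)\bigr)$ acts on $\cM_k$ as $\hatp k(n-k+1)$. Together with $\Emf$ this is $\Hmf|_{\cM_k}$ by \cref{prop_pair_decomp}\ref{prop_pair_decomp_1}. By \cref{lemma_embed}\ref{lemma_embed_item3}, functions of $\cN_L,\cN_R$ commute with $\Pi_\cM$.

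\emph{Annihilation half.} The operator $a_{u_m}b_{v_m}\Gamma_{\cN_R}$ acts on $\cM_k$ as $\Gamma_k\,\partial^L_{u_m}\otimes\partial^R_{v_m}$ and maps into $\cM_{k-1}$. This matches the term $P_{k-1}(\partial^L_{u_m}\otimes\partial^R_{v_m})P_k$ in \cref{eq:Hpair-explicit}.

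\emph{Creation half.} $\Pi_\cM\Gamma_{\cN_R}a^\dagger_{u_m}b^\dagger_{v_m}\Pi_\cM$ is the adjoint of the annihilation half. Equivalently, apply the compressed identity of \cref{lemma_embed}\ref{lemma_embed_item3}, where $\Gamma_{\cN_R}$ evaluates at level $k$ on the image, which lies in $\cM_k$.

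Summing over $m$ and adding the three pieces gives $\Hcov=\Pi_\cM\widetilde H\Pi_\cM$. Since $\cM\subseteq\cF_0$ and $\widetilde H$ preserves $\cF_0$, $\Hcov$ is the compression of $\widetilde H|_{\cF_0}$ to a subspace. Cauchy interlacing (Courant--Fischer min-max) then gives $\lambda_j(\Hcov)\ge\lambda_j(\widetilde H|_{\cF_0})$.

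For part \ref{lemma_CB_item2}, I first derive \cref{eq_HB_error}. For $k<4\kcut\le n/4$, we have $d(k)-nk=-k(k-1)$, and on $\cF_0$ we have $\cN_L=\cN_R$. So the diagonal difference is $-\hatp\cN_L(\cN_L-1)$, whose magnitude is at most $16\kcut^2$ since $\hatp\le1$. The pair difference is $-(\cA_f+\cA_f^\dagger)$ with $f=\Gamma_k-\frac12$.

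I then bound $f$. For $k\le 4\kcut\le n/4$,
\[
\Gamma_k-\tfrac12=\tfrac12\Bigl(\sqrt{1+\tfrac{2}{n-2k+1}}-1\Bigr)\le\tfrac{1}{2(n-2k+1)}\le\tfrac1n .
\]
Decompose $\psi=\sum_k\psi_k$ by sector; $\cA_f$ only couples sector $k+1$ to sector $k$. I apply the free-space version of \cref{lemma_band}, with $f_{\max}\le 1/n$ and only finitely many relevant pairs since $\psi$ lives on $\cN_L<4\kcut$. This gives
\[
|\ip{\psi}{(\cA_f+\cA_f^\dagger)\psi}|\le 2\sigma_1 n^{-1}\sum_k\sqrt{(k+r)(k+1)}\,\norm{\psi_k}\norm{\psi_{k+1}}.
\]
Next use $r\le n$ and $k+1\le 4\kcut$, so $\sqrt{(k+r)(k+1)}\le\sqrt{(n+4\kcut)4\kcut}\le\sqrt{5n\kcut}$. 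Combined with $\sum_k\norm{\psi_k}\norm{\psi_{k+1}}\le\norm\psi^2$, the pair error is at most $2\sqrt5\,\sigma_1\sqrt{\kcut/n}\,\norm\psi^2\le5\sigma_1\sqrt{\kcut/n}\,\norm\psi^2$.

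I expect this constant bookkeeping to be the main obstacle. Two points need care: the sector $k=4\kcut-1$ couples to sector $4\kcut$, which lies outside the truncated space and so drops out; and the factor $r$ must not be squared. If the constant is too loose, I would refine $f_{\max}$ to $\frac{1}{2(n/2+1)}$.

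Finally, the eigenvalue bounds. Let $Q$ be the projection of $\cF_0$ onto $\{\cN_L<4\kcut\}$. This is a spectral projection of $\cN_L$, so it commutes with $\Pi_\cM$. Then $\Hcov|_{<4\kcut}$ is a compression of $Q\widetilde HQ$. By the form bound, $Q\widetilde HQ\succcurlyeq Q\HB Q-\etaB$. Since $\HB$ is also tridiagonal in the sectors, min-max gives $\lambda_j(Q\HB Q)\ge\lambda_j(\HB|_{\cF_0})$. Chaining these inequalities with interlacing and \cref{lemma_bogo} gives the bounds $\Estar-\etaB$ and $\Estar+2\omega_\star-\etaB$.
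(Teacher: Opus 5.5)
Your proposal is correct and follows the paper's proof essentially step for step. For (i) you use exact term-by-term compression via \cref{lemma_embed} followed by min-max; for (ii) you use \cref{eq_HB_error}, the bound $0\le\Gamma_k-\tfrac12\le 1/n$ below level $4\kcut$, the free-space band estimate giving $2\sqrt5\,\sigma_1\sqrt{\kcut/n}$, and a min-max/Weyl chain through the finite-dimensional space $\cF_0\cap\{\cN_L<4\kcut\}$. The only differences are cosmetic: you bound the off-diagonal sector blocks by Cauchy--Schwarz over sectors instead of taking the maximal block norm, and the tridiagonality of $\HB$ you invoke in the last step is unnecessary, since compressing to any subspace already cannot lower min-max values.
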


\begin{proof}
  \cref{appendix_low}. Part~(i) holds term by term by \cref{lemma_embed_item3} of \cref{lemma_embed} and
  Poincar\'e separation; part~(ii) uses $|\Gamma_k-\tfrac12|\le1/n$ below level $4\kcut$ and
  needs no delocalization of the singular vectors $u_m,v_m$.
\end{proof}

\subsection{Absorbing the same-level error}\label{section_shift}

\cref{lem:sl} compares $H$ with $\Hcov$ up to the same-level term. Below we use its loss
in the weaker form $\csl\Pi^L$, with $\csl\coloneq3\sigma_1+9\Delta\ge3(\sigma_1+\Delta)$.
This loss grows linearly with the level, while the transfer argument below needs a
comparison on all of $\cM$ with a loss independent of the level. The mean-field term of
$\Hcov$ grows faster, like $\hatp\,k(n+1-k)$ on level $k$, so we absorb the loss by slightly
decreasing the density in that term. What remains is a scalar shift, independent of the
level, and it costs little in the Bogoliubov energy.

\begin{lemma}[Shifted comparison]\label{lemma_shift}
  Let $\Hcov(\hatp)$ denote the comparison operator $\Hcov$ with density $\hatp$ in its
  mean-field term. Let $\hatp'\coloneq\hatp-2\csl/n$, and let $\Estar(\hatp')$ and
  $\omega_\star(\hatp')$ denote the Bogoliubov quantities of~\cref{eq_bogo_inputs} at the
  density $\hatp'$.
  \begin{enumerate}[label=\textup{(\roman*)},nosep]
    \item\label{lemma_shift_item1} For every $\psi\in\cM$,
      $\ip{\psi}{H\psi}\ge\ip{\psi}{\Hcov(\hatp')\psi}-\tfrac12\csl(n+2)\norm\psi^{2}$.
    \item\label{lemma_shift_item2} If $\sigma_1+\Delta\le\hatp n/100$, then $\sigma_1<\hatp'n$,
      $2\omega_\star(\hatp')\ge\hatp n/2$ and
      \[
        \Estar(\hatp')-\tfrac12\csl(n+2)\ \ge\ \Estar-\frac{72(\sigma_1+\Delta)}{\hatp}
      \]
  \end{enumerate}
\end{lemma}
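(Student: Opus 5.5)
The plan is to prove part~\ref{lemma_shift_item1} as an operator inequality on $\cM$ from the decomposition of \cref{prop_pair_decomp}, and part~\ref{lemma_shift_item2} by an explicit perturbation bound on the closed-form Bogoliubov energy of \cref{lemma_bogo}.

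\emph{Part~\ref{lemma_shift_item1}.} I write $H=\Hmf(\hatp)+\Hpair+\Hsl$ and observe that $\Hpair$ does not depend on the mean-field density. Hence $H-\Hcov(\hatp')=\Hmf(\hatp)-\Hmf(\hatp')+\Hsl$.
\begin{itemize}
\item By \cref{eq:Hmf-final}, the constant parts differ by $\Emf(\hatp)-\Emf(\hatp')=-(\hatp-\hatp')\tfrac n2(\tfrac n2+1)=-\tfrac12\csl(n+2)$.
\item The level parts differ by $\tfrac{2\csl}{n}\Pi^L(n+1-\Pi^L)$.
\item By \cref{lem:sl} and $3(\sigma_1+\Delta)\le\csl$, we have $\Hsl\succcurlyeq-\csl\Pi^L$ on $\cM$.
\end{itemize}
The level-cost inequality \cref{eq:level-cost} then gives $\tfrac{2\csl}{n}\Pi^L(n+1-\Pi^L)-\csl\Pi^L\succcurlyeq0$. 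This leaves exactly $-\tfrac12\csl(n+2)$, which proves~\ref{lemma_shift_item1}.

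\emph{Part~\ref{lemma_shift_item2}.} The hypothesis gives $\csl\le9(\sigma_1+\Delta)\le 0.09\,\hatp n$, so $\hatp'\ge0.82\,\hatp$ and $\sigma_1\le\hatp n/100<\hatp' n$. With $\omega'=\hatp'n/2\ge0.41\,\hatp n$ we get
\[
\omega_\star(\hatp')=\sqrt{\omega'^2-\sigma_1^2/4}\ge0.4\,\hatp n,
\]
hence $2\omega_\star(\hatp')\ge\hatp n/2$.

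For the energy, the key observation is the exact cancellation $\Emf(\hatp')-\tfrac12\csl(n+2)=\Emf(\hatp)$, which is the same computation as in part~\ref{lemma_shift_item1}. So it remains to compare the Bogoliubov corrections. I rewrite them as
\[
\sum_m(\omega_m-\omega)=-\sum_m\frac{g_m^2}{\omega+\omega_m},
\]
and similarly at $\hatp'$. The difference of the two sums is then
\[
\sum_m g_m^2\,\frac{(\omega-\omega')+(\omega_m-\omega_m')}{(\omega+\omega_m)(\omega'+\omega_m')}.
\]
The two numerator terms are bounded as follows.
\begin{itemize}
\item $\omega-\omega'=\csl$.
\item Since $\omega_m^2-\omega_m'^2=\omega^2-\omega'^2$, we get $\omega_m-\omega_m'=\csl(\omega+\omega')/(\omega_m+\omega_m')$, which is at most about $2.1\,\csl$ because all frequencies are within a factor $\approx0.8$ of $\omega$.
\end{itemize}
The denominators are bounded below by a constant times $\hatp^2n^2$, using $\omega_m\ge0.99\,\omega$ and $\omega'_m\ge0.4\,\hatp n$. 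Finally, $\sum_m g_m^2=\tfrac14\norm{\xi_c}_F^2\le\tfrac14\hatp n^2$ by \cref{eq_frob}. Altogether the loss is at most a numerical constant times $\csl/\hatp$, hence at most a constant times $(\sigma_1+\Delta)/\hatp$.

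The main obstacle is bookkeeping these constants tightly enough to reach the stated factor $72=8\cdot9$. That requires the $\hatp$-dependent prefactor to be at most $8$ after $\csl\le9(\sigma_1+\Delta)$. I would keep the explicit lower bounds $\omega+\omega_m\ge1.99\,\omega$ and $\omega'+\omega'_m\ge0.81\,\hatp n$, which leave ample room.
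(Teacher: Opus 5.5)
Your proposal is correct and follows the paper's proof. Part (i) is the same operator comparison, using \cref{lem:sl} together with \cref{eq:level-cost}. Part (ii) uses the same exact cancellation $\Emf(\hatp')-\tfrac12\csl(n+2)=\Emf(\hatp)$, followed by a Lipschitz bound on the Bogoliubov correction over a change of $\omega$ by $\csl$, controlled through $\sum_m g_m^2\le\tfrac14\hatp n^2$.

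The only difference is minor. You bound the difference of the two rationalized sums directly, whereas the paper applies the mean value theorem to $F(\omega)=\sum_m\bigl(\omega-\sqrt{\omega^2-g_m^2}\bigr)$ with $|F'|\le 8/\hatp$. Your constants do leave ample room below $8\csl/\hatp$.
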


\begin{proof}
  For~(i), the two comparison operators differ only in the mean field $-\hatp s(s+1)$, and
  $s(s+1)=\tfrac n2(\tfrac n2+1)-\Pi^L(n+1-\Pi^L)$, so
  \[
    \Hcov(\hatp)-\Hcov(\hatp')
    =\frac{2\csl}{n}\Bigl(\Pi^L(n+1-\Pi^L)-\frac n2\Bigl(\frac n2+1\Bigr)\Bigr)
  \]
  \cref{lem:sl} gives $H\succcurlyeq\Hcov(\hatp)-\csl\Pi^L$, and
  $\tfrac{2\csl}{n}\Pi^L(n+1-\Pi^L)\succcurlyeq\csl\Pi^L$ by~\cref{eq:level-cost}. Combining the three
  gives $H\succcurlyeq\Hcov(\hatp')-\tfrac{2\csl}{n}\cdot\tfrac n2(\tfrac n2+1)=\Hcov(\hatp')-\tfrac12\csl(n+2)$.

  For~(ii), $\csl\le9(\sigma_1+\Delta)\le9\hatp n/100$ gives $\hatp'\ge\tfrac45\hatp$, so $\sigma_1<\hatp'n$
  and $2\omega_\star(\hatp')=\sqrt{(\hatp'n)^{2}-\sigma_1^{2}}\ge\hatp'n-\sigma_1\ge\hatp n/2$.
  Let $F(\omega)\coloneq\sum_m\bigl(\omega-\sqrt{\omega^{2}-g_m^{2}}\bigr)$, so that
  $\Estar=\Emf-F(\hatp n/2)$ by~\cref{eq_bogo_inputs}. As $\Emf$ is linear in the density,
  with $\Emf(\hatp')=\Emf+\tfrac12\csl(n+2)$, the left-hand side of the claim equals
  $\Estar-\bigl(F(\hatp'n/2)-F(\hatp n/2)\bigr)$. On $[\hatp'n/2,\hatp n/2]$ we have
  $\omega\ge\hatp n/4$ and $\omega_m\ge\omega/2$, so
  \[
    |F'(\omega)|=\sum_m\frac{g_m^{2}}{\omega_m(\omega+\omega_m)}\le\frac{2\norm{\xi_c}_F^{2}/4}{\omega^{2}}\le\frac8{\hatp}
  \]
  by~\cref{eq_frob}. The interval has length $\csl$, so the difference is at most
  $8\csl/\hatp\le72(\sigma_1+\Delta)/\hatp$.
\end{proof}

\subsection{Combining low- and high-level bounds}\label{section_transfer}

A state supported across the boundary between the low levels, where the spectrum is now
understood accurately, and the high levels, where it is bounded below, could in principle
gain energy from the coupling between them. We therefore split a state into low- and
high-level parts with the IMS localization identity~\cite{CFKS1987}. Because $H$ changes
the level by at most one, the energy lost at the split is controlled by how quickly the
cutoffs vary, and it is small when they vary on a scale $\kcut\gg1$.

For $\kcut\ge1$, let
$\phi(k)\coloneq\tfrac\pi2\min\bigl(1,\max(0,(k-2\kcut)/(2\kcut))\bigr)$ be the clamped linear
ramp from level $2\kcut$ to level $4\kcut$, and let $\chi_{\mathrm{in}}\coloneq\cos\phi$ and
$\chi_{\mathrm{out}}\coloneq\sin\phi$ as functions of the level. These satisfy
$\chi_{\mathrm{in}}^{2}+\chi_{\mathrm{out}}^{2}=1$ and commute with each other and with
$\Hmf$. The IMS formula, in the form for level-banded operators on Fock
space~\cite[Prop.~6.1]{LNSS2015}, splits the energy into an inner and an outer piece plus a
double-commutator error:
\begin{equation}\label{eq_bg_ims}
  \ip{\psi}{H\psi}
  =\ip{\chi_{\mathrm{in}}\psi}{H\chi_{\mathrm{in}}\psi}
  +\ip{\chi_{\mathrm{out}}\psi}{H\chi_{\mathrm{out}}\psi}
  +\tfrac12\!\!\sum_{\chi\in\{\chi_{\mathrm{in}},\chi_{\mathrm{out}}\}}\!\!\ip{\psi}{[\chi,[\chi,H]]\psi}
\end{equation}

\begin{lemma}[The IMS error of the explicit cutoffs]\label{lemma_ims}
  Let $\kcut\ge1$, and let $\chi_{\mathrm{in}},\chi_{\mathrm{out}}$ be the cutoffs above. Then
  for every $\psi\in\cM$
  \[
    \Bigl|\tfrac12\!\!\sum_{\chi\in\{\chi_{\mathrm{in}},\chi_{\mathrm{out}}\}}\!\!\ip{\psi}{[\chi,[\chi,H]]\psi}\Bigr|
    \le2\,\etaIMS(\kcut)\,\norm\psi^{2},
    \qquad
    \etaIMS(\kcut)\coloneq\frac{2\sigma_1\sqrt n}{\kcut^{3/2}}
  \]
\end{lemma}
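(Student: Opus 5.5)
The plan is to compute the double commutator exactly, using that the cutoffs are functions of the level alone, and then bound what remains with the band estimate of \cref{lemma_band}. By \cref{prop_pair_decomp}, $H=\Hmf+\Hpair+\Hsl$ on $\cM$. The terms $\Hmf$ and $\Hsl$ are level-diagonal, so they commute with every function of $\Pi^L$, in particular with $\chi_{\mathrm{in}}$ and $\chi_{\mathrm{out}}$. They therefore drop out of both double commutators, and only $\Hpair$ contributes.

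Write $\Hpair=-(\cA+\cA^{\dagger})$, where
\[
\cA\coloneq\sum_m\sigma_m(\partial_{u_m}^L\otimes\partial_{v_m}^R)\Gamma_{\Pi^L}
\]
restricted to $\cM$ is the annihilation half, mapping $\cM_{k+1}\to\cM_k$. For any function $\chi$ of the level and any $\psi\in\cM_{k+1}$,
\[
[\chi,[\chi,\cA]]\psi=\bigl(\chi(k)-\chi(k+1)\bigr)^{2}\cA\psi,
\]
and the adjoint identity holds for $\cA^\dagger$. Summing over $\chi\in\{\chi_{\mathrm{in}},\chi_{\mathrm{out}}\}=\{\cos\phi,\sin\phi\}$ gives the factor
\[
\delta_k\coloneq\bigl(\cos\phi(k)-\cos\phi(k+1)\bigr)^2+\bigl(\sin\phi(k)-\sin\phi(k+1)\bigr)^2=2-2\cos\bigl(\phi(k+1)-\phi(k)\bigr)\le\bigl(\phi(k+1)-\phi(k)\bigr)^2 .
\]
Since $\phi$ is a clamped linear ramp of slope $\pi/(4\kcut)$, we get $\delta_k\le\pi^2/(16\kcut^2)$. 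Moreover $\delta_k=0$ unless both $k$ and $k+1$ lie in $[2\kcut,4\kcut]$, so the error term becomes
\[
-\tfrac12\sum_{k}\delta_k\bigl(\ip{P_k\psi}{\cA P_{k+1}\psi}+\text{c.c.}\bigr),
\]
with the sum restricted to $2\kcut\le k<4\kcut$.

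Next I apply \cref{lemma_band} to each term, with $f=\Gamma$ and $f_{\max}\le\sqrt3/2$. This gives
\[
|\ip{P_k\psi}{\cA P_{k+1}\psi}|\le\tfrac{\sqrt3}{2}\sigma_1\sqrt{(k+r)(k+1)}\,\norm{P_k\psi}\norm{P_{k+1}\psi}.
\]
The main point to check is the size of $\sqrt{(k+r)(k+1)}$. We have $r\le n$, and physical levels satisfy $k\le n/2$, so $k+r\le\tfrac32 n$. In the ramp region $k+1\le4\kcut+1\le5\kcut$, so the square root is at most $\sqrt{7.5\,n\kcut}$. The $\sqrt{k+r}$ factor, which is of order $\sqrt n$ because creation operators have strength $\sqrt{k+r}$, is exactly what produces the $\sqrt n$ in $\etaIMS$. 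The ramp confines $k$ to order $\kcut$, which supplies the $\sqrt{\kcut}$.

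Finally, Cauchy--Schwarz in $k$ gives $\sum_k\norm{P_k\psi}\norm{P_{k+1}\psi}\le\norm\psi^2$. Collecting factors, the error is at most
\[
\tfrac12\cdot2\cdot\frac{\pi^2}{16\kcut^2}\cdot\frac{\sqrt3}{2}\sigma_1\sqrt{7.5\,n\kcut}\,\norm\psi^2\approx1.5\,\frac{\sigma_1\sqrt n}{\kcut^{3/2}}\norm\psi^2,
\]
which is below $2\etaIMS(\kcut)\norm\psi^2=4\sigma_1\sqrt n\,\kcut^{-3/2}\norm\psi^2$. I expect no real obstacle. The only care needed is the bookkeeping of where $\Gamma$ sits (it acts at the level the annihilation acts from, matching \cref{lemma_band}), and the fact that the constant has enough slack to absorb crude bounds.
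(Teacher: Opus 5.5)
Your proposal is correct and follows essentially the same argument as the paper: drop the level-diagonal parts $\Hmf,\Hsl$, observe that the double commutator multiplies each off-diagonal block of $\Hpair$ by the squared cutoff increment, and bound those blocks with \cref{lemma_band} using $f=\Gamma$ and $f_{\max}=\sqrt3/2$. Your bookkeeping is slightly sharper than the paper's, which bounds the two cutoffs separately and uses $k+r\le n+r$. You instead combine the cutoffs via $2-2\cos(\phi(k+1)-\phi(k))\le(\phi(k+1)-\phi(k))^2$ and use $k\le n/2$, which gives a constant of about $1.46$ versus the paper's $\pi^2\sqrt6/8\approx3.02$; both are below the required $4$.
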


\begin{proof}
The double commutator $[\chi,[\chi,H]]$ has entries $(\chi(k')-\chi(k))^{2}H_{k'k}$,
as in the proof of~\cite[Prop.~6.1]{LNSS2015}. Since
$\Hmf$ and $\Hsl$ commute with functions of the level and $\Hpair$ changes the level by
exactly one, the double commutator is supported on the band $2\kcut\le k\le4\kcut$. For $H$
symmetric and level-banded, and $\chi$ a function of the level,
\[
\bigl|\ip{\psi}{[\chi,[\chi,H]]\psi}\bigr|\ \le\ 2\max_k\Bigl[(\chi(k+1)-\chi(k))^{2}\,\norm{P_{k+1}HP_k}\Bigr]\,\norm\psi^{2}
\]
\cref{lemma_band} on $\cM$, with $f(k)=\Gamma_k$ and $f_{\max}=\sqrt{3/4}$ (the value of
$\Gamma_k$ at $s_k=0$), bounds $|\ip{\varphi}{\Hpair\psi}|$ by
$\sigma_1\sqrt{3/4}\sqrt{(k+r)(k+1)}\norm\varphi\norm\psi$
for $\varphi$ on level $k$ and $\psi$ on level $k+1$. In the band, $k+1\le4\kcut$ and
$k+r\le n+r$, and the increments of $\cos\phi$ and $\sin\phi$ are at most $\pi/(4\kcut)$.
Summing over the two cutoffs, the left-hand side of the bound in \cref{lemma_ims} is
therefore at most
$2(\pi/(4\kcut))^{2}\sigma_1\sqrt{3/4}\sqrt{(n+r)4\kcut}\le(\pi^{2}\sqrt6/8)\,\sigma_1\sqrt n\,\kcut^{-3/2}\le2\etaIMS(\kcut)$,
using $r\le n$.
\end{proof}

\begin{theorem}[Transfer]\label{thm_transfer}
  Let $\kcut\ge1$ with $16\kcut\le n$, and let $\chi_{\mathrm{in}},\chi_{\mathrm{out}}$ be the cutoffs
  of \cref{lemma_ims}. Let $\lambda_0\le\lambda_1$ be the two lowest eigenvalues of $H$
  on $\cM$, $\gamma=\lambda_1-\lambda_0$, and $\Psi_0\in\cM$ a unit ground vector of $H$.
  Let $H'$ be a symmetric operator on $\cM$, and
  let $\Eout,E',U\in\mathbb R$, $\gamma',C'\ge0$, and $\Phi\in\cM$ be given with:
  \begin{enumerate}[label=\textup{(\alph*)}]
    \item \emph{(high-level lower bound)} $\ip{\psi}{H\psi}\ge\Eout\norm{\psi}^{2}$ for every
          $\psi\in\cM$ supported on levels $\ge2\kcut$, that is, with $P_k\psi=0$ for all
          $k<2\kcut$;
    \item \emph{(global lower comparison)} $\ip{\psi}{H\psi}\ge\ip{\psi}{H'\psi}-C'\norm{\psi}^{2}$
          for every $\psi\in\cM$;
    \item \emph{(low-level eigenvalue bounds)} $\lambda_0(H'|_{<4\kcut})\ge E'$ and
          $\lambda_1(H'|_{<4\kcut})\ge E'+\gamma'$;
    \item \emph{(trial-state upper bound)} $\norm\Phi=1$ and $\ip{\Phi}{H\Phi}\le U$.
  \end{enumerate}
  Then
  \[
    \lambda_1\ \ge\ \min\bigl(E'+\gamma'-C',\ \Eout\bigr)-2\etaIMS(\kcut),
    \qquad
    \min\bigl(E'-C',\ \Eout\bigr)-2\etaIMS(\kcut)\ \le\ \lambda_0\ \le\ U
  \]
  and, if $\gamma>0$, for every unit $\psi\in\cM$,
  \[
    |\ip{\psi}{\Psi_0}|\ \ge\ |\ip{\psi}{\Phi}|-\sqrt{\frac{\ip{\Phi}{H\Phi}-\lambda_0}{\gamma}}
  \]
\end{theorem}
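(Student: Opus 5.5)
The plan is to prove the three conclusions in turn: a lower bound on the quadratic form of $H$ on the whole of $\cM$, a min-max step for $\lambda_0$ and $\lambda_1$, and an elementary spectral estimate for the overlap. The central tool is the IMS identity \cref{eq_bg_ims} together with \cref{lemma_ims}.

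\textbf{Quadratic form bound.} Fix $\psi\in\cM$ and write $\psi_{\mathrm{in}}\coloneq\chi_{\mathrm{in}}\psi$ and $\psi_{\mathrm{out}}\coloneq\chi_{\mathrm{out}}\psi$. By the choice of the ramp, $\psi_{\mathrm{in}}$ is supported on levels $k\le4\kcut$; the formula for $\chi_{\mathrm{in}}$ vanishes at $k=4\kcut$, so in fact $\psi_{\mathrm{in}}$ lies on levels $<4\kcut$. Likewise $\psi_{\mathrm{out}}$ is supported on levels $>2\kcut$. Hypothesis (a) gives
\[
\ip{\psi_{\mathrm{out}}}{H\psi_{\mathrm{out}}}\ge\Eout\norm{\psi_{\mathrm{out}}}^{2}.
\]
For the inner piece, hypothesis (b) gives
\[
\ip{\psi_{\mathrm{in}}}{H\psi_{\mathrm{in}}}\ge\ip{\psi_{\mathrm{in}}}{H'\psi_{\mathrm{in}}}-C'\norm{\psi_{\mathrm{in}}}^{2}.
\]
Because $\psi_{\mathrm{in}}$ lies in the low-level subspace, $\ip{\psi_{\mathrm{in}}}{H'\psi_{\mathrm{in}}}$ equals the quadratic form of the compression $H'|_{<4\kcut}$, and hypothesis (c) applies. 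Combining both pieces with \cref{lemma_ims} yields
\[
\ip{\psi}{H\psi}\ge\ip{\psi_{\mathrm{in}}}{(H'|_{<4\kcut}-C')\psi_{\mathrm{in}}}+\Eout\norm{\psi_{\mathrm{out}}}^{2}-2\etaIMS\norm\psi^{2}.
\]

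\textbf{Eigenvalue bounds.} The lower bound on $\lambda_0$ follows directly from the displayed inequality, using $\lambda_0(H'|_{<4\kcut})\ge E'$ and $\norm{\psi_{\mathrm{in}}}^{2}+\norm{\psi_{\mathrm{out}}}^{2}=\norm\psi^{2}$, since $\chi_{\mathrm{in}}^{2}+\chi_{\mathrm{out}}^{2}=1$.

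For $\lambda_1$ I will use min-max. Let $e$ be a unit ground vector of $H'|_{<4\kcut}$. Since $\chi_{\mathrm{in}}$ is a real, self-adjoint function of the level, the vector $\chi_{\mathrm{in}}e$ lies in $\cM$ and is nonzero because $\chi_{\mathrm{in}}=1$ on levels $\le 2\kcut$. I take any unit $\psi\perp\chi_{\mathrm{in}}e$. Then $\psi_{\mathrm{in}}\perp e$, so the spectral theorem for $H'|_{<4\kcut}$ gives
\[
\ip{\psi_{\mathrm{in}}}{H'\psi_{\mathrm{in}}}\ge(E'+\gamma')\norm{\psi_{\mathrm{in}}}^{2}.
\]
Substituting into the quadratic form bound gives $\ip{\psi}{H\psi}\ge\min(E'+\gamma'-C',\Eout)-2\etaIMS$. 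Since this holds on a subspace of codimension one, min-max gives the same lower bound for $\lambda_1$. This orthogonality transfer is the one delicate point: the conditioning vector must be $\chi_{\mathrm{in}}e$, not $e$ itself.

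The upper bound $\lambda_0\le U$ is immediate from the variational principle applied to $\Phi$ under hypothesis (d).

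\textbf{Overlap bound.} Write $\Phi=c\Psi_0+\Phi^{\perp}$ with $\Phi^{\perp}\perp\Psi_0$. Every eigenvalue of $H$ on $\Psi_0^{\perp}$ is at least $\lambda_1$, so
\[
\ip{\Phi}{H\Phi}-\lambda_0\ge\gamma\norm{\Phi^{\perp}}^{2},
\]
which gives $\norm{\Phi^{\perp}}\le\sqrt{(\ip{\Phi}{H\Phi}-\lambda_0)/\gamma}$. Then
\[
|\ip{\psi}{\Psi_0}|\ge|c|\,|\ip{\psi}{\Psi_0}|\ge|\ip{\psi}{\Phi}|-|\ip{\psi}{\Phi^{\perp}}|,
\]
where the first inequality uses $|c|\le1$ and the second holds because $\ip{\psi}{\Phi}=c\ip{\psi}{\Psi_0}+\ip{\psi}{\Phi^{\perp}}$. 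Bounding $|\ip{\psi}{\Phi^{\perp}}|$ by $\norm{\Phi^{\perp}}$ finishes the claim.
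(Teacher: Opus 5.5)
Your proposal is correct and follows the paper's proof step for step: the IMS split with \cref{lemma_ims}, hypotheses (b)--(c) on the inner piece and (a) on the outer piece, Courant--Fischer with conditioning vector $\chi_{\mathrm{in}}$ applied to a ground vector of $H'|_{<4\kcut}$, and Eckart's inequality for the overlap. One cosmetic point: $\chi_{\mathrm{in}}e\neq0$ holds because $\chi_{\mathrm{in}}>0$ on every level $<4\kcut$, not merely because it equals $1$ below $2\kcut$, and in any case the min-max step does not need the conditioning vector to be nonzero.
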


\begin{proof}
  We first bound the two lowest eigenvalues. Let $\Psi'$ be a ground vector of
  $H'|_{<4\kcut}$ and $v\coloneq\chi_{\mathrm{in}}\Psi'$. If
  $\psi\perp v$ then $\chi_{\mathrm{in}}\psi\perp\Psi'$ is supported on levels $<4\kcut$, so (b)
  and (c) give
  $\ip{\chi_{\mathrm{in}}\psi}{H\chi_{\mathrm{in}}\psi}\ge(E'+\gamma'-C')\norm{\chi_{\mathrm{in}}\psi}^{2}$;
  for arbitrary $\psi$ the same argument gives $(E'-C')\norm{\chi_{\mathrm{in}}\psi}^{2}$. By
  (a), $\ip{\chi_{\mathrm{out}}\psi}{H\chi_{\mathrm{out}}\psi}\ge\Eout\norm{\chi_{\mathrm{out}}\psi}^{2}$.
  Combining the two regions with~\cref{eq_bg_ims} and \cref{lemma_ims}, every unit $\psi$
  has $\ip\psi{H\psi}\ge\min(E'-C',\Eout)-2\etaIMS(\kcut)$, and
  $\ip\psi{H\psi}\ge\min(E'+\gamma'-C',\Eout)-2\etaIMS(\kcut)$ if $\psi\perp v$. The
  Courant-Fischer theorem~\cite{HornJohnson2013} turns these into the lower bounds on
  $\lambda_0$ and $\lambda_1$, and (d) gives $\lambda_0\le\ip\Phi{H\Phi}\le U$.

  Finally, for the fidelity, write $\Phi=\beta\Psi_0+\Phi^{\perp}$ with $\Phi^{\perp}\perp\Psi_0$ and
  $|\beta|\le1$.
  Eckart's inequality~\cite{Eckart1930} gives
  $\norm{\Phi^{\perp}}^{2}\le(\ip\Phi{H\Phi}-\lambda_0)/\gamma$, and
  $|\ip\psi\Phi|\le|\beta|\,|\ip\psi{\Psi_0}|+\norm{\Phi^{\perp}}
  \le|\ip\psi{\Psi_0}|+\norm{\Phi^{\perp}}$ for every unit $\psi$.
\end{proof}

\subsection{A trial state with controlled energy and overlap}\label{section_wick}

\cref{thm_transfer} needs a unit vector in $\cM$ with energy near $\Estar$ and
nonzero Lieb-Mattis overlap. The squeezed vacuum $\Omega$ of~\cref{eq_squeezed} has both
properties in $\cF$ by \cref{lemma_bogo}; we project it into $\cM$ and bound the losses.
By level, it decomposes as
\[
\Omega=e^{\cP}\ket{\emptyset,\emptyset}=\sum_{k\ge0}\Omega_k,
\qquad
\cP\coloneq\sum_m t_m a_{u_m}^\dagger b_{v_m}^\dagger,
\qquad
\Omega_k\coloneq\frac{\cP^k\ket{\emptyset,\emptyset}}{k!}\in\{\cN_L=\cN_R=k\}
\]
Let $\ell\coloneq\lfloor n^{1/4}\rfloor+1$. We build the trial state from $\Omega$ in four
steps. We first truncate above level $\ell$, keeping
$\Omega_{\le\ell}\coloneq\sum_{k=0}^\ell\Omega_k$. We then impose the hard-core constraint,
replacing $\Omega_{\le\ell}$ by $\Omega_{\le\ell}|_{\hc}$, where $|_{\hc}$ projects onto
occupations at most one per site. Next we project each level onto $\cM_k$, which gives
\[
\Upsilon\coloneq\sum_{k=0}^\ell P_k\bigl(\Omega_k|_{\hc}\bigr)\in\cM
\]
where $P_k$ projects onto $\cM_k$ within the hard-core space. Finally we normalize,
$\Phi\coloneq\Upsilon/\norm\Upsilon$. Both projections preserve the vacuum, so
$\Upsilon(\emptyset,\emptyset)=1$. Since $\cU\ket{\emptyset,\emptyset}=\LM$ is the
Lieb-Mattis state of \cref{def_LM}, the Lieb-Mattis overlap of $\Phi$ is its vacuum
coefficient, $\ip\LM{\cU\Phi}=\Phi(\emptyset,\emptyset)$.

\begin{proposition}[The truncated squeezed state]\label{prop_wick}
Suppose that $n\ge(100/\hatp)^{4}$ and $\sigma_1\le\hatp^{3/2}n^{3/4}/4$. Then
$\Upsilon\neq0$, so $\Phi=\Upsilon/\norm{\Upsilon}$ is a well-defined unit vector of $\cM$,
and:
\begin{enumerate}[label=\textup{(\roman*)},nosep]
\item $\ip{\Phi}{\Hcov\Phi}\le\Estar+\eta_\Phi$, where
  $\eta_\Phi\coloneq70\sqrt n+320\,\sigma_1^{2}\hatp^{-3}n^{-1/4}$;
\item $\ip{\LM}{\cU\Phi}=\Phi(\emptyset,\emptyset)\ge e^{-(2/\hatp+1/2)}$;
\item $\ip{\Phi}{\Pi^L\Phi}\le3/\hatp$.
\end{enumerate}
\end{proposition}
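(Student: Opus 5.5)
The plan is to compute everything explicitly for the free squeezed vacuum $\Omega$ and then show that each of the three reductions (truncation at level $\ell$, the hard-core projection $|_{\hc}$, and the projection $P_k$ onto $\cM_k$) changes norms and energies only by the amounts in (i)--(iii).

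\emph{Free-space computations.} Since the modes $(a_{u_m},b_{v_m})$ are orthonormal and independent, $\Omega$ factorizes as a product of two-mode squeezed states. Hence
\[
\norm\Omega^{2}=\prod_m(1-t_m^{2})^{-1},
\qquad
\ip{\Omega}{\cN_L\Omega}/\norm\Omega^{2}=\sum_m t_m^{2}/(1-t_m^{2}).
\]
The level weights $\norm{\Omega_k}^{2}$ are the coefficients of $\prod_m(1-t_m^{2}z)^{-1}$. Using~\cref{eq_t_bounds} and the hypothesis $\sigma_1\le\hatp^{3/2}n^{3/4}/4$, we get $t_{\max}^{2}\le\hatp n^{-1/2}/16$ and $\norm t^{2}\le(1-\hatp)/\hatp$. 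From these I will read off the tail $\sum_{k>\ell}k^{j}\norm{\Omega_k}^{2}$, which is super-polynomially small in $n^{1/4}$, since $\ell>n^{1/4}$ while the level distribution has $O(1/\hatp)$ mean and geometric-type tails.

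\emph{The two projections.} The key structural observation is that $\Omega$ already satisfies the harmonic constraint in Fock space. Because $u_m\perp\one$, we have $[\sum_ia_i,a_{u_m}^{\dagger}]=\ip{\one}{u_m}=0$, so $\sum_ia_i\,\Omega=0$, and similarly on $R$. By \cref{lemma_embed}, $\Omega_k$ also lies in $\cF_0$. Hence $P_k(\Omega_k|_{\hc})$ differs from $\Omega_k$ only through the hard-core defect $D_k\coloneq\Omega_k-\Omega_k|_{\hc}$, and $\norm{\Omega_k-P_k(\Omega_k|_{\hc})}\le 2\norm{D_k}$ (roughly, after checking that the harmonic projection of $\Omega_k|_{\hc}$ moves it by at most $\norm{D_k}$). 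To bound the defect I will use $\sum_i\norm{a_i^{2}\Omega}^{2}$. We have $a_i\Omega=(\sum_m t_mu_m(i)b^{\dagger}_{v_m})\Omega$, so $a_i^{2}\Omega$ is this creation operator squared applied to $\Omega$. A Cauchy--Schwarz/CCR estimate gives $\norm{a_i^{2}\Omega}^{2}\lesssim(\sum_mt_m^{2}u_m(i)^{2})^{2}\,\mathrm{poly}(\cN)$-weighted norms. Summing over $i$ yields a total defect weight $O(t_{\max}^{2}\norm t^{2})=O(n^{-1/2}/\hatp)$ on levels $\le\ell$, with no delocalization of the $u_m$ needed. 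The same holds on $R$.

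\emph{Parts (ii) and (iii).} All projections are contractions and preserve the vacuum, so $\Upsilon(\emptyset,\emptyset)=1$ and $\norm\Upsilon\le\norm\Omega$. Therefore
\[
\Phi(\emptyset,\emptyset)=1/\norm\Upsilon\ge\prod_m(1-t_m^{2})^{1/2}\ge\exp\Bigl(-\tfrac{\norm t^{2}}{2(1-t_{\max}^{2})}\Bigr)\ge e^{-(2/\hatp+1/2)},
\]
which gives (ii) with room to spare. For (iii), $P_k$ commutes with $\Pi^L$, so $\ip{\Upsilon}{\Pi^L\Upsilon}\le\sum_kk\norm{\Omega_k}^{2}$. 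The defect bound gives $\norm\Upsilon^{2}\ge\norm{\Omega_{\le\ell}}^{2}(1-O(n^{-1/2}/\hatp))$. The ratio is therefore at most $(1+o(1))\sum_mt_m^{2}/(1-t_m^{2})\le3/\hatp$, using the assumption $n\ge(100/\hatp)^{4}$ to absorb the errors.

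\emph{Part (i), the main obstacle.} By \cref{lemma_CB}, $\ip{\Phi}{\Hcov\Phi}=\ip{\Phi}{\widetilde H\Phi}$, and by its part (ii) with $4\kcut>\ell$ (so $\kcut\approx n^{1/4}/4$), this is at most $\ip{\Phi}{\HB\Phi}+\etaB$. The term $\etaB$ contributes $O(\sqrt n)+O(\sigma_1n^{-3/8})$, which is consistent with $70\sqrt n$. The remaining task is to compare $\ip{\Phi}{\HB\Phi}$ with $\Estar=\ip{\Omega}{\HB\Omega}/\norm\Omega^{2}$. I will write $\Upsilon=\Omega_{\le\ell}-R$, where $R$ collects the truncation and defect errors, and use $(\HB-\Estar)\Omega=0$. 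This gives
\[
\ip{\Upsilon}{(\HB-\Estar)\Upsilon}=\ip{R}{(\HB-\Estar)R}-2\operatorname{Re}\ip{R}{(\HB-\Estar)\Omega_{\le\ell}}.
\]
The cross term only involves the boundary level $\ell$, through the pair term, and is negligible by the tail bounds. The term $\ip{R}{(\HB-\Estar)R}$ is bounded using the band estimate \cref{lemma_band} on $\cF_0$ together with $\omega(\cN_L+\cN_R)\le\hatp n\ell$: the pair term costs $\sigma_1\sqrt{(k+r)(k+1)}$ and the number term costs $\hatp nk$, multiplied by $\norm R^{2}=O(n^{-1/2}/\hatp)$. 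Since $\sigma_1^{2}/(\hatp n)$-type factors arise from optimizing Cauchy--Schwarz with weight $\omega$, this should produce the $320\sigma_1^{2}\hatp^{-3}n^{-1/4}$ term. The delicate point, where I expect most of the work, is getting the $k$-dependence of the defect norm sharp enough that multiplying by the $O(n)$ operator scale still leaves $O(\sqrt n)$. The fallback is to bound the defect level by level, using $\norm{D_k}^{2}\lesssim k^{2}t_{\max}^{2}\norm{\Omega_k}^{2}$, which with $k\le\ell\sim n^{1/4}$ gives exactly $n\cdot n^{1/2}\cdot n^{-1/2}\cdot O(1)=O(\sqrt n)$-type bounds.
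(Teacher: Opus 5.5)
Your overall route matches the paper's: truncate $\Omega$ at level $\ell$, restrict to hard-core vectors, project each level onto $\cM_k$, pass from $\Hcov$ to $\HB$ with \cref{lemma_CB}, and expand the energy around $\Omega_{\le\ell}$. Part~(ii) via $\norm\Upsilon\le\norm\Omega=\prod_m(1-t_m^2)^{-1/2}$ is correct, and sharper than needed.

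The gap is the step you flag as ``roughly, after checking'': that projecting $\Omega_k|_{\hc}$ onto $\cM_k$ moves it by at most $\norm{D_k}$. Both (i) and (iii) depend on it, since (iii) needs $\norm\Upsilon\approx\norm{\Omega_{\le\ell}}$. The identity $a_{\one}\Omega_k=b_{\one}\Omega_k=0$ does not give this. It places $\Omega_k$ in $X\coloneq\ker a_{\one}\cap\ker b_{\one}$ on all of Fock space. $\Omega_k|_{\hc}$ lies in the hard-core subspace $Y$ within distance $\norm{D_k}$ of $\Omega_k$. But $\cM_k=X\cap Y$, and a vector of $Y$ can be close to $X$ yet far from $X\cap Y$ unless the two subspaces meet at a quantitatively large angle. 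So the ``hence'' does not follow.

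The missing ingredient is a spectral gap for the down operator. On level $k\le n/2$, the orthogonal complement of $\cH_k^L$ consists of weight-$s_k$ vectors in spins $s\ge s_k+1$. There $(\partial^L)^{\dagger}\partial^L$, the pullback of $S_L^-S_L^+=\bS_L^2-(S_L^z)^2-S_L^z$, is at least $(s_k+1)(s_k+2)-s_k(s_k+1)=n-2k+2$. Hence $(n-2k+2)\norm{\psi-P_k\psi}^2\le\norm{a_{\one}\psi}^2+\norm{b_{\one}\psi}^2$ for every hard-core $\psi$ on level $k$.

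Only now does your observation $a_{\one}\Omega_k=0$ do its work. Since $a_{\one}$ preserves hard-core vectors, $a_{\one}(\Omega_k|_{\hc})=-(a_{\one}D_k)|_{\hc}$. Only configurations with a single doubly occupied site on $L$ contribute, and each hard-core output has at most $k$ such predecessors. So $\norm{a_{\one}(\Omega_k|_{\hc})}^2\le k\sum_i\norm{a_i^2\Omega_k}^2$. The harmonic defect is therefore at most $2k/(n-2k+2)$ times the double-occupancy sum that already controls the hard-core defect, which is what you need.

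With that in hand, your lumped expansion $\Upsilon=\Omega_{\le\ell}-R$ is slightly simpler than the paper's five-term split. You must keep the boundary term $\ip{\Omega_{\le\ell}}{(\HB-\Estar)\Omega_{\le\ell}}$, which your display drops. The cross term is then at most $1+\norm R^2$, and $\ip{R}{(\HB-\Estar)R}\le3n\ell\norm R^2$. So you only need the harmonic defect comparable to the hard-core one. The paper's cross term involves $\Omega_{\le\ell}|_{\hc}$ and pays $n^2$ times the squared harmonic defect, so it genuinely uses the extra factor $\ell/n$.

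One correction in (i): do not insert $t_{\max}^2\le\hatp n^{-1/2}/16$ before multiplying by the energy scale. Then $\norm R^2=O(n^{-1/2}/\hatp)$ times the $O(n\ell)$ scale of $\HB-\Estar$ on levels $\le\ell$ gives $O(n^{3/4}/\hatp)$, which exceeds $\eta_\Phi$ when $\sigma_1$ is small. Keep $t_{\max}^2\le(\sigma_1/(\hatp n))^2$ instead. The defect is then at most $t_{\max}^2\norm t^2\ell^2\norm{\Omega_{\le\ell}}^2\le4\sigma_1^2\hatp^{-3}n^{-3/2}\norm{\Omega_{\le\ell}}^2$. Sum over levels rather than using your per-level fallback, since the double-annihilation bound is in terms of $\norm{\Omega_{k-2}}^2$, not $\norm{\Omega_k}^2$. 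Multiplying this defect by $3n\ell$ is exactly what produces the $\sigma_1^2\hatp^{-3}n^{-1/4}$ term; it does not come from a Cauchy--Schwarz optimization in $\omega$.
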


Factorial decay of $\norm{\Omega_k}$ controls the truncation, and the maximal squeezing
$t_{\max}^2\le(\sigma_1/(\hatp n))^2$ controls both projection losses. The proof is in
\cref{appendix_wick}.

\section{Proof of the main theorems}\label{section_assembly}

The spectral theorem follows by combining the low-level comparison, the high-level
lower bound, and a trial state through \cref{thm_transfer}. The estimator
reduction, \cref{thm_reduction}, then turns the spectral bounds into the
algorithmic guarantee.

\subsection{Proof of \texorpdfstring{\cref{thm_spectral}}{Theorem B}}\label{section_proof_spectral}

The low-level comparison error grows with the cutoff $\kcut$, whereas the
localization error decreases and the high-level energy bound improves. The choice
$\kcut=\lfloor n^{1/4}\rfloor$ makes both errors $o(n)$ and separates the high-level
spectrum from the ground energy.

\begin{proof}[Proof of \cref{thm_spectral}]
  Fix $G\in\mathcal G(p,C)$. Then $\hatp\ge p$ and
  $\sigma_1,\Delta\le C\sqrt{n\log n}$. Choose $\kcut\coloneq\lfloor n^{1/4}\rfloor$ and set
  \[
    \eta_{\mathrm{tot}}
    \coloneq\etaB(\kcut)+2\etaIMS(\kcut)+\eta_\Phi+\frac{99(\sigma_1+\Delta)}{\hatp}
  \]
  Here $\eta_\Phi$ is from \cref{prop_wick}. The four summands are bounded,
  in order, by
  \[
    16\sqrt n+5C n^{1/8}\sqrt{\log n},\qquad
    12C n^{5/8}\sqrt{\log n},\qquad
    70\sqrt n+320C^2p^{-3}n^{3/4}\log n,\qquad
    198Cp^{-1}\sqrt{n\log n}
  \]
  Thus $\eta_{\mathrm{tot}}=O_{p,C}(n^{7/8})=o(n)$. Increase
  $n_0=n_0(p,C)$ so that, whenever $n\ge n_0$,
  \begin{gather*}
    \sigma_1+\Delta\le\frac{pn}{100},\qquad
    \sigma_1\le\frac{p^{3/2}n^{3/4}}4,\qquad
    n\ge\Bigl(\frac{100}p\Bigr)^{4},\\
    \kcut\ge\frac{20}p,\qquad 16\kcut\le n,\qquad
    \eta_{\mathrm{tot}}\le\frac p{32}e^{-(4/p+1)}n
  \end{gather*}
  The first three inequalities also imply the corresponding hypotheses with $\hatp$
  in place of $p$. Identify $\cS$ with $\cM$ through $\cU$, so that $H$ and
  $H_G|_{\cS}$ have the same spectrum. We now verify the transfer theorem's hypotheses.

  \emph{Low levels.} By \cref{lemma_shift_item1} of \cref{lemma_shift}, hypothesis~(b) holds with
  $H'=\Hcov(\hatp')$ and $C'=\tfrac12\csl(n+2)$. By \cref{lemma_shift_item2} of \cref{lemma_shift},
  $\sigma_1<\hatp'n$, so \cref{lemma_CB_item2} of \cref{lemma_CB} applies at the density $\hatp'$ and gives
  hypothesis~(c) with $E'=\Estar(\hatp')-\etaB(\kcut)$ and
  $\gamma'=2\omega_\star(\hatp')\ge\hatp n/2$. By \cref{lemma_shift_item2} of \cref{lemma_shift} again,
  \begin{equation}\label{eq_assembly_low}
    E'-C'\ge\Estar-\etaB(\kcut)-\frac{72(\sigma_1+\Delta)}{\hatp}
  \end{equation}

  \emph{High levels.} Let $\Eout\coloneq\Estar+2n$. On levels $2\kcut\le k\le n/2$,
  \cref{eq:OI-def} and~\cref{eq:Hmf-final} give
  \[
    H=\Hmf+\Hxi\ge\Emf+\frac{\hatp}{4}k(n+1-k)-3n\ge\Emf+\Bigl(\frac{\hatp\kcut}4-3\Bigr)n\ge\Eout
  \]
  Here $k(n+1-k)\ge2\kcut(n+1-2\kcut)\ge\kcut n$ throughout the indicated
  range. Since $\hatp\kcut\ge20$ and $\Emf\ge\Estar$, this verifies~(a).

  \emph{Trial state.} Let $\Phi$ be the trial state of \cref{prop_wick}. By its
  parts~(i) and~(iii) and~\cref{eq:sl-form-bound},
  \[
    \ip{\Phi}{H\Phi}\le\ip{\Phi}{\Hcov\Phi}+\csl\ip{\Phi}{\Pi^L\Phi}
    \le\Estar+\eta_\Phi+\frac{27(\sigma_1+\Delta)}{\hatp}\eqcolon U
  \]
  This verifies~(d).
  Part~(ii) of the same proposition gives, for the vacuum $\psi\in\cM$ with
  $\cU\psi=\LM$,
  \[
    \ip{\psi}{\Phi}\ge e^{-(2/\hatp+1/2)}\ge2f_0,
    \qquad f_0\coloneq\tfrac12e^{-(4/p+1)/2}
  \]

  By~\cref{eq_assembly_low}, $E'+\gamma'-C'\ge
  \Estar+\hatp n/2-\etaB(\kcut)-72(\sigma_1+\Delta)/\hatp$.
  Also $\Eout=\Estar+2n\ge\Estar+\hatp n/2$. \cref{thm_transfer}
  therefore gives
  \begin{equation}\label{eq_energy_window}
    |\lambda_0-\Estar|\le\eta_{\mathrm{tot}},
    \qquad
    \lambda_1\ge\Estar+\frac{\hatp n}2-\eta_{\mathrm{tot}}
  \end{equation}
  The first bound also gives a ground-energy estimate. Subtracting the two bounds
  gives $\gamma\ge\hatp n/2-2\eta_{\mathrm{tot}}\ge pn/4$.
  Finally, the trial-state bound and~\cref{eq_energy_window} yield
  $\ip{\Phi}{H\Phi}-\lambda_0\le2\eta_{\mathrm{tot}}$. Apply the fidelity
  conclusion of \cref{thm_transfer} to the vacuum to obtain
  \[
    |\ip{\LM}{\Psi_0}|
    \ge2f_0-\sqrt{\frac{2\eta_{\mathrm{tot}}}{pn/4}}
    \ge2f_0-f_0
    =f_0
    \qedhere
  \]
\end{proof}

All inputs to these bounds are computable from $\hatp$, $\sigma_1$, $\Delta$, and the
singular values of $\xi_c$. Thus membership in $\mathcal G(p,C)$ and the resulting
gap, overlap, and energy bounds can be checked in polynomial time for an individual graph.

\subsection{Proof of \texorpdfstring{\cref{thm_sampler}}{Theorem A}}\label{section_proof_sampler}

\cref{thm_spectral} supplies the gap and starting-state overlap required by
\cref{thm_reduction}. We first obtain a single-edge estimate, then amplify its
success probability and sum the estimates to obtain the energy.

\begin{proof}[Proof of \cref{thm_sampler}]
  Take $n\ge\max(n_0,3)$ and $c\coloneq n^2+1>|E|$. Then
  $Z=c+|E|+2n\le3n^2$. \cref{thm_spectral} gives the inputs
  $\gamma_0=pn/4$ and $a_{\LM,0}=f_0$ for \cref{thm_reduction};
  \cref{lemma_uncond_cert} gives $b_0=4n/(c+|E|)\le4/n$.
  Choose the smallest integer depth allowed by~\cref{eq_tchoice}. Since
  $Z/\gamma_0\le12n/p$ and $f_0$ depends only on $p$,
  \[
    t=O_p\bigl(n\log(n/\varepsilon)\bigr),\qquad
    2b_0t\le\frac{8t}{n}=O_p\bigl(\log(n/\varepsilon)\bigr)
  \]
  With the smallest sample count allowed by~\cref{eq_Nchoice}, one run therefore
  uses $N=\mathrm{poly}(n)(n/\varepsilon)^{O(1/p)}\delta^{-1}$ samples and takes
  $O(Nt)$ time after $O(n^2)$ preprocessing.

  \emph{Correlations.} Use failure probability $1/4$ for each run, make
  $N_{\mathrm{med}}\coloneq\lceil8\log(1/\delta)\rceil$ independent runs, and return their median. If the
  median is off by more than $\varepsilon$, at least half the runs failed.
  Hoeffding's inequality bounds this probability by $e^{-N_{\mathrm{med}}/8}\le\delta$.
  Thus the median trick~\cite{JerrumValiantVazirani1986} changes the confidence
  factor in the running time from $\delta^{-1}$ to
  $O(1+\log(1/\delta))$.

  \emph{Energy.} Since
  $\lambda_0=\sum_{ij\in E}A_{ij}\ev{\bS_i\!\cdot\bS_j}$, apply the correlation estimate to every edge
  with accuracy $\varepsilon/|E|$ and failure probability $\delta/|E|$, and sum the
  results weighted by $A_{ij}\le1$. A union bound gives total error at most $\varepsilon$ with probability
  at least $1-\delta$. Finally, $|E|\le n^2$ and
  $\log(|E|/\delta)\le2\log n+\log(1/\delta)$, so the additional factor of $|E|$
  and the smaller per-edge accuracy are absorbed into
  $\mathrm{poly}(n)(n/\varepsilon)^{O(1/p)}(1+\log(1/\delta))$.
\end{proof}

\section{Discussion}\label{section_discussion}

\cref{thm_reduction} holds on every connected balanced bipartite graph and converts
a singlet-sector gap and a starting-state overlap into a running time. Density and expansion
enter only through \cref{thm_spectral}, which supplies these two quantities. The
projector is therefore not intrinsically restricted to dense graphs, and extending
\cref{thm_sampler} amounts to asking which parts of the spectral input and the
sampling analysis survive as the degree decreases.

\emph{Below constant density.} The average degree is $\hatp n$. If the gap keeps
its natural scale $\gamma=\Omega(\hatp n)$, the depth~\cref{eq_tchoice} is of order
$(|E|/\gamma)\log(\sqrt n/(a_{\LM}^2\varepsilon))$, which stays polynomial. The restrictive
term is the sample size: our bound charges the worst-case one-step distortion $b_0$, of
order $1/(\hatp n)$, to every step, which produces the factor
$e^{2b_0t}=(\sqrt n/(a_{\LM}^2\varepsilon))^{O(n/\gamma)}$ in~\cref{eq_Nchoice}, whose
exponent grows as the density decreases. A concentration estimate for the typical
fluctuations of the path weights, in place of worst-case control, could improve this
without changing the projector or the reference walk.

The present analysis already reaches below constant density. The estimates of
\cref{thm_spectral} remain effective for $p\ge C_3/\log n$ with $C_3$ large, and at
$\hatp=\Theta(1/\log n)$ \cref{thm_reduction} gives quasipolynomial running time
$(n/\varepsilon)^{O(\log n)}$. Going substantially further is obstructed by the overlap
rather than the gap. Our bound $|\ip{\LM}{\Psi_0}|\ge e^{-O(1/\hatp)}$ matches the qualitative
prediction of the quadratic model: as $\hatp$ decreases, its ground state contains more excitation pairs, reducing the level-zero component and hence the overlap with the Lieb-Mattis starting state. Even on a strong expander, that state eventually loses
the inverse-polynomial overlap that efficient projection needs.

\emph{Choosing the starting state.} For a coefficient vector $z\ge0$ on $\Sn$, let
$\psi_z\propto\sum_\pi z_\pi\ket\pi$. Using $\psi_z$ on both sides of the forward-walking
ratio~\cref{eq_fw} preserves the positivity the sampling argument needs, and, up to a
factor from the variation of the boundary weights, the reduction then depends on
$|\ip{\psi_z}{\Psi_0}|$ in place of $a_{\LM}$. The fidelity argument behind
\cref{thm_transfer} gives
\[
  |\ip{\psi_z}{\Psi_0}|\ge1-\sqrt{\bigl(\ip{\psi_z}{H_G\psi_z}-\lambda_0\bigr)/\gamma}
\]
so it suffices to find an efficiently samplable positive valence-bond state whose energy
lies within a small fraction of the gap above $\lambda_0$. The trial state of
\cref{prop_wick} has the required energy but no valence-bond expansion suitable
for the estimator. A candidate assigns each matching $\pi$ the nonnegative, factored coefficient
$z_\pi=\prod_{i\in L}\kappa_{i\pi(i)}$. When $\kappa$ is constant this gives the Lieb-Mattis starting state. A natural choice of $\kappa$ adds the Bogoliubov pair matrix $\sum_mt_mu_mv_m^\top$ of the graph to a constant, potentially improving that overlap. This is a resonating-valence-bond trial state of the type used in projector Monte Carlo~\cite{Sandvik2005}. Can its energy be shown to
lie within the gap, and can its boundary weights be evaluated with small variation?

\emph{Toward sparse expanders.} The sparse problem thus splits into three logically
distinct tasks: a singlet-sector gap of order $\hatp n$ (spectral), a positive samplable starting state
within a fraction of the gap in energy (variational), and control of the path-weight
fluctuations over the required depth (probabilistic). In the dense regime all three come
from mean-field structure. At low degree the spectral input in particular must come from
methods that are not perturbations of the complete bipartite model.

\section*{AI disclosure}
\label{sec_AI_disclosure}
\addcontentsline{toc}{section}{AI disclosure}

The authors used AI-based tools to find the original proof strategy. In particular, the connection to the Bogoliubov transformation was made by Claude. AI was also used in improving the presentation. All proofs, calculations, and conclusions suggested by AI were independently verified by all authors.

\clearpage
\printbibliography[heading=bibintoc]

\clearpage
\appendix
\section{Deferred proofs for the estimator}\label{appendix_estimator}

\subsection{The projection error}\label{appendix_fwconv}

\begin{proof}[Proof of \cref{lemma_fwconv}]
  We use the second expression in~\cref{eq_fw}. Fix the phase of $\Psi_0$ so that
  $\ip{\Psi_0}{\LM}=a_{\LM}$, and write $\LM=a_{\LM}\Psi_0+\varphi$ with
  $\varphi\perp\Psi_0$ and $\norm\varphi\le1$. Then
  $\Tc^t\LM=\mu_0^t(a_{\LM}\Psi_0+r)$ with $r\coloneq\mu_0^{-t}\Tc^t\varphi\perp\Psi_0$, and
  $\norm r\le(\mu_1/\mu_0)^t$ by the contraction estimate following~\cref{eq_ratio}.
  Dividing the numerator and the denominator of~\cref{eq_fw} by $\mu_0^{2t}$ leaves
  \[
    a_{\LM}^2\ev{\bS_i\!\cdot\bS_j}+2a_{\LM}\operatorname{Re}\ip{\Psi_0}{O_{ij}r}
    +\ip{r}{O_{ij}r}
    \qquad\text{and}\qquad
    a_{\LM}^2+\norm r^2
  \]
  The operator $O_{ij}$ has eigenvalues $-\tfrac34$ and $\tfrac14$, so
  $\norm{O_{ij}}=\tfrac34$ and $|\ev{\bS_i\!\cdot\bS_j}|\le\tfrac34$. Subtracting
  $\ev{\bS_i\!\cdot\bS_j}$ times the denominator from the numerator therefore leaves at
  most $\tfrac32a_{\LM}(\mu_1/\mu_0)^t+\tfrac32(\mu_1/\mu_0)^{2t}\le3(\mu_1/\mu_0)^t$ in
  absolute value, using $a_{\LM}\le1$ and $\mu_1\le\mu_0$. The denominator is at least
  $a_{\LM}^2$, so~\cref{eq_projectioncondition} gives
  $|\ev{\bS_i\!\cdot\bS_j}_t-\ev{\bS_i\!\cdot\bS_j}|\le3a_{\LM}^{-2}(\mu_1/\mu_0)^t
  \le\varepsilon/(8\sqrt{n+1})\le\varepsilon$.
\end{proof}

\subsection{The relative second moment}\label{appendix_secondmoment}

\begin{proof}[Proof of \cref{lemma_secondmoment}]
  All quantities below are nonnegative, and inequalities between functions on $\Sn$ are
  entrywise. The proof has three independent ingredients: the branching second moment at
  the midpoint, the squared path measure on each leg, and a spectral lower bound on the
  denominator.

  \emph{The midpoint.} Let $g\colon\Sn\to[0,\infty)$ and write
  $\pi'\coloneq\pi\circ(i\ \pi^{-1}(j))$. The branch rule of \cref{alg_sample} gives
  \[
    \E[\varsigma^2g(y_0)\mid x_t=\pi]\le\tfrac9{16}\bigl(g(\pi)+g(\pi')\bigr),
    \qquad
    \E[\iota^2g(y_0)\mid x_t=\pi]\le3g(\pi)
  \]
  where the term $g(\pi')$ is present only when $\pi(i)\ne j$. In that case
  $M_{\pi\pi'}=W_{\pi\pi'}=2\mult(\pi,\pi')\ge2$, because $W$ and $\widehat W$ agree off
  the diagonal and the edge $ij$ itself induces the reconnection. Both conditional
  expectations are therefore at most $3\bigl((I+M)g\bigr)(\pi)$.

  \emph{The two legs.} A leg of length $t$ started uniformly has probability
  $\prod_s\widehat W_{x_sx_{s+1}}/(n!\,Z^t)$, so squaring its weight gives
  $\E[w(x)^2g(x_t)]=\one^\top M^tg/(n!\,Z^t)$, and likewise
  $\E[w(y)^2\mid y_0=\tau]=(M^t\one)_\tau/Z^t$. Conditioning on the first leg and on the
  branch,
  \[
    \max\{\E[X^2],\E[Y^2]\}
    \le
    \frac{3\,\one^\top M^t(I+M)M^t\one}{n!\,Z^{2t}}
    \le
    \frac{3(1+Z)\,\one^\top M^{2t}\one}{n!\,Z^{2t}}
    \le
    3(1+Z)e^{2b_0t}\Bigl(\frac{\varrho_W}{Z}\Bigr)^{4t}
  \]
  where the second step uses $M\one\le\widehat W\one=Z\one$ and the third is the
  path-weight bound of \cref{lemma_uncond_cert}.

  \emph{The denominator.} By \cref{lemma_walk} and $\ip{\Neel}{\pi}=2^{-n/2}$,
  \[
    2^{-n/2}\one^\top W^{2t}\one=\ip{\Neel}{\Tc^{2t}\textstyle\sum_\pi\ket\pi}
  \]
  Since $\sum_\pi\ket\pi=n!\,2^{-n/2}\sqrt{n+1}\,\LM$ by~\cref{eq_LM} and $\Neel$ has
  singlet component $\LM/\sqrt{n+1}$, it follows that
  $\one^\top W^{2t}\one=n!\,\ip{\LM}{\Tc^{2t}\LM}$.
  Since $\Tc$ is positive on $\cS$, it follows from~\cref{eq_EY} and~\cref{eq_rhoW} that
  \[
    \E Y
    =
    \frac{\ip{\LM}{\Tc^{2t}\LM}}{Z^{2t}}
    \ge
    a_{\LM}^2\Bigl(\frac{\varrho_W}{Z}\Bigr)^{2t}
  \]

  Dividing the second-moment bound by the square of this lower bound cancels the factor
  $(\varrho_W/Z)^{4t}$ and gives
  \[
    \max\{\E[X^2],\E[Y^2]\}\le\frac{3(1+Z)}{a_{\LM}^4}e^{2b_0t}(\E Y)^2
  \]
  which is~\cref{eq_relative_second}.
\end{proof}

\section{Deferred proofs for the quadratic model}\label{appendix_boson}

\subsection{Diagonalization of the Bogoliubov Hamiltonian}\label{appendix_bogo}

We diagonalize $\HB$ by a unitary change of basis. In the new basis its eigenvalues are
$\Estar+\sum_m\omega_m(k_m+k'_m)$, where $k_m,k'_m$ are nonnegative integer coordinates.
The constraint defining $\cF_0$ becomes $\sum_m k_m=\sum_m k'_m$: any nonzero admissible
coordinate vector therefore adds at least $2\omega_\star$ to the minimum eigenvalue.

\begin{proof}[Proof of \cref{lemma_bogo}]
\emph{Coordinates.} Complete the left and right singular vectors of $\xi_c$ to
orthonormal bases, write $a_m\coloneq a_{u_m}$, $b_m\coloneq b_{v_m}$, and set $g_m=t_m=0$ and
$\omega_m=\omega$ for $m>r$. In these coordinates $\cF$ has an orthonormal basis
$e_{k,k'}$ indexed by $k,k'\in\mathbb Z_{\ge0}^n$. The operators $a_m,b_m$ lower the
corresponding coordinate, e.g. $a_m e_{k,k'}=\sqrt{k_m}\,e_{k-\mathbf e_m,k'}$ (zero if
$k_m=0$), where $\mathbf e_m$ is the $m$-th standard unit vector; $\dagger$ denotes the
adjoint, which raises that coordinate. Thus $a_m^{\dagger}a_m$ and
$b_m^{\dagger}b_m$ act diagonally with entries $k_m$ and $k'_m$, respectively.
We have
$[a_m,a_{m'}^{\dagger}]=[b_m,b_{m'}^{\dagger}]=\delta_{mm'}I$; lowering operators
commute with each other, as do their adjoints, and every $a$-operator commutes with
every $b$-operator. Consequently
\[
  \HB=\Emf+\sum_{m=1}^nh_m,
  \qquad
  h_m\coloneq\omega\bigl(a_m^{\dagger}a_m+b_m^{\dagger}b_m\bigr)-g_m\bigl(a_m b_m+a_m^{\dagger}b_m^{\dagger}\bigr)
\]
and the summands act on disjoint pairs of coordinates. Polynomial identities below
hold first on finite linear combinations of basis vectors.

\emph{Change of operators.} Since $g_m<\omega$, the parameter
$t_m=g_m/(\omega+\omega_m)$ lies in $[0,1)$. Set
$\theta_m\coloneq\tfrac12\log((1+t_m)/(1-t_m))$, so that $\tanh\theta_m=t_m$, and define
\[
  \tilde a_m\coloneq\cosh\theta_m\,a_m-\sinh\theta_m\,b_m^{\dagger},
  \qquad
  \tilde b_m\coloneq\cosh\theta_m\,b_m-\sinh\theta_m\,a_m^{\dagger}
\]
These satisfy the same commutator identities, since $\cosh^2\theta_m-\sinh^2\theta_m=1$.
Moreover $\cosh2\theta_m=\omega/\omega_m$ and $\sinh2\theta_m=g_m/\omega_m$, so expansion gives
\[
  \tilde a_m^{\dagger}\tilde a_m+\tilde b_m^{\dagger}\tilde b_m
  =\frac{h_m}{\omega_m}+\Bigl(\frac{\omega}{\omega_m}-1\Bigr)I
\]
Summing yields the diagonal form (scalar constants multiply the identity):
\begin{equation}\label{eq_bogo_diag}
  \HB=\Estar+\sum_{m=1}^n\omega_m\bigl(\tilde a_m^{\dagger}\tilde a_m+\tilde b_m^{\dagger}\tilde b_m\bigr)
\end{equation}
Subtracting instead of adding gives
$\tilde a_m^{\dagger}\tilde a_m-\tilde b_m^{\dagger}\tilde b_m=a_m^{\dagger}a_m-b_m^{\dagger}b_m$,
so the change preserves the difference between the two total coordinate sums.

\emph{An eigenbasis.} To justify reading eigenvalues from~\cref{eq_bogo_diag}, we need
a unitary implementing the change. Let
$K\coloneq\sum_m\theta_m(a_m^{\dagger}b_m^{\dagger}-a_m b_m)$.
Then $e^{K}$ is a well-defined unitary on $\cF$~\cite{NamNapiorkowskiSolovej2016}, and the identities $[K,a_m]=-\theta_m b_m^{\dagger}$ and
$[K,b_m^{\dagger}]=-\theta_m a_m$ give
$e^{K}a_me^{-K}=\cosh\theta_m\,a_m-\sinh\theta_m\,b_m^{\dagger}=\tilde a_m$ and similarly
$e^{K}b_me^{-K}=\tilde b_m$ (solve the corresponding two-variable linear differential
equation). Hence the vectors $e^{K}e_{k,k'}$ form a complete orthonormal eigenbasis of $\HB$, with
eigenvalues $\Estar+\sum_m\omega_m(k_m+k'_m)$.
Also $e^{K}$ preserves $\cN_L-\cN_R$, so the vectors with
$\sum_m k_m=\sum_m k'_m$ form an eigenbasis of $\cF_0$.

\emph{The minimum eigenvector and gap.} Expanding~\cref{eq_squeezed} in the original
coordinate basis gives the explicit vector
\[
  \Omega=\sum_{k\in\mathbb Z_{\ge0}^n}\Bigl(\prod_{m=1}^nt_m^{k_m}\Bigr)e_{k,k},
  \qquad
  \norm\Omega^2=\prod_{m=1}^n(1-t_m^2)^{-1}<\infty
\]
with the convention $t_m^0=1$. Thus $\Omega\in\cF_0$; geometric decay also gives finite
moments of all coordinate sums. Comparing coefficients yields
$a_m\Omega=t_m b_m^{\dagger}\Omega$ and $b_m\Omega=t_m a_m^{\dagger}\Omega$, hence
$\tilde a_m\Omega=\tilde b_m\Omega=0$. It follows that $\Omega$ is proportional to
$e^{K}e_{0,0}$ and $\HB\Omega=\Estar\Omega$.
Since all $\omega_m\ge\omega_\star>0$, this is the unique minimum eigenvector up to
scaling. Every other eigenvector in $\cF_0$ has
$\sum_m k_m=\sum_m k'_m\ge1$, so its eigenvalue is at least $\Estar+2\omega_\star$.
This proves~\cref{eq_bogo_inputs}.
\end{proof}

\subsection{Low-level comparison with the Bogoliubov Hamiltonian}\label{appendix_low}

We use the harmonic set-function spaces of \cref{sec:rep-theory}:
$\cM_k=\cH_k^L\otimes\cH_k^R$, with level projections $P_k$ and down operators
$\partial_u^L$, $\partial_v^R$. Under the embedding of \cref{lemma_embed}, level $k$
has $k$ occupied sites on each side. We first recover $\Hcov=\Hmf+\Hpair$ by
compressing $\widetilde H$ to $\cM$, then bound its difference from $\HB$ below level
$4\kcut$. Here compressing an operator to a subspace means multiplying it on both sides by the
orthogonal projection onto that subspace and restricting it there.

\begin{proof}[Proof of \cref{lemma_CB}]
\emph{(i) Exact compression.} On $\cM_k$, the diagonal part of $\widetilde H$ is
$[\Emf+\hatp\,d(k)]I=[\Emf+\hatp\,k(n-k+1)]I$, exactly the mean-field block
of~\cref{eq:Hmf-final}. For the block from level $k$ to $k-1$,
\cref{lemma_embed_item3} of \cref{lemma_embed} gives
\[
  P_{k-1}\widetilde HP_k
  =-\Gamma_k\sum_m\sigma_m\,
    P_{k-1}(\partial_{u_m}^L\otimes\partial_{v_m}^R)P_k
\]
The coefficient $\Gamma_k$ is evaluated at the source level, before the two down
operators act. The reverse block is its adjoint, with the projections retained
because the up operators need not preserve the harmonic spaces. These are precisely
the pair blocks of~\cref{eq:Hpair-explicit}; all other blocks vanish. Thus
$\Pi_\cM\widetilde H\Pi_\cM=\Hmf+\Hpair=\Hcov$ on $\cM$.
Both pair directions preserve $\cN_L-\cN_R$, so $\widetilde H$ acts on $\cF_0\supseteq\cM$.
The min-max principle (restricting the admissible subspaces cannot lower an
eigenvalue) gives $\lambda_j(\Hcov)\ge\lambda_j(\widetilde H|_{\cF_0})$.

\emph{(ii) Error on low levels.} Let $\widetilde P_k$ project onto
$\{\psi\in\cF_0:\cN_L\psi=k\psi\}$, so $\widetilde P_k|_\cM=P_k$, and let
$\widetilde P\coloneq\sum_{0\le k<4\kcut}\widetilde P_k$. Unlike $P_k$, $\widetilde P_k$ allows multiple occupations and
does not impose the harmonic constraints. Since $4\kcut\le n/4$,~\cref{eq_HB_error} gives
\[
  \widetilde P(\widetilde H-\HB)\widetilde P=-\hatp\,\cN_L(\cN_L-I)\widetilde P-\widetilde P(\cA_f+\cA_f^{\dagger})\widetilde P
\]
The first term is diagonal with entries $-\hatp\,k(k-1)$ for $0\le k<4\kcut$, hence has
norm at most $16\kcut^2$, using $\hatp\le1$.

For the second term, the block $\widetilde P_{k-1}\cA_f\widetilde P_k$ uses $f(k)=\Gamma_k-\tfrac12$ at its source
level $k$. For $k<4\kcut$, $n-2k+1>n/2$, so $\sqrt{1+x}\le1+x/2$ gives
\[
0\ \le\ \Gamma_k-\frac12
=\frac12\Bigl(\sqrt{1+\frac2{n-2k+1}}-1\Bigr)
\ \le\ \frac1{2(n-2k+1)}
\ \le\ \frac1n
\]
Setting $f(k)=0$ for $k\ge4\kcut$ leaves $\widetilde P\cA_f\widetilde P$ unchanged and makes $|f|\le1/n$ globally.
\cref{lemma_band} on $\cF_0$ therefore gives, for $1\le k<4\kcut$,
\begin{align*}
\norm{\widetilde P_{k-1}\cA_f\widetilde P_k}_\opn
&\le\frac{\sigma_1}n\sqrt{(k-1+r)k}
\le\frac{\sigma_1}n\sqrt{(4\kcut+r)\,4\kcut}\\
&\le\frac{\sigma_1}n\sqrt{\tfrac54n\cdot4\kcut}
=\sqrt5\,\sigma_1\sqrt{\kcut/n}
\end{align*}
using $r\le n$ and $4\kcut\le n/4$. Since
$\widetilde P\cA_f\widetilde P=\sum_{1\le k<4\kcut}\widetilde P_{k-1}\cA_f\widetilde P_k$ has mutually orthogonal block domains and
ranges, its norm is the maximum block norm. Thus
\[
\norm{\widetilde P(\cA_f+\cA_f^{\dagger})\widetilde P}_\opn\le2\norm{\widetilde P\cA_f\widetilde P}_\opn
\le2\sqrt5\,\sigma_1\sqrt{\kcut/n}\le5\sigma_1\sqrt{\kcut/n}
\]
Adding the bounds yields $\norm{\widetilde P(\widetilde H-\HB)\widetilde P}_\opn\le\etaB(\kcut)$, and hence
$|\ip{\psi}{(\widetilde H-\HB)\psi}|\le\etaB(\kcut)\norm\psi^2$ for $\psi\in\cF_{<4\kcut}\coloneq\operatorname{ran}\widetilde P$.

\emph{Eigenvalue comparison.} Let $\widetilde H|_{<4\kcut}$ and $\HB|_{<4\kcut}$ be the
compressions of $\widetilde H$ and $\HB$ to the finite-dimensional space $\cF_{<4\kcut}$,
which contains $\bigoplus_{k<4\kcut}\cM_k$. By (i), $\Hcov|_{<4\kcut}$ is the compression of
$\widetilde H|_{<4\kcut}$ to $\bigoplus_{k<4\kcut}\cM_k$.
The min-max principle, the norm bound above (Weyl's inequality), and the min-max
principle again give, for $j=0,1$,
\begin{align*}
\lambda_j(\Hcov|_{<4\kcut})
&\ge\lambda_j(\widetilde H|_{<4\kcut})
\ge\lambda_j(\HB|_{<4\kcut})-\etaB(\kcut)\\
&\ge\lambda_j(\HB|_{\cF_0})-\etaB(\kcut)
\end{align*}
\cref{lemma_bogo}, which applies since $\sigma_1<\hatp n$, completes the proof.
\end{proof}

\subsection{The trial state}\label{appendix_wick}

The proof of \cref{prop_wick} has three parts. \cref{lemma_wick_trunc} shows
that the truncated squeezed state $\Omega_{\le\ell}$ is an almost exact eigenvector of $\HB$,
\cref{lemma_wick_defects} bounds the two projection losses, and the proof of the
proposition combines them.

\emph{Parameters.} Throughout this subsection assume the hypotheses of
\cref{prop_wick}, let $\ell=\lfloor n^{1/4}\rfloor+1$ as in
\cref{section_wick}, and let $B\coloneq4/\hatp+1$. Since $n^{1/4}\ge100/\hatp\ge100$, the
following hold.
\begin{enumerate}[label=\textup{(\alph*)},nosep]
  \item $\ell\ge100$, $(\ell-1)^{4}\le n\le\ell^{4}$, $\ell\le2n^{1/4}$, $16\ell\le n$,
    $4\ell+16\le n$ and $n-2\ell+2\ge n/2$.
  \item $\norm t^{2}\le1/\hatp$ and
    $t_{\max}^{2}\ell^{2}\le4\sigma_1^{2}/(\hatp^{2}n^{3/2})\le\hatp/4$
    by~\cref{eq_t_bounds}; in particular $B\ge2\norm t^{2}+t_{\max}^{2}\ell^{2}$.
  \item $eB\le14/\hatp\le\ell/e$, so $B^{\ell}/\ell!\le(eB/\ell)^{\ell}\le e^{-\ell}$.
  \item $\norm g^{2}\coloneq\sum_m g_m^{2}=\norm{\xi_c}_F^{2}/4\le\hatp(1-\hatp)n^{2}/4$
    by~\cref{eq_frob}, and $0\le\Emf-\Estar=\sum_m g_m^{2}/(\omega+\omega_m)\le\norm g^{2}/\omega\le n/2$.
\end{enumerate}
The two projection losses will be bounded by
\[
  \epsilon_{\hc}\coloneq t_{\max}^{2}\norm t^{2}\ell^{2}\le\frac{4\sigma_1^{2}}{\hatp^{3}n^{3/2}},
  \qquad
  \epsilon_{\mathrm h}\coloneq\frac{4t_{\max}^{2}\norm t^{2}\ell^{3}}{n-2\ell+2}\le\frac{64\sigma_1^{2}}{\hatp^{3}n^{9/4}}
\]
By~(a) and~(b), $\epsilon_{\hc}\le1/4$ and
$\epsilon_{\mathrm h}=4\ell\epsilon_{\hc}/(n-2\ell+2)\le\epsilon_{\hc}$, hence
$\epsilon_{\hc}+\epsilon_{\mathrm h}\le1/2$.

\emph{The pairing operator.} Let $\cA\coloneq\sum_m g_m a_{u_m}b_{v_m}$, so that
$\HB=\Emf+\omega(\cN_L+\cN_R)-\cA-\cA^{\dagger}$. Cauchy-Schwarz in $m$, Bessel's
inequality on each side, and the canonical commutation relations give
\begin{align*}
  \norm{\cA\psi}^{2}
  &\le\norm g^{2}\sum_m\norm{a_{u_m}b_{v_m}\psi}^{2}
   \le\norm g^{2}\ip{\psi}{\cN_L\cN_R\psi}\\
  \norm{\cA^{\dagger}\psi}^{2}
  &=\norm{\cA\psi}^{2}+\sum_m g_m^{2}\ip{\psi}{(a_{u_m}^{\dagger}a_{u_m}+b_{v_m}^{\dagger}b_{v_m}+I)\psi}
   \le\norm{\cA\psi}^{2}+\norm g^{2}\ip{\psi}{(\cN_L+\cN_R+I)\psi}
\end{align*}
Hence, for every $\psi\in\cF_0$ supported on levels at most $\ell$,
\begin{equation}\label{eq_wick_ops}
  \norm{\cA\psi}\le\norm g\,\ell\norm\psi,
  \qquad
  \norm{\cA^{\dagger}\psi}\le\norm g\,(\ell+2)\norm\psi
\end{equation}

\emph{Ladder identities.} Let $\nu_k\coloneq\norm{\Omega_k}^{2}$, so that
$\norm{\Omega_{\le\ell}}^{2}=\sum_{k\le\ell}\nu_k$. Since $\Omega_{k+1}=\cP\Omega_k/(k+1)$, the
commutators
\begin{equation}\label{eq_wick_ladder}
  [a_{u_m},\cP]=t_m b_{v_m}^{\dagger},
  \qquad
  [b_{v_m},\cP]=t_m a_{u_m}^{\dagger}
\end{equation}
give $a_{u_m}\Omega_{k+1}=t_m b_{v_m}^{\dagger}\Omega_k$ and
$b_{v_m}\Omega_{k+1}=t_m a_{u_m}^{\dagger}\Omega_k$. For a unit vector $u$ and a vector
$\psi$ with $k$ bosons on $L$, Bessel's inequality gives
$\norm{a_u^{\dagger}\psi}^{2}=\norm\psi^{2}+\norm{a_u\psi}^{2}\le(k+1)\norm\psi^{2}$, and
likewise on $R$. Therefore
\begin{equation}\label{eq_wick_b}
  \norm{a_{u_m}\Omega_k}^{2},\ \norm{b_{v_m}\Omega_k}^{2}\ \le\ t_m^{2}\,k\,\nu_{k-1}
  \qquad(k\ge1)
\end{equation}

\begin{lemma}[Truncation estimates]\label{lemma_wick_trunc}
  \leavevmode
  \begin{enumerate}[label=\textup{(\roman*)},nosep]
    \item\label{lemma_wick_trunc_item1} $k!\,\norm{\Omega_k}^{2}\le B^{k}$ for $k\le\ell+1$, and
      $\sum_{k\le\ell}k\,\norm{\Omega_k}^{2}\le\norm t^{2}(1+t_{\max}^{2}\ell)\norm{\Omega_{\le\ell}}^{2}$.
    \item\label{lemma_wick_trunc_item2} $\HB\,\Omega_{\le\ell}=\Estar\Omega_{\le\ell}+\cA\,\Omega_{\ell+1}-\cA^{\dagger}\Omega_\ell$.
    \item\label{lemma_wick_trunc_item3} $\ip{\Omega_{\le\ell}}{(\HB-\Estar)\Omega_{\le\ell}}\le1$.
    \item\label{lemma_wick_trunc_item4} $\norm{\cA\,\Omega_{\ell+1}-\cA^{\dagger}\Omega_\ell}^{2}\le1$.
  \end{enumerate}
\end{lemma}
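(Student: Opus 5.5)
The plan is to work with the explicit product structure of $\Omega$ and read off each item level by level. Everything rests on $\nu_k=\norm{\Omega_k}^2$. In the mode coordinates of \cref{appendix_bogo}, $\Omega_k=\sum_{|\kappa|=k}\prod_m t_m^{\kappa_m}e_{\kappa,\kappa}$, so $\nu_k=h_k(t_1^2,\dots,t_r^2)$ is the complete homogeneous symmetric polynomial of degree $k$. The generating function of these polynomials is $\prod_m(1-xt_m^2)^{-1}$, and its logarithmic derivative gives the Newton-type identity
\[
  k\,\nu_k=\sum_{j=1}^{k}p_j\,\nu_{k-j},\qquad p_j\coloneq\sum_m t_m^{2j}\le\norm t^2\,t_{\max}^{2(j-1)}.
\]

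\emph{Item (i).} The first bound follows by induction on $k\le\ell+1$. Assume $\nu_i\le B^i/i!$ for $i<k$. Then
\[
  k!\,\nu_k\le\norm t^2\sum_{j\ge1}t_{\max}^{2(j-1)}\frac{(k-1)!}{(k-j)!}B^{k-j}.
\]
Using $(k-1)!/(k-j)!\le(\ell+1)^{j-1}$, this is at most $\norm t^2B^{k-1}\sum_{j\ge1}\bigl(t_{\max}^2(\ell+1)/B\bigr)^{j-1}$. Parameter facts (b) give $t_{\max}^2(\ell+1)\le\hatp/4\le B/4$, so the geometric sum is at most $2$. Hence $k!\,\nu_k\le 2\norm t^2B^{k-1}\le B^k$, again by (b). For the second bound, sum the identity over $k\le\ell$ and interchange the sums. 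This gives
\[
  \sum_{k\le\ell}k\,\nu_k\le\norm t^2\Bigl(\sum_{j\ge1}t_{\max}^{2(j-1)}\Bigr)\norm{\Omega_{\le\ell}}^2.
\]
Since $t_{\max}^2\le1/4$, we have $\sum_{j\ge1}t_{\max}^{2(j-1)}=1+t_{\max}^2/(1-t_{\max}^2)\le1+t_{\max}^2\ell$.

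\emph{Item (ii).} Write $\HB=\Emf+\omega(\cN_L+\cN_R)-\cA-\cA^\dagger$. By \cref{lemma_bogo}, $\HB\Omega=\Estar\Omega$, where $\Omega$ has finite moments of all number operators, so every term makes sense. Project this eigen-equation onto the sector $\cN_L=\cN_R=k$; here $\cA$ lowers the level by one and $\cA^\dagger$ raises it by one. This gives, for each $k$,
\[
  (\Emf-\Estar+2\omega k)\,\Omega_k=\cA\Omega_{k+1}+\cA^\dagger\Omega_{k-1}.
\]
Summing $(\HB-\Estar)\Omega_k$ over $k\le\ell$ and substituting these relations, the sums telescope. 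The only survivors are $\cA\Omega_{\ell+1}$, which is not cancelled because $\Omega_{\ell+1}$ is missing from the truncation, and $-\cA^\dagger\Omega_\ell$, which is not cancelled by any $\Omega_{\ell+1}$ term. This is exactly the claimed identity.

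\emph{Items (iii) and (iv).} Both follow from (ii). First, $\cA^\dagger\Omega_\ell$ lies at level $\ell+1$, so it is orthogonal to $\Omega_{\le\ell}$. Hence
\[
  \ip{\Omega_{\le\ell}}{(\HB-\Estar)\Omega_{\le\ell}}=\ip{\Omega_\ell}{\cA\Omega_{\ell+1}}\le\sqrt{\nu_\ell}\,\norm{\cA\Omega_{\ell+1}}.
\]
For (iv), $\cA\Omega_{\ell+1}$ and $\cA^\dagger\Omega_\ell$ lie on the distinct levels $\ell$ and $\ell+1$. The squared norm therefore splits as $\norm{\cA\Omega_{\ell+1}}^2+\norm{\cA^\dagger\Omega_\ell}^2$. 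The pairing bounds~\cref{eq_wick_ops} were derived for levels up to $\ell$, but the same derivation applies verbatim at level $\ell+1$. They give
\[
  \norm{\cA\Omega_{\ell+1}}\le\norm g(\ell+1)\sqrt{\nu_{\ell+1}},\qquad
  \norm{\cA^\dagger\Omega_\ell}\le\norm g(\ell+2)\sqrt{\nu_\ell}.
\]
Now combine (i) with parameter fact (c): both $\nu_\ell$ and $\nu_{\ell+1}$ are at most $B^{\ell+1}/\ell!\lesssim Be^{-\ell}$. Also $\norm g\le n/2\le\ell^4/2$ by (a) and (d). Every quantity is then at most $O(\ell^{10}Be^{-\ell})$. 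Because $\ell\ge100$ and $B\le\ell/(e^2)$, this is far below $1$.

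The only genuinely delicate step is the induction in (i). It needs the ratio $t_{\max}^2\ell/B$ to be small, which is exactly the role of the hypothesis $\sigma_1\le\hatp^{3/2}n^{3/4}/4$. Everything else is bookkeeping.
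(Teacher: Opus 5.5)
Your proof is correct. Item (ii) matches the paper, but item (i) and the final estimates in (iii)--(iv) take a different route.

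\textbf{Item (i).} You read $\nu_k=h_k(t_1^2,\dots,t_r^2)$ off the explicit mode expansion of $\Omega$ in \cref{appendix_bogo}. You then use the exact Newton identity $k\nu_k=\sum_j p_j\nu_{k-j}$ and bound the power sums geometrically. The paper never passes to coordinates. It derives the two-term inequality $(k+1)\nu_{k+1}\le\norm t^2\nu_k+t_{\max}^2\norm t^2k\,\nu_{k-1}$ from the ladder identities \cref{eq_wick_ladder} and \cref{eq_wick_b}. It closes the induction with $B^2\ge\norm t^2B+\norm t^2t_{\max}^2\ell^2$.

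Your identity is exact and slightly sharper. It gives the factor $1/(1-t_{\max}^2)$ in the second bound, and the induction only needs $t_{\max}^2(\ell+1)\lesssim B$. The paper's operator recursion instead reuses the ladder machinery that also drives the boundary identity in (iii) and the double-annihilation bounds of \cref{lemma_wick_defects}. A minor point: the hypothesis on $\sigma_1$ is much stronger than your induction requires. Its essential use is in the projection defects of \cref{lemma_wick_defects}, so ``exactly the role'' overstates it.

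\textbf{Items (iii) and (iv).} You bound $\norm{\cA\Omega_{\ell+1}}$ by the generic estimate $\norm g(\ell+1)\sqrt{\nu_{\ell+1}}$, which is why you need (i) at $k=\ell+1$. The paper instead rewrites $\cA\Omega_{\ell+1}=\sum_m g_mt_m(I+b_{v_m}^\dagger b_{v_m})\Omega_\ell$ and uses $\sum_m g_mt_m=\Emf-\Estar$, so it needs only $\nu_\ell$ and $\nu_{\ell-1}$. In (iv) you split the norm exactly by orthogonality of levels $\ell$ and $\ell+1$, rather than using $(x+y)^2\le2x^2+2y^2$. In both approaches the polynomial prefactors are swamped by $e^{-\ell}$ once $\ell\ge100$ and $n\le\ell^4$.
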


\begin{proof}
\emph{Part (i).} Pairing $(k+1)\Omega_{k+1}=\cP\Omega_k$ with the first ladder identity,
and then using~\cref{eq_wick_b},
\begin{equation}\label{eq_wick_recursion}
\begin{aligned}
  (k+1)\,\nu_{k+1}
  &=\sum_m t_m\ip{a_{u_m}\Omega_{k+1}}{b_{v_m}^{\dagger}\Omega_k}
   =\norm t^{2}\nu_k+\sum_m t_m^{2}\norm{b_{v_m}\Omega_k}^{2}\\
  &\le\norm t^{2}\nu_k+t_{\max}^{2}\norm t^{2}\,k\,\nu_{k-1}.
\end{aligned}
\end{equation}
By~(b), $B\ge2\norm t^{2}+t_{\max}^{2}\ell^{2}$, so
$B^{2}\ge\norm t^{2}B+\norm t^{2}t_{\max}^{2}\ell^{2}$. Since $\nu_0=1$, induction
in~\cref{eq_wick_recursion} gives $\nu_k\le B^k/k!$ for $k\le\ell+1$. For the second
assertion, sum~\cref{eq_wick_recursion} over $0\le k<\ell$:
\[
  \sum_{k=0}^{\ell}k\,\nu_k
  \le\norm t^{2}\sum_{k=0}^{\ell-1}\nu_k+t_{\max}^{2}\norm t^{2}\sum_{k=1}^{\ell-1}k\,\nu_{k-1}
  \le\norm t^{2}(1+t_{\max}^{2}\ell)\norm{\Omega_{\le\ell}}^{2}
\]

\emph{Part (ii).} The level-$k$ component of the eigenvalue equation $\HB\Omega=\Estar\Omega$
of \cref{lemma_bogo} reads
$(\Emf+2\omega k)\Omega_k-\cA\Omega_{k+1}-\cA^{\dagger}\Omega_{k-1}=\Estar\Omega_k$.
Summing over $k\le\ell$ leaves only the two boundary terms.

\emph{Part (iii).} The boundary term $\cA^{\dagger}\Omega_\ell$ lies on level $\ell+1$, so
by~(ii) the form equals $\ip{\Omega_\ell}{\cA\Omega_{\ell+1}}$. Since $a_{u_m}$ and
$b_{v_m}$ commute, the first ladder identity gives
\begin{equation}\label{eq_wick_boundary}
  \cA\Omega_{\ell+1}=\sum_m g_m t_m\bigl(I+b_{v_m}^{\dagger}b_{v_m}\bigr)\Omega_\ell
\end{equation}
With $\sum_m g_m t_m=\sum_m(\omega-\omega_m)=\Emf-\Estar$, then~\cref{eq_wick_b}, part~(i)
with $\nu_{\ell-1}\le(\ell/B)B^{\ell}/\ell!$, and $t_{\max}^{2}\ell^{2}\le B$ from~(b),
\begin{align*}
  \ip{\Omega_\ell}{\cA\Omega_{\ell+1}}
  &=\sum_m g_m t_m\bigl(\nu_\ell+\norm{b_{v_m}\Omega_\ell}^{2}\bigr)
   \le(\Emf-\Estar)\bigl(\nu_\ell+t_{\max}^{2}\ell\,\nu_{\ell-1}\bigr)\\
  &\le2(\Emf-\Estar)\frac{B^{\ell}}{\ell!}
   \le n\,e^{-\ell}
   \le\ell^{4}e^{-\ell}
   \le1
\end{align*}
using~(c) and~(d) in the third step and~(a) in the last two.

\emph{Part (iv).} The vector $b_{v_m}\Omega_\ell$ has $\ell-1$ bosons on $R$, so
$\norm{b_{v_m}^{\dagger}b_{v_m}\Omega_\ell}\le\sqrt\ell\,\norm{b_{v_m}\Omega_\ell}\le t_m\ell\sqrt{\nu_{\ell-1}}$
by~\cref{eq_wick_b}. Hence~\cref{eq_wick_boundary} and~\cref{eq_wick_ops} give
\[
  \norm{\cA\Omega_{\ell+1}}\le(\Emf-\Estar)\bigl(\sqrt{\nu_\ell}+t_{\max}\ell\sqrt{\nu_{\ell-1}}\bigr),
  \qquad
  \norm{\cA^{\dagger}\Omega_\ell}\le\norm g(\ell+2)\sqrt{\nu_\ell}
\]
Using $(x+y)^{2}\le2x^{2}+2y^{2}$ twice, then part~(i) and~(b) as in part~(iii), and finally
(a), (c) and (d),
\begin{align*}
  \norm{\cA\,\Omega_{\ell+1}-\cA^{\dagger}\Omega_\ell}^{2}
  &\le4(\Emf-\Estar)^{2}\bigl(\nu_\ell+t_{\max}^{2}\ell^{2}\nu_{\ell-1}\bigr)+2\norm g^{2}(\ell+2)^{2}\nu_\ell\\
  &\le\Bigl(4(\Emf-\Estar)^{2}(1+\ell)+2\norm g^{2}(\ell+2)^{2}\Bigr)\frac{B^{\ell}}{\ell!}
   \le n^{2}(\ell+2)^{2}e^{-\ell}
   \le\ell^{11}e^{-\ell}
   \le1
   \qedhere
\end{align*}
\end{proof}

\begin{lemma}[Projection defects]\label{lemma_wick_defects}
  \leavevmode
  \begin{enumerate}[label=\textup{(\roman*)},nosep]
    \item $\norm{\Omega_{\le\ell}-\Omega_{\le\ell}|_{\hc}}^{2}\le\epsilon_{\hc}\norm{\Omega_{\le\ell}}^{2}$;
    \item $\norm{\Omega_{\le\ell}|_{\hc}-\Upsilon}^{2}\le\epsilon_{\mathrm h}\norm{\Omega_{\le\ell}}^{2}$.
  \end{enumerate}
\end{lemma}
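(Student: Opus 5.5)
The plan is to prove both parts level by level, using $\Omega_{\le\ell}=\sum_{k\le\ell}\Omega_k$, the ladder identities~\cref{eq_wick_ladder}, and the norm bounds of \cref{lemma_wick_trunc}. A single sitewise rewriting of the ladder identity drives both parts. For $i\in L$, set $w_i\coloneq\sum_m t_m u_m(i)v_m\in\mathbb R^R$. Since $a_i=\sum_m u_m(i)a_{u_m}$ on the span of the modes, the ladder identity gives
\[
a_i\Omega_{k}=b^\dagger_{w_i}\Omega_{k-1}.
\]
The vectors $w_i$ satisfy $\norm{w_i}^2=\sum_m t_m^2u_m(i)^2\le t_{\max}^2$ and $\sum_i\norm{w_i}^2=\norm t^2$. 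They also satisfy $\sum_iw_i=\sum_m t_m\ip{u_m}{\one}v_m=0$, because every $u_m\perp\one$. The symmetric statements hold on $R$, with vectors $w'_j\in\mathbb R^L$.

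\emph{Part (i).} The projection onto configurations with some doubly occupied site on $L$ is dominated by $\tfrac12\sum_i a_i^{\dagger2}a_i^2$, since $\binom{n_i}{2}\ge1$ whenever $n_i\ge2$. The same holds on $R$. Applying the rewriting twice gives
\[
a_i^2\Omega_k=b^{\dagger}_{w_i}b^{\dagger}_{w_i}\Omega_{k-2}.
\]
Then $\norm{b_{w}^{\dagger}\psi}^2\le(k+1)\norm w^2\norm\psi^2$ on vectors with $k$ bosons on $R$ bounds this by $\norm{w_i}^4k^2\nu_{k-2}$. Summing over $i$ uses
\[
\sum_i\norm{w_i}^4\le t_{\max}^2\norm t^2.
\]
It then remains to compare $\sum_{k\le\ell}k^2\nu_{k-2}$ with $\ell^2\norm{\Omega_{\le\ell}}^2$, which follows from $k\le\ell$ and the shift of index. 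The two sides and the factor $\tfrac12$ combine to give the constant $\epsilon_{\hc}=t_{\max}^2\norm t^2\ell^2$.

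\emph{Part (ii).} Here I use the spectral gap of the down operator off the harmonic space. By Young's rule, $(\partial^L)^\dagger\partial^L$ acts on the copy of $\operatorname{Sp}_{(n-j,j)}$ at level $k$ by $(k-j)(n-k-j+1)$. For $j<k\le\ell$ this is at least $n-2k+2\ge n-2\ell+2$. Hence, for a hard-core level-$k$ vector $\psi$, its distance to $\cH^L$ is at most $\norm{\partial^L\psi}^2/(n-2\ell+2)$, and likewise on $R$. Projecting successively onto $\cH^L\otimes(\cdot)$ and $(\cdot)\otimes\cH^R$ then costs the sum of the two such terms. The identity $\sum_iw_i=0$ shows that $\sum_ia_i\Omega_k=0$ exactly in Fock space. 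Consequently, $\partial^L(\Omega_k|_{\hc})=\sum_ia_i(\Omega_k|_{\hc})$ equals $-\sum_ia_i$ applied to the non-hard-core part, restricted to those outputs that are hard-core. Only configurations with exactly one doubly occupied site survive this restriction, and $a_i$ acts on them with amplitude at most $\sqrt2$.

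I expect the main obstacle to be the bookkeeping that keeps this from losing more than a factor of order $\ell$. I will write the single-defect part as $\sum_i a_i^{\dagger2}\chi_i/\sqrt2$, where $\chi_i$ has $k-2$ bosons, and bound its norm squared by part (i)'s computation. The resulting $\norm{\partial^L(\cdot)}^2$ should then be at most about $4\ell\cdot\epsilon_{\hc}\nu$ per side, after a Cauchy--Schwarz over the $\le k$ sites reachable. Dividing by $n-2\ell+2$ gives $\epsilon_{\mathrm h}=4\ell\epsilon_{\hc}/(n-2\ell+2)$, which is exactly the form of the stated constant.
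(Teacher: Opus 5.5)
Your proposal is correct and follows the paper's proof essentially step for step. The same row vectors $w_i$ (the paper's $t(i)$) give $a_i^2\Omega_k=(b^\dagger_{w_i})^2\Omega_{k-2}$ and the factorial-moment domination for part~(i). Part~(ii) uses the same gap $n-2k+2$ of $(\partial^L)^\dagger\partial^L$ off $\cH_k^L$, the identity $a_{\one}\Omega_k=0$, and Cauchy--Schwarz over the at most $k$ single-defect predecessors.

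The only slip is in your closing arithmetic for part~(ii). Done carefully, you get $\|a_{\one}(\Omega_k|_{\hc})\|^2\le k\sum_i\|a_i^2\Omega_k\|^2\le\ell\,\epsilon_{\hc}\,\nu_{k-2}$ per side, so the two sides together give $2\ell\epsilon_{\hc}/(n-2\ell+2)\le\epsilon_{\mathrm h}$. By contrast, ``$4\ell\epsilon_{\hc}$ per side'' summed over both sides would overshoot the stated constant by a factor of~$2$.
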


\begin{proof}
Both parts
reduce to the double annihilations $a_i^{2}\Omega_k$ and $b_j^{2}\Omega_k$, which we bound
first. Let $t(i)\coloneq\sum_m t_m u_m(i)v_m\in\mathbb R^{R}$ be the $i$-th row of the pair matrix
$\sum_m t_m u_m v_m^{\top}$, so that $\norm{t(i)}^{2}\le t_{\max}^{2}$ and
$\sum_i\norm{t(i)}^{2}=\norm t^{2}$. As in~\cref{eq_wick_ladder},
$a_i\Omega_{k+1}=b^{\dagger}_{t(i)}\Omega_k$, hence
$a_i^{2}\Omega_k=(b^{\dagger}_{t(i)})^{2}\Omega_{k-2}$, and on the sector with $k-2$ bosons on
$R$ the operator $(b^{\dagger}_{t(i)})^{2}$ has norm at most $\norm{t(i)}^{2}\sqrt{k(k-1)}$.
Therefore
\begin{equation}\label{eq_wick_double}
  \sum_{i\in L}\norm{a_i^{2}\Omega_k}^{2}
  \le t_{\max}^{2}\norm t^{2}\,k(k-1)\,\nu_{k-2}
\end{equation}
and likewise for $\sum_{j\in R}\norm{b_j^{2}\Omega_k}^{2}$.

\emph{Part (i).} The number of multiply occupied sites is at most half the sum of the second
factorial occupation numbers. By orthogonality of the levels,~\cref{eq_wick_double}, and
$k(k-1)\le\ell^{2}$,
\[
  \norm{\Omega_{\le\ell}-\Omega_{\le\ell}|_{\hc}}^{2}
  \le\frac12\sum_{k=0}^{\ell}\Bigl(
     \sum_{i\in L}\norm{a_i^{2}\Omega_k}^{2}
     +\sum_{j\in R}\norm{b_j^{2}\Omega_k}^{2}\Bigr)
  \le t_{\max}^{2}\norm t^{2}\ell^{2}\sum_{2\le k\le\ell}\nu_{k-2}
  \le\epsilon_{\hc}\norm{\Omega_{\le\ell}}^{2}
\]

\emph{Part (ii).} Since $\Upsilon=\sum_{k\le\ell}P_k(\Omega_k|_{\hc})$,
$\norm{\Omega_{\le\ell}|_{\hc}-\Upsilon}^{2}=\sum_{k\le\ell}\norm{\Omega_k|_{\hc}-P_k(\Omega_k|_{\hc})}^{2}$. We
bound each term by a spectral gap and then bound the gap's input.

\emph{The spectral gap.} Recall $a_{\one}=\sum_{i\in L}a_i$ and $b_{\one}=\sum_{j\in R}b_j$. By
\cref{lemma_embed}, on hard-core vectors $a_{\one}$ acts as the down operator $\partial^L$,
the pullback of $S_L^{+}$, whose kernel on level $k$ is $\cH_k^L$. On the orthogonal
complement of $\cH_k^L$ in level $k$, every vector lies in spins $s\ge s_k+1$ at weight
$s_k$, where $(\partial^L)^{\dagger}\partial^L$, the pullback of
$S_L^-S_L^+=\bS_L^{2}-(S_L^z)^{2}-S_L^z$, is at least
$(s_k+1)(s_k+2)-s_k(s_k+1)=n-2k+2$. The same holds on $R$. On level $k$
we have $P_k=\Pi_{\cH^L}\otimes\Pi_{\cH^R}$ and
$I-P_k\preccurlyeq(I-\Pi_{\cH^L})\otimes I+I\otimes(I-\Pi_{\cH^R})$, so for hard-core $\psi$ of level $k\le n/2$
\begin{equation}\label{eq_wick_gap}
  (n-2k+2)\norm{\psi-P_k\psi}^{2}\le\norm{a_{\one}\psi}^{2}+\norm{b_{\one}\psi}^{2}
\end{equation}

\emph{The input.} Because $u_m,v_m\perp\one$, both $a_{\one}$ and $b_{\one}$ commute with $\cP$ and
annihilate the vacuum, hence annihilate $\Omega_k$. Since $a_{\one}$ preserves hard-core
vectors, $a_{\one}(\Omega_k|_{\hc})=-\bigl(a_{\one}(\Omega_k-\Omega_k|_{\hc})\bigr)|_{\hc}$. A configuration that is not hard-core
becomes hard-core after one annihilation on $L$ only if it has exactly one doubly occupied
site, on $L$, and no other multiple occupancy; each hard-core output has at most $k$ such
predecessors. Cauchy-Schwarz over them and~\cref{eq_wick_double} give
\[
  \norm{a_{\one}(\Omega_k|_{\hc})}^{2}
  \le k\sum_{i\in L}\norm{a_i^{2}\Omega_k}^{2}
  \le t_{\max}^{2}\norm t^{2}\,k^{3}\nu_{k-2}
\]
and the same bound for $b_{\one}(\Omega_k|_{\hc})$.

\emph{Summation.} By~\cref{eq_wick_gap} with $n-2k+2\ge n-2\ell+2$ and $k^{3}\le\ell^{3}$,
\[
  \sum_{k\le\ell}\norm{\Omega_k|_{\hc}-P_k(\Omega_k|_{\hc})}^{2}
  \le\frac{2t_{\max}^{2}\norm t^{2}\ell^{3}}{n-2\ell+2}\sum_{2\le k\le\ell}\nu_{k-2}
  \le\epsilon_{\mathrm h}\norm{\Omega_{\le\ell}}^{2}
  \qedhere
\]
\end{proof}

\begin{proof}[Proof of \cref{prop_wick}]
Let $\zeta_{\hc}\coloneq\Omega_{\le\ell}-\Omega_{\le\ell}|_{\hc}$ and $\zeta_{\mathrm h}\coloneq\Omega_{\le\ell}|_{\hc}-\Upsilon$, so
that $\Upsilon=\Omega_{\le\ell}-\zeta_{\hc}-\zeta_{\mathrm h}$.

\emph{Step 1: the free model.} By \cref{lemma_CB_item1} of \cref{lemma_CB},
$\ip{\Upsilon}{\Hcov\Upsilon}=\ip{\Upsilon}{\widetilde H\Upsilon}$. Since $\Upsilon$ is
supported on levels $\le\ell<4\ell$, $16\ell\le n$ by~(a), and $\sigma_1<\hatp n$,
\cref{lemma_CB_item2} of \cref{lemma_CB} with cutoff $\ell$ gives
\begin{equation}\label{eq_wick_free}
  \ip{\Upsilon}{(\Hcov-\Estar)\Upsilon}\le\ip{\Upsilon}{(\HB-\Estar)\Upsilon}+\etaB(\ell)\norm\Upsilon^{2},
  \qquad
  \etaB(\ell)=16\ell^{2}+5\sigma_1\sqrt{\ell/n}\le64\sqrt n+2n^{3/8}\le66\sqrt n
\end{equation}

\emph{Step 2: the norm of $\Upsilon$.} The defects are the losses of two nested orthogonal
projections, so by \cref{lemma_wick_defects}
\[
  \norm\Upsilon^{2}=\norm{\Omega_{\le\ell}}^{2}-\norm{\zeta_{\hc}}^{2}-\norm{\zeta_{\mathrm h}}^{2}
  \ge(1-\epsilon_{\hc}-\epsilon_{\mathrm h})\norm{\Omega_{\le\ell}}^{2}\ge\tfrac12\norm{\Omega_{\le\ell}}^{2}
\]
Since $\Upsilon$ has vacuum component $1$, also $\norm\Upsilon^{2}\ge1$.

\emph{Step 3: the energy of $\Upsilon$.} We expand $\ip{\Upsilon}{(\HB-\Estar)\Upsilon}$ around $\Omega_{\le\ell}$,
whose energy is controlled by \cref{lemma_wick_trunc}: by its parts~(ii) to~(iv),
$\ip{\Omega_{\le\ell}}{(\HB-\Estar)\Omega_{\le\ell}}\le1$ and $\norm{(\HB-\Estar)\Omega_{\le\ell}}^{2}\le1$. The defects are bounded through the crude bounds
\begin{equation}\label{eq_wick_crude}
  |\ip{\psi}{(\HB-\Estar)\psi}|\le3n\ell\norm\psi^{2},
  \qquad
  \norm{(\HB-\Estar)\psi}^{2}\le9n^{2}\ell^{2}\norm\psi^{2}
\end{equation}
valid for $\psi\in\cF_0$ supported on levels at most $\ell$, by~\cref{eq_wick_ops},
$\Emf-\Estar\le n/2$, $\omega\le n/2$ and $\norm g\le n/4$. Since $\Omega_{\le\ell}|_{\hc}=\Omega_{\le\ell}-\zeta_{\hc}$ and
$\Upsilon=\Omega_{\le\ell}|_{\hc}-\zeta_{\mathrm h}$,
\begin{align*}
  \ip{\Upsilon}{(\HB-\Estar)\Upsilon}
  &=\ip{\Omega_{\le\ell}}{(\HB-\Estar)\Omega_{\le\ell}}-2\operatorname{Re}\ip{\zeta_{\hc}}{(\HB-\Estar)\Omega_{\le\ell}}+\ip{\zeta_{\hc}}{(\HB-\Estar)\zeta_{\hc}}
    -2\operatorname{Re}\ip{\zeta_{\mathrm h}}{(\HB-\Estar)(\Omega_{\le\ell}|_{\hc})}+\ip{\zeta_{\mathrm h}}{(\HB-\Estar)\zeta_{\mathrm h}}
\end{align*}
The five terms are bounded as follows, using~\cref{eq_wick_crude} and, in the fourth line,
$\norm{(\HB-\Estar)(\Omega_{\le\ell}|_{\hc})}^{2}\le2\norm{(\HB-\Estar)\Omega_{\le\ell}}^{2}+2\norm{(\HB-\Estar)\zeta_{\hc}}^{2}$:
\begin{align*}
  \ip{\Omega_{\le\ell}}{(\HB-\Estar)\Omega_{\le\ell}}&\le1\\
  2|\ip{\zeta_{\hc}}{(\HB-\Estar)\Omega_{\le\ell}}|&\le\norm{\zeta_{\hc}}^{2}+\norm{(\HB-\Estar)\Omega_{\le\ell}}^{2}\le\norm{\zeta_{\hc}}^{2}+1\\
  \ip{\zeta_{\hc}}{(\HB-\Estar)\zeta_{\hc}}&\le3n\ell\norm{\zeta_{\hc}}^{2}\\
  2|\ip{\zeta_{\mathrm h}}{(\HB-\Estar)(\Omega_{\le\ell}|_{\hc})}|&\le n^{2}\norm{\zeta_{\mathrm h}}^{2}+n^{-2}\norm{(\HB-\Estar)(\Omega_{\le\ell}|_{\hc})}^{2}
    \le n^{2}\norm{\zeta_{\mathrm h}}^{2}+2n^{-2}+18\ell^{2}\norm{\zeta_{\hc}}^{2}\\
  \ip{\zeta_{\mathrm h}}{(\HB-\Estar)\zeta_{\mathrm h}}&\le3n\ell\norm{\zeta_{\mathrm h}}^{2}
\end{align*}
Adding,
\begin{equation}\label{eq_wick_energy}
  \ip{\Upsilon}{(\HB-\Estar)\Upsilon}
  \le2+2n^{-2}+(1+3n\ell+18\ell^{2})\norm{\zeta_{\hc}}^{2}+(n^{2}+3n\ell)\norm{\zeta_{\mathrm h}}^{2}
\end{equation}

\emph{Step 4: normalization, and part~(i).} By \cref{lemma_wick_defects} and Step~2,
$\norm{\zeta_{\hc}}^{2}\le\epsilon_{\hc}\norm{\Omega_{\le\ell}}^{2}\le2\epsilon_{\hc}\norm\Upsilon^{2}$, and likewise for
$\zeta_{\mathrm h}$. By~(a), $1+3n\ell+18\ell^{2}\le4n\ell\le8n^{5/4}$ and $n^{2}+3n\ell\le2n^{2}$.
Dividing~\cref{eq_wick_free} and~\cref{eq_wick_energy} by $\norm\Upsilon^{2}\ge1$ and
inserting the bounds on $\epsilon_{\hc}$ and $\epsilon_{\mathrm h}$,
\begin{align*}
  \ip{\Phi}{\Hcov\Phi}-\Estar
  &\le66\sqrt n+2+2n^{-2}+2(1+3n\ell+18\ell^{2})\epsilon_{\hc}+2(n^{2}+3n\ell)\epsilon_{\mathrm h}\\
  &\le66\sqrt n+3+16n^{5/4}\epsilon_{\hc}+4n^{2}\epsilon_{\mathrm h}
   \le70\sqrt n+\frac{64\sigma_1^{2}}{\hatp^{3}n^{1/4}}+\frac{256\sigma_1^{2}}{\hatp^{3}n^{1/4}}
\end{align*}
which is part~(i), since $64+256=320$.

\emph{Part (ii).} By \cref{lemma_wick_trunc_item1} of \cref{lemma_wick_trunc},
$\norm\Upsilon^{2}\le\norm{\Omega_{\le\ell}}^{2}=\sum_{k\le\ell}\nu_k\le\sum_kB^{k}/k!\le e^{B}$. Since
$\Upsilon(\emptyset,\emptyset)=1$, this gives
$\Phi(\emptyset,\emptyset)\ge e^{-B/2}=e^{-(2/\hatp+1/2)}$.

\emph{Part (iii).} By level orthogonality, \cref{lemma_wick_trunc_item1} of \cref{lemma_wick_trunc}, (b) and Step~2,
\[
  \ip{\Upsilon}{\Pi^L\Upsilon}
  =\sum_kk\,\norm{P_k(\Omega_k|_{\hc})}^{2}
  \le\sum_kk\,\nu_k
  \le\frac1{\hatp}\Bigl(1+\frac{\hatp}4\Bigr)\norm{\Omega_{\le\ell}}^{2}
  \le\frac2{\hatp}\Bigl(1+\frac{\hatp}4\Bigr)\norm\Upsilon^{2}
\]
so $\ip{\Phi}{\Pi^L\Phi}\le2/\hatp+1/2\le3/\hatp$.
\end{proof}

\end{document}